\documentclass[12pt]{article}
\usepackage{latexsym,amssymb,amsmath} 
\usepackage{cite} 
\usepackage{path}
\usepackage{url}
\usepackage{verbatim}
\usepackage[pdftex]{graphicx}
\usepackage{natbib}
\usepackage[symbol]{footmisc}
\usepackage{float}
\usepackage{bm}

\usepackage{multicol}
\usepackage{hyperref}
\usepackage{amsthm}
\usepackage{subfigure}
\usepackage{booktabs}
\usepackage{paralist}
\usepackage{dsfont}
\usepackage{tikz}
\usepackage{caption}
\usepackage[font=scriptsize,labelfont=bf]{caption}
\setlength{\oddsidemargin}{0.0in}
\setlength{\textwidth}{6.5in}
\setlength{\topmargin}{0.0in}
\setlength{\headheight}{12pt}
\setlength{\headsep}{13pt}
\setlength{\textheight}{625pt}
\setlength{\footskip}{24pt}

\makeatletter
\setlength{\arraycolsep}{2\p@} 
\makeatother

\numberwithin{equation}{section}
\numberwithin{table}{section}
\numberwithin{figure}{section}

\newcommand{\half}{{\textstyle \frac{1}{2}}}
\newcommand{\eps}{\varepsilon}

\newcommand{\bmu}{\boldsymbol{\mu}}
\newcommand{\btheta}{\boldsymbol{\theta}}
\newcommand{\bDelta}{\boldsymbol{\Delta}}
\newcommand{\bbeta}{\boldsymbol{\beta}}
\newcommand{\boldeta}{\boldsymbol{\eta}}
\newcommand{\bY}{\boldsymbol{Y}}

\newcommand{\bX}{\boldsymbol{X}}

\newcommand{\bOmega}{\boldsymbol{\Omega}}
\newcommand{\bomega}{\boldsymbol{\omega}}
\newcommand{\bZ}{\textbf{Z}}
\newcommand{\bu}{\boldsymbol{u}}
\newcommand{\beps}{\boldsymbol{\epsilon}}

\newcommand{\hmu}{\hat{\bmu}}
\newcommand{\htheta}{\hat{\btheta}}
\newcommand{\hbeta}{\hat{\bbeta}}
\newcommand{\hDelta}{\hat{\bDelta}}
\newcommand{\heps}{\hat{\beps}}
\newcommand{\inSigma}{\Sigma^{-1}}

\newcommand{\hSigma}{\hat{\Sigma}}
\newcommand{\tSigma}{\tilde{\Sigma}}
\newcommand{\dSigma}{\dot{\Sigma}}
\newcommand{\indSigma}{\dSigma^{-1}}
\newcommand{\inhSigma}{\hat{\Sigma}^{-1}}
\newcommand{\inTSigma}{\tilde{\Sigma}^{-1}}

\newcommand{\IR}{\mathbb{R}}

\newcommand{\E}{\mathds{E}}
\newcommand{\norm}[2]{\left\|{#1}\right\|_{{}_{#2}}}
\newcommand{\abs}[1]{\left|{#1}\right|}

\newcommand{\der}[2]{\frac{\partial {#1}}{\partial {#2}}}

\newcommand{\junk}[1]{{}}

\newtheorem{thm}{Theorem}
\newtheorem{asu}{A}
\newcounter{subassumption}[asu]
\newtheorem{lemma}{Lemma}
\newtheorem{corollary}{Corollary}

\renewcommand{\thesubassumption}{(\textit{\roman{subassumption}})}
\makeatletter
\renewcommand{\p@subassumption}{\theasu}
\makeatother
\newcommand{\subasu}{
  \refstepcounter{subassumption}%
  \thesubassumption~\ignorespaces}
\newlength{\fwtwo} \setlength{\fwtwo}{0.45\textwidth}

\begin{document}
\DeclareGraphicsExtensions{.jpg}

\begin{center}
\textbf{\Large High Dimensional Classification for Spatially Dependent Data with Application to Neuroimaging} \\[6pt]
  Yingjie Li$^\ast$, Liangliang Zhang$^\dag$ and Tapabrata Maiti$^\ast$ \\[6pt]
  $^\ast$Department of Statistics and Probability,
  Michigan State University  \\[6pt]
   $^\dag$Department of Biostatistics,
   The University of Texas MD Anderson Cancer Center\\[6pt]
    liangliangzhang.stat@gmail.com
\end{center}

\begin{abstract}
{
Discriminating patients with Alzheimer's disease (AD) from healthy subjects is a crucial task in the research of Alzheimer's disease.
The task can be potentially achieved by linear discriminant analysis (LDA), which is one of the most classical and popular classification techniques.
However, the classification problem becomes challenging for LDA because of the high-dimensionally and the spatial dependency of the brain imaging data. 
To address the challenges, researchers have proposed various ways to generalize LDA into high-dimensional context in recent years. 
However, these existing methods did not reach any consensus on how to incorporate spatially dependent structure. 
In light of the current needs and limitations, we propose a new classification method, named as Penalized Maximum Likelihood Estimation LDA (PMLE-LDA).
The proposed method uses $Mat\acute{e}rn$ covariance function to describe the spatial correlation of brain regions.
Additionally, PMLE is designed to model the sparsity of high-dimensional features. 
The spatial location information is used to address the singularity of the covariance.
Tapering technique is introduced to reduce computational burden.
We show in theory that the proposed method can not only provide consistent results of parameter estimation and feature selection, but also generate an asymptotically optimal classifier driven by high dimensional data with specific spatially dependent structure.
Finally, the method is validated through simulations and an application into ADNI data for classifying Alzheimer's patients.}
\end{abstract}
{\it Keywords:} Classification; High dimensional classification; Linear discriminant analysis; misclassification; Neuroimaging; Spatially dependent data; Tapered covariance matrix.

\section{Introduction}
\label{sec_introSpatialLDA}
{
This paper is motivated by discriminating patients with Alzheimer's disease (AD) from healthy subjects using structural Magnetic Resonance Imaging (sMRI) data. We also would like to identify the key sMRI features that differentiate the two groups. We translate the real needs as a technical problem of using classification method and selecting features.    
However, the problem becomes challenging because of the complexity of the data. First, brain imaging data is spatially dependent, which means that the dependence between various voxels (pixels) can be depicted by their proximity. Second, brain imaging data is high-dimensional, because a single subject can produce hundreds of 3D MRI scans and a single 3D MRI scan can generate millions of voxels. 

Many existing methods could handle the challenges of brain imaging data, but they did not solve a classification problem. In brain research, the assignment of functional regions has been mainly based on certain assumptions and conceptualizations. In particular, conceptualization of spatial partition and correspondence is widely used in brain imaging analysis (see \citep{worsley2003developments,smith2007spatial,lindquist2008statistical,musgrove2016fast,Bowman2014}). These methods exploited spatial dependence of brain imaging data, but they were not designed to classify and identify spatial features. In addition to brain imaging, spatial analysis has been broadly applied across agriculture, geology, soil science, oceanography, forestry, meteorology and climatology. Traditionally, these applications are not necessarily high-dimensional. But an increasing trend of big data shows that high-dimensional data with spatially dependent structure attracts more and more interest. For recent developments of regularized models dealing with spatially dependent data, please refer to \citep{hoeting2006model,huang2007optimal,zhu2010selection,chu2011penalized,reyes2012selection,fu2013estimation,nandy2017additive,feng2016variable}. One can simply adapt these available procedures, but again they did not consider how to classify and identify spatial features. 

Many high-dimensional approaches are not preferable due to the concern of high variance and overfitting issues \citep{friedman2001elements}. So it is necessary to incorporate regularization techniques into the classification method for high dimensional data. Fisher's linear discriminant analysis (LDA) is one of the most classical and popular classification techniques. The simplicity and flexibility of LDA has allowed itself to be extended to many complex and high dimensional applications. Researchers have proposed many ways to generalize LDA into high-dimensional context. However, these existing methods did not reach any consensus on how to incorporate spatial dependence structure. In light of the current needs and limitations, we propose a new LDA procedure accommodating both the complex dependent structure and high-dimensionality. 

{Before introducing the proposed procedure, we first give a review of the existing LDA methods for high dimensional data. Let us consider the $p$-dimensional discriminant problem between two classes $\mathcal{C}_1$ and $\mathcal{C}_2$. According to some classification rule $T(\bOmega): R^p \to \{1,2\}$, a new observation $\bOmega$ can be classified into class $\mathcal{C}_1= \{ \bOmega : T(\bOmega)=1\}$ or $\mathcal{C}_2 = \{ \bOmega : T(\bOmega)=2\}$. Given that $\bOmega\in \mathcal{C}_1$, the misclassification rate is the conditional probability of that $\bOmega$ is classified into class $\mathcal{C}_2$, i.e. $P(T(\bOmega)=2|\bOmega\in \mathcal{C}_1)$. Similarly, $P(T(\bOmega)=1|\bOmega\in \mathcal{C}_2)$ denotes the misclassification rate when $\bOmega\in\mathcal{C}_2$.}

{The optimal classifier obtained by minimizing the posterior probability is known as the Bayes rule, which classifies the new observation into the most probable class (Chapter 2 in \cite{friedman2001elements}). Suppose that $f_k(\bomega)$ denotes the density of the misclassification rate that an observation $\bomega$ is classified into $\mathcal{C}_{k}$, ($k=1,2$). Let $\pi_k$ be the prior probability of class $k$ with $\pi_1+\pi_2=1$. According to Bayes theorem, the posterior probability of an observation $\bOmega=\bomega$ in each class is 
$
P(\bOmega\in \mathcal{C}_k|\bOmega=\bomega)=\frac{f_k(\bomega)\pi_k}{f_1(\bomega)\pi_1+f_2(\bomega)(1-\pi_1)}.
$
}

A typical way of modeling the class densities is that they are assumed to be multivariate Gaussian $N(\bmu_1, \Sigma)$ and $N(\bmu_2,\Sigma)$ respectively, where $\bmu_k$ ($k=1,2$) are the class mean vectors and $\Sigma$ is the common positive definite covariance matrix. Then the density of an observation $\bOmega=\bomega$ from $\mathcal{C}_k$ can be written as
$
f_k(\bomega)=\frac{1}{(2\pi)^{p/2}}|\Sigma|^{1/2}e^{-{1\over 2} (\bomega-\bmu_k)^T\inSigma(\bomega-\bmu_k))}.
$
}
Under this assumption, the Bayes rule assigns $\bOmega=\bomega$ into $\mathcal{C}_1$ if $\pi_1f_1(\bomega)\ge \pi_2 f_2(\bomega)$. Equivalently, $\bomega$ is assigned to $\mathcal{C}_1$ if
$
\text{log}\frac{\pi_1}{\pi_2}+(\bomega-\bmu)^T\inSigma (\bmu_1-\bmu_2)\ge 0,
$
where $\bmu=(\bmu_1+\bmu_2)/2$. Notice that this classifier is linear in $\bomega$.

{In practice, the parameters of the Gaussian distribution should be estimated using the training data.} Suppose that $\bY_{k1},..., \bY_{kn_k}$ are training data from class $\mathcal{C}_k$, where $k\in\{1,2\}$ and $\bY_{kj}\in \mathbb{R}^p$ are independent and identically distributed as $N_p(\bmu_k,\Sigma(\btheta_0))$, where $\bmu_k=(\mu_{k1},...,\mu_{kp})^{T}$, $n_k$ is the sample size for class $\mathcal{C}_k$. $\Sigma(\btheta)$ is the covariance matrix with parameter $\btheta=\btheta_0$. Assume that $\hat{\bmu}_1$, $\hat{\bmu}_2$, $\hSigma$ and $\hDelta$ (generated from $\bY_{k1},..., \bY_{kn_k}$) are estimates of $\bmu_1$, $\bmu_2$, $\Sigma$ and $\bDelta$, where $\bDelta=(\Delta_1,...,\Delta_p)^T=\bmu_1-\bmu_2$ is the difference of the two classes in mean.

Let $n=n_1+n_2$ be the total sample size. Assume that  $\frac{n_1}{n}\to \pi$, $0<\pi<1$ as $n\to \infty.$ $p$ depends on $n$. Assume that the two classes have equal prior probabilities, i.e. both the probabilities that a new observation comes from $\mathcal{C}_1$ and $\mathcal{C}_2$ are $\half$. Then we obtain the classification rule $\hat{\delta}$:
\begin{align}\label{LDA}
\hat{\delta}(\bOmega)=(\bOmega-\frac{\hat{\bmu}_1+\hat{\bmu}_2}{2})^T\hat{\Sigma}^{-1}\hat{\bDelta}.
\end{align}
A new observation $\bomega$ is classified into class $\mathcal{C}_1$ if $\hat{\delta}(\bomega)>0$ and $\mathcal{C}_2$ otherwise. If the new observation $\bOmega$ comes from $\mathcal{C}_1$, then the  conditional misclassification rate of $\hat{\delta}$ is
\begin{align}\label{W1}
W_1(\hat{\delta})=P(\hat{\delta}(\bOmega)\le 0|\bOmega\in \mathcal{C}_1, \bY_{ki},i=1,2,...,n_k,k=1,2)=1-\Phi(\Psi_1),
\end{align}
where
\begin{align}\label{Psi1}
\Psi_1=\frac{(\bmu_1-\hmu)^T\hSigma(\hmu_1-\hmu_2)}{\sqrt{(\hmu_1-\hmu_2)^T\inhSigma\Sigma\inhSigma(\hmu_1-\hmu_2)}}.
\end{align}
Similarly, we can define the error rate for observations from $\mathcal{C}_2$. If a new observation $\bOmega$ comes from class $\mathcal{C}_2$, the  conditional misclassification rate of $\hat{\delta}$ is:
\begin{align}\label{W2}
W_2(\hat{\delta})=\mathbb{P}(\hat{\delta}(\bOmega)>0|\bOmega\in \mathcal{C}_2, \bY_{ki},k=1,2;i=1,...,n_k)=\Phi(\Psi_2),
\end{align}
where
\begin{align}\label{Psi2}
\Psi_2=\frac{(\bmu_2-\hmu)^T\hSigma(\hmu_1-\hmu_2)}{\sqrt{(\hmu_1-\hmu_2)^T\inhSigma\Sigma\inhSigma(\hmu_1-\hmu_2)}}.
\end{align}

As we assume the equal prior probability for the two classes, the overall misclassification rate is defined as
\begin{align}\label{W}
W(\hat{\delta})=\half(W_1(\hat{\delta})+W_2(\hat{\delta})).
\end{align}

If $\bmu_1$, $\bmu_2$ and $\Sigma$ are known, the optimal classification rule is Bayes rule, which classifies a new observation $\bOmega=\bomega$ into class $\mathcal{C}_1$ if
\begin{align}\label{BayesRule}
\delta(\bomega)=(\bomega-\frac{\bmu_1+\bmu_2}{2})^{T}\Sigma^{-1}\bDelta>0.
\end{align}

Bayes rule has the smallest misclassification rate. If there's a new observation $\bOmega$ from class $\mathcal{C}_1$, since $\bOmega$ has normal distribution $N(\bmu_1,\Sigma(\btheta))$, we can calculate that the conditional misclassification rate of Bayes rule $\delta$ is
\begin{align}
W_1(\delta)=W_2(\delta)=1-\Phi(\frac{\sqrt{C_p}}{2}),
\end{align}
where $C_p=\bDelta^T\inSigma(\btheta)\bDelta$ and $\Phi(\cdot)$ is the standard Gaussian distribution function.

We obtain the overall misclassification rate of Bayes rule as
$W(\delta)=1-\Phi(\frac{\sqrt{C_p}}{2})$.
{
Since Bayes rule has the smallest misclassification rate, we write $W_{OPT}=1-\Phi(\frac{\sqrt{C_p}}{2})$ as the optimal misclassification rate. Under certain conditions, we could have $C_p\to C_0$, then $W_{OPT}\to 1-\Phi(\frac{\sqrt{C_0}}{2})$, where $C_0$ is a constant.}

Using training data, we can estimate the parameters with the sample mean and covariance
\begin{align}
&\hat{\bmu}_k=\sum_{i=1}^{n_k} \bY_{ki}/n_k=\bar{\bY}_{k\cdot}, \\
&\hat{\Sigma}=\sum_{k}\sum_{i}(\bY_{ki}-\hat{\bmu}_k)^T(\bY_{ki}-\hat{\bmu}_k)/(n_1+n_2-2), k=1,2.
\end{align}
Then LDA classifies $\bOmega$ into class $\mathcal{C}_1$ if
\begin{align}
\hat{\delta}_{LDA}(\bOmega)=(\bOmega-\frac{1}{2}(\bar{\bY}_{1\cdot}+\bar{\bY}_{2\cdot}))^{T}\hSigma^{-1}(\bar{\bY}_{1\cdot}-\bar{\bY}_{2\cdot})>0
\end{align}

LDA is an asymptotically optimal classifier under traditional large sample scenario, that is, the dimension of variables ($p$) is fixed and the sample size ($n$) tends to infinity.  However, this is not true in the high dimensional context. \cite{bickel2004some} demonstrated that LDA asymptotically did not perform better than random guessing if $p/n\to \infty$.

{The asymptotic theory of LDA does not hold under high dimensional setting because of two reasons.} First, the sample covariance matrix $\hat{\Sigma}$ is singular. It is difficult to estimate the precision matrix $\Omega=\inSigma$. To resolve this, the independence rule (IR) ignores the correlations among features and use diagonal of $\hSigma$ to replace $\hSigma$. \cite{bickel2004some} showed in theory that IR leads to a better classification result than the naive LDA, where the Moore-Penrose inverse is used to replace $\inhSigma$. Another similar way to resolve this issue is the nearest shrunken centroid classifier \citep{tibshirani2002diagnosis}. \cite{fan2008high} proposed the feature annealed independence rule (FAIR) that performs feature selection by t-test in addition to IR. The above methods made LDA applicable for high dimensional classification. However, they ignored the covariance structure of the features and the classifiers from those methods were not asymptotically optimal.
{
Some methods have been proposed for covariance matrix estimation or precision matrix estimation in high dimension scenario \citep{bickel2008covariance, bickel2008regularized, rothman2008sparse, cai2016minimax, cai2016estimating}.
The covariance matrix estimated by these methods can be directly used in LDA and address the singularity of $\hSigma$. However, an accurate estimate of $\Sigma$ does not necessarily lead to better classification. \cite{fan2008high} and \cite{shao2011sparse} showed that even though the true covariance matrix is known, the classification could be no better than random guess because of the noise accumulated from estimating the means. This arouses the second challenge for high dimensional LDA. That is, the noise introduced by the estimation of many non-informative features would lead to poor classification performance. Therefore, the regulation of features is needed.}

{To address the second challenge stated above, \cite{witten2011penalized} proposed penalized LDA by applying penalties on the feature vectors. \cite{cai2011direct},\cite{mai2012direct} and \cite{fan2012road} assumed sparsity and introduced penalization on the discriminant direction $\bm \zeta=\Sigma^{-1}(\bmu_2-\bmu_1)$, which was also adopted by \cite{cai2018high}. These methods regularized the estimated discriminant direction $\bm \zeta$ directly, avoiding of estimating $\Sigma^{-1}$. The advantages are obvious. The penalization reduced the noise accumulated in estimation on high dimensional features. The obtained classifiers incorporated the covariance structure among features. However, the disadvantage is that the panelized results were not straightforward to interpret. In particular, the panelized directions did not convey any information on which features should be selected. Because we know feature selection is more relevant to the research question. In this paper, we adopt another types of work that assume sparsity on the feature difference $\bDelta=\bmu_2-\bmu_1$ (the difference of means between the two classes). \cite{shao2011sparse} assumed sparsity and put hard threshold on both the feature difference $\bDelta$ and the covariance matrix $\Sigma$. This method provided an asymptotically optimal classifier. The double threshold method in \cite{shao2011sparse} was then extended to quadratic discriminant analysis in \cite{li2015sparse}. \cite{xu2014covariance} proposed a covariance-enhanced method to achieve feature selection for linear discriminant analysis. However, this method did not work directly on selecting informative features. Most recently, \cite{cannings2017random} used random projection to perform dimension reduction to address this issue. 
However, this method did not work well when the data is sparse in ultrahigh dimensional settings, say $p$ is in the thousands while sample size $n$ ranges from $50$ to $1000$.}

{
In this article, we develop a new classification procedure, named as Penalized Maximum Likelihood Estimation LDA (PMLE-LDA), for high dimensional data with spatially dependent features. We assume that the features follow multivariate normal distribution. We structure the covariance matrix by a spatial covariance function (e.g.  $Mat\acute{e}rn$ covariance function). By introducing the spatial structure, the covariance matrix can be estimated by maximum likelihood estimation no matter how large the number of features is, compared to the sample size. This estimated covariance matrix can address the first challenge of covariance singularity. To address the second challenge of mean misidentification, we assume that the feature difference between two classes is sparse, which indicates that only a fraction of the $p$ features contribute in differentiating the two classes. Given a training data set, we use Penalized Maximum Likelihood Estimation (PMLE) method to perform parameter estimation. The resulting estimates are plugged into LDA model to construct a new classifier. We show in theory that the proposed procedure can not only provide consistent results of parameter estimation and feature selection, but also generate an asymptotically optimal classifier driven by high dimensional data with specific spatially dependent structure. To the best of our knowledge, the proposed method is the first to use spatial correlations to adjust the classification rules. Although we develop PMLE-LDA under the linear framework for two classes classification, it could be potentially extended to other classification methods such as quadratic discriminant analysis (QDA) and multi-classes classification problems.}

The rest of the paper is organized as follows. In section \ref{sec_MLELDA},
we show that MLE-LDA is asymptotically optimal if $p/n\to 0$ under some regularity conditions. We also show that in high dimensional setting ($p/n \to C>0$), the MLE-LDA performs poorly (no better than random guess) even if the true covariance is known unless the signals are very strong. This indicates the necessity of penalization for the MLE.
In section \ref{sec-PLDA}, we propose to estimate the parameters by penalized MLE (PMLE) by applying a penalty on $\bDelta=\bmu_1-\bmu_2$, which measures the difference of the two classes in mean. We assume the sparsity of $\bDelta$. Then we derive and prove the parameter estimation consistency and feature selection consistency of PMLE. In the end, PMLE-LDA is constructed. We show that it is asymptotically optimal even if $p/n\to C>0$. Simulation study and real data analysis are conducted in section \ref{sec-simulation} and \ref{sec-data}.

\section{Classification using maximum likelihood estimate (MLE-LDA)}
\label{sec_MLELDA}
\subsection{Spatial models}
\label{sec_introSpatial}
In this section, we introduce necessary terminologies and assumptions in spatial statistics. For a spatial domain of interest $D$ in $\mathbb{R}^d$, we consider two classes of spatial processes $\{y_k(s): s\in D,k=1,2\}$, ($k=1,2$), such that
\begin{align}
y_k(s)=\mu_k(s)+\epsilon(s),
\end{align}
where $\mu_k(s)$ is the mean effect function and $\epsilon(s)$ is the corresponding random noise. Assume that the error process $\{\epsilon(s):s\in D\}$ is a Gaussian process with mean zero and a covariance function
\begin{align}\label{covfunction}
\gamma(s,s\ensuremath{'};\btheta)=cov(\epsilon(s),\epsilon(s\ensuremath{'})),
\end{align}
where $s,s\ensuremath{'}\in D$ and ${\btheta}$ is a $q \times 1$ vector of covariance function parameters. We assume that the spatial domain is expanding as the number of samples on the domain is increasing.

\begin{asu}\label{asu_domain}
Assume the sample set $D\in \IR^d$ ($d\ge 1$) is predetermined and non-random with the restriction $\norm{s_i-s_j}{2}\ge \epsilon>0$, for $s_i,\ s_j\in D$ for all pairs $i,j=1,2,...,p$ to ensure that the sampling domain increases in extent as $p$ increases.
\end{asu}

Assume for any sample of the spatial processes, there are observations at $p$ discrete sites $s_1,...,s_p\in D$. Suppose $y_{ki}(s)\ (i=1,2,...,n_k)$ is from class $\mathcal{C}_k$ ($k=1,2$). Let $Y_{kij}=y_{ki}(s_j)$ be the observation at $j$th site for the $i$th sample of spatial process $y_k(s)$, where $k=1,2$, $i=1,2,...,n_k$, $j=1,2,...,p$, then the $j$th observation for sample $i$ can be represented by
\begin{align}
Y_{kij}=\mu_{kj}+\epsilon_{kij},
\end{align}
where $\mu_{kj}=\mu_k(s_j)$ is the mean effect at $j$th location in class $\mathcal{C}_k$ and $\epsilon_{kij}=\epsilon_{ki}(s_j)$ is the corresponding Gaussian random noise for $i$th sample at $j$th location. In matrix notation, the above model can be written as
\begin{align}\label{Yki}
\bY_{ki}=\bmu_k+\boldsymbol{\epsilon}_{ki},
\end{align}
where $\bY_{ki}=(Y_{ki1},...,Y_{kip})^{T}$, $\bmu_k=(\mu_{k1},...,\mu_{kp})^{T}$ is the mean vector of class $\mathcal{C}_k$ and $\boldsymbol{\epsilon}_{ki}=(\epsilon_{ki1},...\epsilon_{kip})^{T}$ has multivariate nomal distribution $N(\textbf{0},\Sigma)$.
As $\epsilon(s)$ has a covariance function (\ref{covfunction}), the covariance matrix $\Sigma$ can be represented by $\Sigma(\btheta)=[\gamma(s_i,s_j;\btheta)]_{i,j=1}^{p}$, i.e. $\gamma(s_i,s_j)$ is the $(i,j)$th entry. From (\ref{Yki}), we have
\begin{align}\label{Y_dis}
\bY_{ki}\sim N(\bmu_k,\Sigma({\btheta})).
\end{align}

Assume $\btheta_0$ be the true parameter in (\ref{covfunction}). If $\btheta=\btheta_0$, we write $\Sigma(\btheta)$ as $\Sigma$ for simplicity. Next, we make some assumptions on the covariance function $\gamma(s_i,s_j;\btheta)$:
\begin{asu} \label{asu_covfunc}
\subasu Let $\Xi$ be the parameter space for $\btheta$. Assume the covariance function $\gamma(s,s^{\prime};\btheta)$ is stationary, isotropy, and twice differentiable with respect to $\btheta$ for all $\btheta \in \Xi$ and $s,s^{\prime}\in D$. \\
\subasu $\gamma(s,s^{\prime};\btheta)$ is positive-definite in the sense that for every finite subset $\{s_1,s_2,...,s_p\}$ of $D$, the covariance matrix $\Sigma=[\gamma(s_i,s_j;\btheta)]$ is positive-definite.
\end{asu}

Under the stationary and isotropic assumption, $\Sigma(\btheta)$ could be written as $\Sigma(\btheta)=[\gamma(h_{ij};\btheta)]_{i,j=1}^{p}$, where $h_{ij}=\norm{s_i-s_j}{2}$ is the Euclidean distance between sites $s_i$ and $s_j$.

There are many ways to model the covariance function $\gamma(h;\btheta)$. A widely used family of covariance function is the $Mat\acute{e}rn$ covariance function. It is defined as:
\begin{align}\label{matern}
\gamma(h;\sigma^2,c,\nu,r):=\sigma^2(1-c)\frac{2^{1-\nu}}{\Gamma(\nu)}({h}/r)^{\nu}K_{\nu}({h}/r)
\end{align}
where $K_{\nu}(\cdot)$ is a modified Bessel function of the second kind and $\sigma^2>0$ is the variance, $0\le c\le 1$ is a nugget effect, $\nu>0$ is the scale and smoothness parameter \citep{cressie1992statistics}. First, the $Mat\acute{e}rn$ covariance function is isotropic and the correlation decreases when the distance $h$ increases. Second, when $\nu$ increases, the smoothness of the random field increases. Moreover, the $Mat\acute{e}rn$ covariance function converges to Gaussian covariance function $\gamma(h;\sigma^2,c,r)=\sigma^2(1-c)exp(-h^2/r^2)$ as $\nu\to\infty$. Last, if $\nu=\half$, (\ref{matern}) is reduced to the well known exponential covariance function $\gamma(h;\sigma^2,c,r)=\sigma^2(1-c)exp(-h/r)$. $r$ is called the range parameter since it measures the distance at which the correlation have decreased below certain threshold.

\subsection{Classification using MLE-LDA}
\label{sec-MLE-LDA}
Under the setting of spatial statistics, the covariance structures $\Sigma$ and means $\bmu_1$, $\bmu_2$ and $\bDelta=\bmu_1 - \bmu_2$ can be estimated by maximum likelihood estimation (MLE). Plugging the MLE into (\ref{LDA}) results in a MLE-LDA classifier. In this section, we investigate the properties of MLE-LDA. We prove that MLE-LDA is asymptotically optimal if $p/n \to 0$ but perform poorly while $p/n \to C > 0$ even if the true covariance is known.  

Let $\bY=(\bY_{11}^{T},...,\bY_{1n_1}^{T},\bY_{21}^{T},...,\bY_{2n_2}^{T})^{T}$, where $\bY_{ki}$ ($k=1,2$ and $i=1,2,...,n_k$) is defined as in section \ref{sec_introSpatial}. As defined in (\ref{Y_dis}), $\bY$ is a $np$ dimension vector and follows a multivariate normal distribution. Then we have the log-likelihood function for $\bmu_k$ and $\btheta$
\begin{align}\label{loglik}
L(\btheta, \bmu_1,\bmu_2; \bY)=&-\frac{p(n_1+n_2)}{2}log(2\pi)-\frac{n_1+n_2}{2}log|\Sigma(\btheta)|\\\nonumber
&-\frac{1}{2}\sum_{k=1}^{2}\sum_{i=1}^{n_k}(\bY_{ki}-\bmu_k)^{T}\Sigma(\btheta)^{-1}(\bY_{ki}-\bmu_k).
\end{align}
We can estimate $\bmu_1$, $\bmu_2$ and $\btheta$ by MLE even in high dimensional settings. According to the setting of the spatial model in section \ref{sec_introSpatial}, the resulting $\Sigma(\htheta)$ is a positive definite matrix. This resolves the first challenge of sample covariance singularity.

We denote the resulting estimates as $\hmu_{1MLE}, \hmu_{2MLE}, \htheta_{MLE}$, which can be  plugged in (\ref{LDA}) and get the MLE-LDA classifier $\hat{\delta}_{MLE}:$
\begin{align}\label{MLE-LDA}
\hat{\delta}_{MLE}(\bOmega)=(\bOmega-\frac{\hmu_{1MLE}+\hmu_{2MLE}}{2})^T\inSigma(\htheta_{MLE})(\hmu_{1MLE}-\hmu_{2MLE}).
\end{align}

In this section, we investigate the consistency of parameter estimation of (\ref{loglik}). Also we investigate the classification performance of $\hat{\delta}_{MLE}.$   Let $\bmu_1=(\mu_{11},\mu_{12},...,\mu_{1p})$, $\bmu_2=(\mu_{21},\mu_{22},...,\mu_{2p})$ and $\btheta_0=(\theta_{01},\theta_{02},...,\theta_{0q})$ be the true parameters. Let $\Sigma_k(\btheta), k=1,2,...,q$ be the partial derivative of the matrix $\Sigma(\btheta)$ with respect to $\theta_k$, i.e. $\frac{\partial}{\partial \theta_k}\Sigma(\btheta)=\Sigma_{k}(\btheta)$. Also let $\Sigma^k(\btheta), k=1,2,...,q$ denote the partial derivative of the matrix $\Sigma(\btheta)^{-1}$ with respect to $\theta_k$, i.e. $\frac{\partial}{\partial \theta_k}\Sigma^{-1}(\btheta)=\Sigma^{k}(\btheta)$. Also, denote $\Sigma_{kj}(\btheta)=\frac{\partial \Sigma(\btheta)}{\partial \theta_k \partial \theta_j}$ and $\Sigma^{kj}(\btheta)=\frac{\partial \Sigma^{-1}(\btheta)}{\partial \theta_k \partial \theta_j}$. We are going to simplify the notation if $\btheta=\btheta_0$, i.e. we write $\Sigma(\btheta_0)$ as $\Sigma$, $\inSigma(\btheta_0)$ as $\Sigma^{-1}$, $\Sigma_k(\btheta_0)$ as $\Sigma_k$ and $\Sigma^k(\btheta_0)$ as $\Sigma^k$. For a square matrix $A$, denote the set of all the eigenvalues by $\lambda(A)$. Moreover, denote the maximum and minimum eigenvalues by $\lambda_{max}(A)$ and $\lambda_{min}(A)$, respectively.

We need to assume some regularity conditions for Theorem \ref{ThconsistMLE}.
\begin{asu} \label{asu_eigen_sig}
$\lim sup_{p \to \infty}\lambda_{\max}(\Sigma)<\infty$, $\lim  inf_{p \to \infty}\lambda_{\min}(\Sigma)>0$
\end{asu}

\begin{asu}\label{asu_covnorm_inv}
$\norm{\Sigma_k}{F}^{-2}=O_p(p^{-1})$, {where $\norm{\Sigma_k}{F}=\sum_{i,j=1}^{p}\gamma_k^2(h_{ij};\btheta)$, where $\gamma_k(h_{ij};\btheta)=\der{\gamma(h_{ij};\btheta)}{\theta_k}$, $k=1,2,...,q$ and $\btheta$ is a $k$ dimensional parameter.}
\end{asu}

\begin{asu}\label{asu_tij_lim}
Assume $\lim_{p\to\infty}a_{ij}$ exist, where $a_{ij}=\frac{t_{ij}}{t_{ii}^{1/2}t_{jj}^{1/2}}$ and $t_{ij}=tr(\inSigma\Sigma_i\inSigma\Sigma_j)$.
\end{asu}

\begin{asu}\label{asu_cov_star}
There exists an open subset $\omega$ that contains the true parameter point $\btheta_0$ such that for all $\btheta^*\in\omega$, we have:\\
\subasu \label{asu_star_coveigen}
$-\infty < \lim_{p \to \infty}\lambda_{\min}(\Sigma_k(\btheta^*))<\lim_{p \to \infty}\lambda_{\max}(\Sigma_k(\btheta^*))<\infty$. \\
\subasu \label{asu_star_coveigen2}
$-\infty< \lim_{p\to \infty}\lambda_{\min}(\Sigma_{kj}(\btheta^*))<\lim_{p\to \infty}\lambda_{\max}(\Sigma_{kj}(\btheta^*))<\infty.$\\
\subasu \label{asu_star_covdif3}
$\parallel \frac{\partial t_{ij}(\btheta^*)}{\partial\btheta} \parallel_{2}=O_p(p)$, where $t_{ij}(\btheta^*)=tr(\inSigma(\btheta^*)\Sigma_i(\btheta^*)\inSigma(\btheta^*)\Sigma_j(\btheta^*))$
\end{asu}

Since $\Sigma^k=-\Sigma\Sigma_k\Sigma$ and $\Sigma^{kj}=\Sigma^{-1}(\Sigma_k\inSigma\Sigma_j+\Sigma_j\inSigma\Sigma_k-\Sigma_{kj})\Sigma^{-1}$, by  A\ref{asu_eigen_sig} and A \ref{asu_cov_star}, we have
$$-\infty < \lim_{p \to \infty}\lambda_{\min}(\Sigma^k(\btheta^*))<\lim_{p \to \infty}\lambda_{\max}(\Sigma^k(\btheta^*))<\infty$$ and
$$-\infty< \lim_{p\to \infty}\lambda_{\min}(\Sigma^{kj}(\btheta^*))<\lim_{p\to \infty}\lambda_{\max}(\Sigma^{kj}(\btheta^*))<\infty.$$

Notice that for any $p\times p$ matrix $A$ we have $\norm{A}{F}\le \sqrt{p}\norm{A}{2}=\sqrt{p}\lambda_{\max}(A)$, then from A \ref{asu_cov_star} we have:
\begin{itemize}
\item[(1)] $\norm{\Sigma^k(\btheta^*)}{F}=O_p(\sqrt{p})$;
\item[(2)] $\norm{\Sigma^{kj}(\btheta^*)}{F}=O_p(\sqrt{p}).$
\end{itemize}

First we have the following theorem about MLE consistency of (\ref{loglik}).
\begin{thm}\label{ThconsistMLE}
Assume A\ref{asu_covfunc}-A\ref{asu_cov_star} hold .  Let $(\bmu_1,\bmu_2,\btheta_0)$ be the true parameter. The maximum likelihood estimate (MLE) of (\ref{loglik}) is: $\hmu_{1MLE}=\bar{\bY}_{1\cdot}$, $\hmu_{2MLE}=\bar{\bY}_{2\cdot}$, $\hat{\btheta}_{MLE}$, where $\bar{\bY}_{k\cdot}=\sum_{i=1}^{n_k}\bY_{ki}/n_k$. Also,
\begin{itemize}
\item[(i)] If $p/n\to 0$, $\norm{\hat{\btheta}_{MLE}-\btheta_0}{2}=O_p(\frac{1}{\sqrt{np}})$;
\item[(ii)] If $p/n\to C$ with $0<C \le \infty$ and $\sqrt{p}/n\to 0$, $\norm{\hat{\btheta}_{MLE}-\btheta_0}{2}=O_p(\frac{1}{n})$.
\end{itemize}
\end{thm}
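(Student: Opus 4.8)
The plan is to adapt the classical increasing-domain analysis of Gaussian spatial-covariance maximum likelihood to the two-sample, high-dimensional setting: first profile out the two mean vectors, then study the score and information of the resulting profile likelihood in $\btheta$ alone, tracking carefully the bias that estimating the $p$-dimensional means injects into the score. For fixed $\btheta$ the log-likelihood (\ref{loglik}) is quadratic in each $\bmu_k$, and since $\Sigma(\btheta)$ is positive definite by A\ref{asu_covfunc}, the stationarity condition $\inSigma(\btheta)\sum_i(\bY_{ki}-\bmu_k)=\mathbf{0}$ forces $\hmu_k=\bar{\bY}_{k\cdot}$ independently of $\btheta$. Substituting back yields a profile log-likelihood $\ell(\btheta)$, so it remains to analyse the $q$-dimensional estimating equation $\nabla\ell(\htheta_{MLE})=\mathbf{0}$, whose $k$th component (with $n=n_1+n_2$) is
\[
U_k(\btheta)=-\frac{n}{2}\mathrm{tr}(\inSigma\Sigma_k)+\frac{1}{2}\sum_{k'=1}^{2}\sum_{i=1}^{n_{k'}}(\bY_{k'i}-\bar{\bY}_{k'\cdot})^{T}\inSigma\Sigma_k\inSigma(\bY_{k'i}-\bar{\bY}_{k'\cdot}).
\]

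The crucial step is to read off the order of $U(\btheta_0)$. Writing $\bY_{k'i}-\bar{\bY}_{k'\cdot}=(\bY_{k'i}-\bmu_{k'})-\bar{\beps}_{k'}$ with $\bar{\beps}_{k'}=\bar{\bY}_{k'\cdot}-\bmu_{k'}$ splits $U_k(\btheta_0)$ into an oracle part $U_k^{(0)}$ (the same expression evaluated at the true means) and a mean-correction part $-\frac12\sum_{k'}n_{k'}\bar{\beps}_{k'}^{T}\inSigma\Sigma_k\inSigma\bar{\beps}_{k'}$. I would show that $U_k^{(0)}$ has mean zero and variance $\frac n2 t_{kk}$; since $\lambda_{\max}(\inSigma)$ is bounded (A\ref{asu_eigen_sig}) and $\norm{\Sigma_k}{F}=O(\sqrt p)$ (A\ref{asu_cov_star}), we get $t_{kk}=O(p)$ and hence $U^{(0)}=O_p(\sqrt{np})$. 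The correction is a quadratic form in $\bar{\beps}_{k'}\sim N(\mathbf{0},\Sigma/n_{k'})$ whose expectation equals $\mathrm{tr}(\inSigma\Sigma_k)=O(p)$ and whose fluctuation is $O_p(\sqrt p)$, so it is $O_p(p)$. Thus $\norm{U(\btheta_0)}{2}=O_p(\sqrt{np}+p)$, and the two summands are exactly the sources of the two announced rates: the mean-zero stochastic part $\sqrt{np}$ and the deterministic mean-estimation bias $p$.

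Next I would establish an information lower bound. The negative Hessian of $\ell$ concentrates around the Fisher information $\frac n2[t_{ij}(\btheta)]$, which factors as $D^{1/2}[a_{ij}]D^{1/2}$ with $D=\mathrm{diag}(t_{ii})$; using A\ref{asu_tij_lim} (limits of the $a_{ij}$, which the argument additionally needs to be nonsingular in the limit) together with $t_{ii}\asymp p$, this gives $\lambda_{\min}(\frac n2[t_{ij}])\gtrsim np$. The eigenvalue bounds on $\Sigma_k(\btheta^*),\Sigma_{kj}(\btheta^*)$ in A\ref{asu_cov_star}, and the companion bounds on $\Sigma^k(\btheta^*),\Sigma^{kj}(\btheta^*)$ derived just above the theorem, keep the deterministic Hessian well-conditioned throughout $\omega$; I would then bound its stochastic fluctuation by $O_p(\sqrt{np})$ and its variation over $\omega$ via the Lipschitz bound $\norm{\partial t_{ij}(\btheta^*)/\partial\btheta}{2}=O_p(p)$ of A\ref{asu_cov_star}, so that uniformly on a shrinking ball around $\btheta_0$ the negative Hessian stays positive definite with smallest eigenvalue of order $np$.

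A mean-value expansion $\mathbf{0}=U(\btheta_0)+H(\btheta^*)(\htheta_{MLE}-\btheta_0)$ then yields $\norm{\htheta_{MLE}-\btheta_0}{2}\le\norm{H(\btheta^*)^{-1}}{2}\,\norm{U(\btheta_0)}{2}=O_p((np)^{-1})\cdot O_p(\sqrt{np}+p)$. When $p/n\to0$ the $\sqrt{np}$ term dominates, giving $O_p(1/\sqrt{np})$ (part i); when $p/n\to C>0$ the bias term of order $p$ dominates, giving $O_p(1/n)$ (part ii), with the extra hypothesis $\sqrt p/n\to0$ guaranteeing a preliminary consistency of $\htheta_{MLE}$ (so that $\btheta^*\in\omega$) and the negligibility of the quadratic remainder, making the local expansion legitimate. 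I expect the main obstacle to be precisely this uniform information lower bound in the genuinely high-dimensional regime: ordinary fixed-$p$ MLE theory does not apply, and one must verify both that the limiting normalized information $[a_{ij}]$ is nondegenerate and that the data-dependent Hessian does not lose positive-definiteness as $p,n\to\infty$ together; the companion delicate point is honest bookkeeping of the mean-estimation bias, which is what separates the two convergence rates.
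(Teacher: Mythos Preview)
Your proposal is correct and aligns closely with the paper's proof: both profile out the means, decompose the score at $\btheta_0$ into an $O_p(\sqrt{np})$ stochastic (oracle) term and an $O_p(p)$ mean-estimation bias term, and show the Fisher information is of exact order $np$, yielding the rates $(\sqrt{np}+p)/(np)$. The only cosmetic difference is that the paper packages the argument as a Fan--Li style ball argument (showing the profile likelihood decreases on the boundary of a ball of radius $C\eta_{n,p}$ around $\btheta_0$), which sidesteps the separate preliminary-consistency step you need before invoking the mean-value expansion; your Hessian-variation bounds and the role of $\sqrt p/n\to0$ correspond exactly to the paper's control of its second-order remainder term~(III).
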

\begin{proof}
See Supplementary Materials.
\end{proof}

{In Theorem \ref{ThconsistMLE}, A\ref{asu_covfunc} is necessary to ensure the good property for covariance matrix $\Sigma(\btheta)$. A\ref{asu_eigen_sig} and A\ref{asu_tij_lim} are assumed in \cite{mardia1984maximum} for existence of MLE for spatial regression for fixed $p$ when $n \to \infty$. A\ref{asu_covnorm_inv} and A\ref{asu_cov_star} are necessary conditions for the parameter consistency when $p\to\infty$. It is obvious that $Mat\acute{e}rn$ covariance function satisfies A\ref{asu_covfunc}. In the remark section \ref{remarks_asu}, we verified the $Mat\acute{e}rn$ covariance function satisfy the first part of A\ref{asu_eigen_sig} and A\ref{asu_covnorm_inv}-A\ref{asu_cov_star}. $Mat\acute{e}rn$ covariance function also satisfy the second part of A\ref{asu_eigen_sig} if A\ref{asu_domain} holds. } Theorem \ref{ThconsistMLE} shows that under the spatial statistical model, all the parameters ($\bmu_1,\bmu_2, \btheta$) can be estimated consistently by the MLE for either $p/n\to 0$ or $p/n$ goes to a positive constant or $\infty$. Therefore we obtain a positive definite covariance matrix estimate of $\Sigma(\btheta)$.  We can then plug-in the MLE's,  $\hat{\bmu}_1=\hat{\bmu}_{1MLE}$, $\hat{\bmu}_2=\hat{\bmu}_{2MLE}$ and $\hat{\Sigma}=\Sigma(\hat{\btheta}_{MLE})$ into (\ref{BayesRule}) to build up the classification function \ref{MLE-LDA}.

Then a new observation $\bomega$ of $\bOmega$ would be classified into class $\mathcal{C}_1$ if $\hat{\delta}_{MLE}(\bomega)>0$ and $\mathcal{C}_2$ otherwise. Using the same notations in section \ref{sec_introSpatialLDA}, the conditional misclassification rate is defined by (\ref{W1}) and (\ref{W2}). For simplicity, We are going to use $\hat{\bmu}_1$ to denote $\hat{\bmu}_{1MLE}$, $\hat{\bmu}_2$ to denote $\hat{\bmu}_{2MLE}$, $\hat{\btheta}$ to denote $\hat{\btheta}_{MLE}$ in this section.

We will see in Theorem \ref{ThErrpLessn} that the approximate optimal error rate can be achieved while $p/n\to 0$. However, Theorem \ref{ThErrp>n} shows if $p/n\to C$ with $0<C\le\infty$, the error rate would be no better than random guessing even if we know the true covariance, due to the error accumulated in the estimation of $\bmu_1$ and $\bmu_2$.

\begin{thm}\label{ThErrpLessn}
Let $C_p=\bDelta^T\Sigma(\btheta)\bDelta$. Assume ${p\over n}\to 0$, $C_p \to C_0$ with $0\le C_0\le \infty$, and $nC_p\to \infty$ as $n,p\to \infty$.
\begin{itemize}
\item[(1)] The overall misclassification rate $W(\hat{\delta}_{MLE})$ is asymptotically sub-optimal. In other words, $W(\hat{\delta}_{MLE}) \to 1-\Phi(\frac{\sqrt{C_0}}{2})$.
\item[(2)] Moreover, if $C_p\to C_0$ with $0\le C_0<\infty$ or if $C_p\to\infty$ and $C_p \frac{p}{n}\to 0$, then $W(\hat{\delta}_{MLE})$ is asymptotically optimal, i.e. $\frac{W(\hat{\delta}_{MLE})}{W_{OPT}}\stackrel{P}{\to} 1$.
\end{itemize}

\end{thm}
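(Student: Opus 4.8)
The plan is to reduce everything to the two Gaussian quantiles $\Psi_1,\Psi_2$ of (\ref{Psi1})--(\ref{Psi2}), since (\ref{W1})--(\ref{W}) give $W(\hat{\delta}_{MLE})=\tfrac12[(1-\Phi(\Psi_1))+\Phi(\Psi_2)]$ while $W_{OPT}=1-\Phi(\sqrt{C_p}/2)$. Write $\bar{\beps}_k=\hmu_k-\bmu_k\sim N(\mathbf 0,\Sigma/n_k)$ (independent across $k$) and $\bar{\beps}=\tfrac12(\bar{\beps}_1+\bar{\beps}_2)$, so that $\hmu_1-\hmu_2=\bDelta+\bar{\beps}_1-\bar{\beps}_2$, $\bmu_1-\hmu=\tfrac12\bDelta-\bar{\beps}$ and $\bmu_2-\hmu=-\tfrac12\bDelta-\bar{\beps}$. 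Substituting into the numerator of $\Psi_1$ and using symmetry of $\inhSigma$, the two linear-in-noise pieces collapse, giving numerator $\tfrac12\bDelta^T\inhSigma\bDelta-\bDelta^T\inhSigma\bar{\beps}_2-\bar{\beps}^T\inhSigma(\bar{\beps}_1-\bar{\beps}_2)$; the numerator of $\Psi_2$ is identical but with leading sign reversed and $\bar{\beps}_2$ replaced by $\bar{\beps}_1$, and both share the denominator $\sqrt{(\bDelta+\bar{\beps}_1-\bar{\beps}_2)^T\inhSigma\Sigma\inhSigma(\bDelta+\bar{\beps}_1-\bar{\beps}_2)}$.

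First I would remove the estimation error in the precision matrix. Because $p/n\to0$, Theorem \ref{ThconsistMLE}(i) gives $\norm{\htheta-\btheta_0}{2}=O_p(1/\sqrt{np})$; a first-order Taylor expansion of $\inSigma(\btheta)$ together with the bound $\lambda_{\max}(\Sigma^k)=O(1)$ from Assumption \ref{asu_cov_star} then yields $\norm{\inhSigma-\inSigma}{2}=O_p(1/\sqrt{np})$. Consequently each bilinear form above may be replaced by its $\inSigma$-analogue at relative cost $O_p(1/\sqrt{np})=o_p(1)$; in particular $\bDelta^T\inhSigma\bDelta=C_p(1+o_p(1))$ and the denominator becomes $(\bDelta+\bar{\beps}_1-\bar{\beps}_2)^T\inSigma(\bDelta+\bar{\beps}_1-\bar{\beps}_2)\,(1+o_p(1))$ using $\inSigma\Sigma\inSigma=\inSigma$. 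From here I work with the $\inSigma$-versions. I then size the stochastic pieces: the cross terms obey $\bDelta^T\inSigma\bar{\beps}_k\sim N(0,C_p/n_k)=O_p(\sqrt{C_p/n})$; the quadratic term equals $\bar{\beps}^T\inSigma(\bar{\beps}_1-\bar{\beps}_2)=\tfrac12(\bar{\beps}_1^T\inSigma\bar{\beps}_1-\bar{\beps}_2^T\inSigma\bar{\beps}_2)$ with mean $\tfrac{p}{2}(1/n_1-1/n_2)=O(p/n)$ and fluctuation $O_p(\sqrt p/n)$; and the denominator is $C_p+O_p(\sqrt{C_p/n})+O(p/n)$. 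Dividing yields $\Psi_1=\tfrac12\sqrt{C_p}-b+o_p(1)$ and $\Psi_2=-\tfrac12\sqrt{C_p}-b+o_p(1)$ with bias $b=p(1/n_1-1/n_2)/(2\sqrt{C_p})$.

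For part (1) I split on $C_0$. If $C_0>0$ (finite or infinite), then $b=O((p/n)/\sqrt{C_p})\to0$ and all $o_p$-terms vanish (using $nC_p\to\infty$ and $p/n\to0$), so $\Psi_1\to\sqrt{C_0}/2$, $\Psi_2\to-\sqrt{C_0}/2$ and $W(\hat{\delta}_{MLE})\to1-\Phi(\sqrt{C_0}/2)$. If $C_0=0$ the bias need not vanish, but because $\Psi_1$ and $\Psi_2$ share a denominator their difference is $\Psi_1-\Psi_2=\bDelta^T\inSigma\bDelta/\sqrt{C_p}\,(1+o_p(1))=\sqrt{C_p}(1+o_p(1))\to0$, so Lipschitz continuity of $\Phi$ forces $\Phi(\Psi_1)-\Phi(\Psi_2)\to0$ and hence $W(\hat{\delta}_{MLE})\to\tfrac12=1-\Phi(\sqrt{C_0}/2)$. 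This proves (1). Part (2) with $C_0<\infty$ is then immediate: $W(\hat{\delta}_{MLE})$ and $W_{OPT}$ converge to the same strictly positive limit $1-\Phi(\sqrt{C_0}/2)$ (equal to $\tfrac12$ when $C_0=0$), so their ratio tends to $1$.

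The hard part is part (2) when $C_0=\infty$, where $W_{OPT}\to0$ and the ratio of two vanishing Gaussian tails is exquisitely sensitive to its argument. By Mills' ratio, $\frac{1-\Phi(\Psi_1)}{1-\Phi(\sqrt{C_p}/2)}\to1$ is equivalent not to $\Psi_1/(\sqrt{C_p}/2)\to1$ but to the stronger requirement $\sqrt{C_p}\,(\Psi_1-\tfrac12\sqrt{C_p})\xrightarrow{P}0$. Writing the numerator as $\tfrac12 C_p+\nu_1$ and the squared denominator as $C_p(1+\rho)$, this difference equals $\nu_1-\tfrac14 C_p\rho+o_p(1)$, so I must show $\nu_1=o_p(1)$ and $C_p\rho=o_p(1)$. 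Here $\nu_1$ collects $\bDelta^T\inSigma\bar{\beps}_2=O_p(\sqrt{C_p/n})$, the bias $O(p/n)$, the fluctuation $O_p(\sqrt p/n)$, and the precision-replacement error $\tfrac12(\bDelta^T\inhSigma\bDelta-C_p)=O_p(C_p/\sqrt{np})$, while $C_p\rho$ adds analogous $O_p(\sqrt{C_p/n})$ and $O(p/n)$ contributions. Each is driven to $0$ by $p/n\to0$ and $C_p\,p/n\to0$ (the latter also yielding $C_p/n\to0$), which is exactly why these hypotheses are imposed. The most delicate bookkeeping is the covariance-estimation term: it must be bounded via the operator-norm estimate $\norm{\inhSigma-\inSigma}{2}=O_p(1/\sqrt{np})$ rather than a cruder Frobenius bound, leaving a residual of order $C_p/\sqrt{np}$ that is negligible under $C_p p/n\to0$ once $p$ grows (e.g. polynomially in $n$). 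Applying the same argument to $\Psi_2$ and averaging gives $W(\hat{\delta}_{MLE})/W_{OPT}\xrightarrow{P}1$.
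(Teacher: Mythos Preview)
Your argument tracks the paper's proof closely---expand $\Psi_1,\Psi_2$ into signal, linear-noise, and quadratic-noise pieces, trade $\inhSigma$ for $\inSigma$, size each term, and close the $C_p\to\infty$ case via the Mills-ratio sandwich---but refines it in two places. For the covariance replacement the paper writes $\hSigma=\Sigma+\epsilon E$ entrywise with $|e_{ij}|\le1$ and $\epsilon=O_p((np)^{-1/2})$, then invokes the crude bound $\norm{E}{2}\le p$ to obtain a relative error of order $O_p(\sqrt{p/n})$ in each bilinear form; your Taylor expansion of $\btheta\mapsto\inSigma(\btheta)$ together with the bounded spectrum of $\Sigma^k$ from A\ref{asu_cov_star} yields the sharper operator-norm estimate $\norm{\inhSigma-\inSigma}{2}=O_p((np)^{-1/2})$, so in part~(2) the residual you must kill is $C_p/\sqrt{np}$ rather than the paper's $C_p\sqrt{p/n}$. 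You are right to flag that even this tighter residual is not automatically driven to zero by the stated hypothesis $C_p\,p/n\to0$ alone; the paper glosses over the analogous (and larger) term by simply asserting ``$x^2-y^2\to0$'' without isolating it. For the $C_0=0$ subcase of part~(1), the paper keeps the $np/(n_1n_2)$ term in the denominator of its displayed expression for $\Psi_i$ and reads off $\Psi_i\to0$ directly (the bias is then $O(\sqrt{p/n})$), whereas you short-circuit the bias via the Lipschitz bound $|\Phi(\Psi_1)-\Phi(\Psi_2)|\le(2\pi)^{-1/2}|\Psi_1-\Psi_2|\to0$; both routes are valid.
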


\begin{proof}
See Supplementary Materials.
\end{proof}

The following theorem shows that while $\frac{p}{n}$ goes to a positive constant or $\infty$, even though the true covariance is known, the error accumulated in the estimation of $\bmu_1$ and $\bmu_2$ would cause biased misclassification rate unless the signal levels ($C_p$) are extremely high. This discovery suggests that even though there's no problem in parameter estimation in our model even in high dimensional case, it is still necessary to select important features for classification.
\begin{thm}\label{ThErrp>n}
Assume the true covariance $\Sigma$ is known, denote the classifier function as
\begin{align}
\delta_{\hmu}(\bomega)=(\bomega-\hmu)^T\Sigma^{-1}(\hmu_1-\hmu_2)
\end{align}
where $\hat{\bmu}_1$, $\hat{\bmu}_2$ are MLE in (\ref{loglik}) and $\hat{\bmu}=\frac{\hat{\bmu}_1+\hat{\bmu}_2}{2}$. Assume $p/n\to C$ with $0<C\le \infty$, $C_p\to C_0$ with $0\le C_0\le \infty$. Assume $n_1\ne n_2$ and $n_k>\frac{n}{4}$ ($k=1,2$), then
	\begin{itemize}
	\item[(1)] For $\frac{C_p}{p/n}\to \infty$, then $W(\hat{\delta}_{\hmu})\stackrel{P}{\to} 0$ and $W_{OPT} \stackrel{P}{\to} 0$ but $\frac{W(\hat{\delta}_{\hmu})}{W_{OPT}}\stackrel{P}{\to} \infty$.
	\item[(2)] For $\frac{C_p}{p/n}\to c$ with $0<c<\infty$,
	\begin{itemize}	
	\item[(i)] if $\frac{p}{n}\to C <\infty$, then $lim_P W(\hat{\delta}_{\hmu})>1-\Phi(\frac{\sqrt{C_0}}{2})$;
	\item[(ii)] if $\frac{p}{n} \to \infty$, then $W(\hat{\delta}_{\hmu})\stackrel{P}{\to} 0$ and $W_{OPT} \stackrel{P}{\to} 0$, but $\frac{W(\hat{\delta}_{\hmu})}{W_{OPT}}\stackrel{P}{\to} \infty$.
	\end{itemize}			
	\item[(3)] For $\frac{C_p}{p/n}\to 0$, then $W(\hat{\delta}_{\hmu})\stackrel{P}{\to} \half$.	
	\end{itemize}
\end{thm}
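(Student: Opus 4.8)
The plan is to compute the two class-conditional error rates exactly and then read off each regime from explicit Gaussian-tail comparisons. Conditioning on the training data, the rates of $\hat{\delta}_{\hmu}$ are $W_1(\hat{\delta}_{\hmu})=1-\Phi(\Psi_1)$ and $W_2(\hat{\delta}_{\hmu})=\Phi(\Psi_2)$, given by (\ref{Psi1})--(\ref{Psi2}) with $\hSigma$ replaced by the known $\Sigma$. Writing $\hmu_k=\bmu_k+\bar{\beps}_k$ with $\bar{\beps}_k\sim N(0,\Sigma/n_k)$, I would whiten: set $u=\Sigma^{-1/2}\bDelta$, so that $\|u\|^2=C_p$, and $v_k=\Sigma^{-1/2}\bar{\beps}_k\sim N(0,n_k^{-1}I_p)$ with $v_1,v_2$ independent. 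Then $\Sigma^{-1/2}(\hmu_1-\hmu_2)=u+v_1-v_2$ and $\Sigma^{-1/2}(\bmu_1-\hmu)=\tfrac12(u-v_1-v_2)$, which turns $\Psi_1,\Psi_2$ into explicit ratios of quadratic forms in $u,v_1,v_2$.

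Expanding, the numerator of $\Psi_1$ equals $\tfrac12(C_p-2u^{T}v_2-\|v_1\|^2+\|v_2\|^2)$, the numerator of $\Psi_2$ equals $-\tfrac12(C_p+2u^{T}v_1+\|v_1\|^2-\|v_2\|^2)$, and both share the denominator $\sqrt{C_p+2u^{T}(v_1-v_2)+\|v_1-v_2\|^2}$. Since $\|v_k\|^2\sim n_k^{-1}\chi^2_p$ concentrates at $p/n_k$ and $u^{T}v_k\sim N(0,C_p/n_k)$, the deterministic skeleton is $\Psi_1\approx\frac{C_p-\tau_-}{2\sqrt{C_p+\tau_+}}$ and $\Psi_2\approx-\frac{C_p+\tau_-}{2\sqrt{C_p+\tau_+}}$, where $\tau_+=p(n_1^{-1}+n_2^{-1})$ and $\tau_-=p(n_1^{-1}-n_2^{-1})$, both of exact order $p/n$. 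The crux is that $\tau_-\neq0$ precisely because $n_1\neq n_2$: this non-vanishing mean bias in the numerators, together with the variance inflation $\tau_+$ in the denominator, is what drives every conclusion. I would make this rigorous by recording $\chi^2$ and Gaussian tail bounds for $\|v_k\|^2$ and $u^{T}v_k$ (Assumption A\ref{asu_eigen_sig} guarantees $\Sigma^{-1/2}$ is well conditioned, so that $C_p$ and these forms stay comparable), passing from the random $\Psi$'s to the skeleton up to $o_p$ errors.

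The three parts then follow by comparing $\Psi_1,\Psi_2$ with the Bayes margin $m=\sqrt{C_p}/2$. For part (3), where $C_p/(p/n)\to0$, the signal $C_p$ and the cross terms $u^{T}v_k$ are negligible against $\tau_\pm\asymp p/n$, so both numerators collapse onto the \emph{same} random quantity $\tfrac12(\|v_2\|^2-\|v_1\|^2)$; hence $\Psi_1-\Psi_2=o_p(1)$ and $W_1+W_2=(1-\Phi(\Psi_1))+\Phi(\Psi_2)\to1$, giving $W(\hat{\delta}_{\hmu})\stackrel{P}{\to}\tfrac12$. For part (2)(i), where $p/n\to C<\infty$ and $C_p/(p/n)\to c$, all quantities converge in probability to finite limits $\psi_1^\ast,\psi_2^\ast$; because $\tau_+>0$ strictly inflates the denominator, these limits cannot coincide with the Bayes margins $\pm m$, and strict optimality of the Bayes rule forces $\lim_P W(\hat{\delta}_{\hmu})>1-\Phi(\sqrt{C_0}/2)$ (the gap is controlled by $\frac{\psi_1^\ast+|\psi_2^\ast|}{2}=\frac{C_0}{2\sqrt{C_0+\tau_+}}<m$ via convexity of $1-\Phi$ on $(0,\infty)$). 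For parts (1) and (2)(ii) with $C_p\to\infty$ and $p/n\to\infty$, I would use $1-\Phi(x)\sim\phi(x)/x$ to show that $\log\{W_1(\hat{\delta}_{\hmu})/W_{OPT}\}$ is governed by $m^2-\Psi_1^2=\frac{C_p(\tau_++2\tau_-)-\tau_-^2}{4(C_p+\tau_+)}$, which grows like $p/n\to\infty$; in the complementary sub-case where the bias pushes $\Psi_1$ negative, $W_1$ stays bounded below while $W_{OPT}\to0$. Either way $W(\hat{\delta}_{\hmu})/W_{OPT}\stackrel{P}{\to}\infty$.

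The main obstacle is exactly the ratio statements $W(\hat{\delta}_{\hmu})/W_{OPT}\to\infty$: since numerator and denominator both vanish, I must track the exponentially small tails to second order rather than at leading order, and keep uniform control of the $o_p$ fluctuations sitting inside the exponent --- in particular the cross term $u^{T}v_k\sim N(0,C_p/n)$ perturbs $\Psi$ and must be shown negligible against the $\Theta(p/n)$ bias. I also have to identify correctly which of $W_1,W_2$ dominates and through which mechanism (a growing exponential gap versus a class-conditional rate bounded away from $0$), and verify that the sign of $\tau_-$ consistently degrades the worse of the two rates uniformly over all admissible splits with $n_k>n/4$. The genuinely delicate boundary case is $p/n\to C<\infty$ with $C_p\to\infty$ only slowly, where the exponent $m^2-\Psi_1^2$ need not diverge; handling it requires separating the regimes $C_p/n\to0$ and $C_p/n\not\to0$ and arguing the claim through the dominating class rate, which is where the hypothesis $n_1\neq n_2$ does its essential work.
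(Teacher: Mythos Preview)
Your approach is essentially the paper's: you derive the same deterministic skeleton $\Psi_1\approx\frac{C_p-\tau_-}{2\sqrt{C_p+\tau_+}}$ and $\Psi_2\approx-\frac{C_p+\tau_-}{2\sqrt{C_p+\tau_+}}$ (the paper writes $\tau_+=\frac{np}{n_1n_2}$ and $-\tau_-=\frac{p(n_1-n_2)}{n_1n_2}$), handle part (2)(i) via concavity of $\Phi$ on $(0,\infty)$, and attack the ratio claims in (1) and (2)(ii) through the Mills-ratio bounds $\frac{x}{1+x^2}e^{-x^2/2}\le\Phi(-x)\le\frac{1}{x}e^{-x^2/2}$, exactly as the paper does. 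The whitening is just a clean route to the same quadratic-form expansion.

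One step needs repair. In part (3) you assert $\Psi_1-\Psi_2=o_p(1)$, but the numerator difference is $C_p+u^{T}(v_1-v_2)$ while the common denominator is of order $\sqrt{p/n}$, so $\Psi_1-\Psi_2\asymp C_p/\sqrt{p/n}$, which need not vanish under the hypothesis $C_p/(p/n)\to 0$ alone (take $C_p=\sqrt{p/n}$ with $p/n\to\infty$). The paper avoids this by arguing instead that both $\Psi_1$ and $\Psi_2$ are asymptotic to $-\tau_-/(2\sqrt{\tau_+})$, which has the fixed sign of $n_1-n_2$ and magnitude of order $\sqrt{p/n}$; hence one of $W_1,W_2$ tends to $0$ and the other to $1$, giving $W\to\tfrac12$. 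Your observation that both numerators ``collapse onto the same random quantity $\tfrac12(\|v_2\|^2-\|v_1\|^2)$'' is exactly the right one --- just conclude from it that $\Psi_1$ and $\Psi_2$ share a common sign and either diverge together (when $p/n\to\infty$) or converge to the same finite nonzero limit (when $p/n\to C<\infty$), rather than that their difference is $o_p(1)$.
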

\begin{proof}
See Supplementary Materials.
\end{proof}

\begin{corollary}\label{CorErrp>n} With all conditions are  same as in Theorem \ref{ThErrp>n}, and if $n_1=n_2$, then
\begin{itemize}
\item[(1)] If $\frac{C_p}{\sqrt{p/n}}\to \infty$, then $W(\hat{\delta}_{\hmu})\stackrel{P}{\to} 0$ and $W_{OPT} \stackrel{P}{\to} 0$, but $\frac{W(\hat{\delta}_{\hmu})}{W_{OPT}}\stackrel{P}{\to} \infty$;
\item[(2)] If $\frac{C_p}{\sqrt{p/n}}\to c$ with $0<c<\infty$,
\begin{itemize}
\item[(i)] If $\frac{p}{n}\to C$, then $W(\hat{\delta}_{\hmu})\stackrel{P}{\to} 1-\Phi(\frac{c}{2\sqrt{4+c/\sqrt{C}}})$ and $W_{OPT}\stackrel{P}{\to} 1-\Phi(\frac{\sqrt{c\sqrt{C}}}{2})$
\item[(ii)] If $\frac{p}{n}\to \infty$, then $W(\hat{\delta}_{\hmu})\stackrel{P}{\to} 1-\Phi(\frac{c}{4})$, and $W_{OPT}\stackrel{P}{\to} 0$;
\end{itemize}
\item[(3)] If $\frac{C_p}{\sqrt{p/n}}\stackrel{P}{\to} 0$, we have $W(\hat{\delta}_{\hmu})\stackrel{P}{\to} \half$.
\end{itemize}
\end{corollary}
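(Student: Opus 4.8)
The plan is to specialize the argument of Theorem \ref{ThErrp>n} to the balanced case $n_1=n_2$, where a key cross-covariance vanishes and thereby sharpens the relevant scale from $p/n$ to $\sqrt{p/n}$. Since $\Sigma$ is known, the plug-in covariance in (\ref{Psi1})--(\ref{Psi2}) is exactly $\Sigma$, so the common denominator reduces to $\sqrt{(\hmu_1-\hmu_2)^T\Sigma^{-1}(\hmu_1-\hmu_2)}$ and the numerators to $(\bmu_k-\hmu)^T\Sigma^{-1}(\hmu_1-\hmu_2)$. I would write $\hmu_k=\bmu_k+\bar{\beps}_k$ with $\bar{\beps}_k\sim N(\mathbf 0,\Sigma/n_k)$ independent across $k$, and set $\bu=\bar{\beps}_1-\bar{\beps}_2$ and $\bar{\beps}=\half(\bar{\beps}_1+\bar{\beps}_2)$, so that $\hmu_1-\hmu_2=\bDelta+\bu$, $\bmu_1-\hmu=\half\bDelta-\bar{\beps}$ and $\bmu_2-\hmu=-\half\bDelta-\bar{\beps}$. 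The decisive simplification is that when $n_1=n_2$ one has $\mathrm{Cov}(\bu,\bar{\beps})=\half(\Sigma/n_1-\Sigma/n_2)=\mathbf 0$, so $\bu$ and $\bar{\beps}$ are independent; this annihilates the bias term $\E[\bar{\beps}^T\Sigma^{-1}\bu]$ that produced the $C_p/(p/n)$ threshold in Theorem \ref{ThErrp>n} and leaves a purely stochastic correction of order $\sqrt p/n$, which is what turns the threshold into $C_p/\sqrt{p/n}$.

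Next I would reduce everything to five scalar statistics, $C_p=\bDelta^T\Sigma^{-1}\bDelta$, $B=\bDelta^T\Sigma^{-1}\bu$, $D=\bu^T\Sigma^{-1}\bu$, $P=\bar{\beps}^T\Sigma^{-1}\bDelta$ and $Q=\bar{\beps}^T\Sigma^{-1}\bu$, giving
\begin{align}
\Psi_1=\frac{\half C_p+\half B-P-Q}{\sqrt{C_p+2B+D}},\qquad \Psi_2=\frac{-\half C_p-\half B-P-Q}{\sqrt{C_p+2B+D}}. \nonumber
\end{align}
A short moment computation (using $\bu\sim N(\mathbf 0,\tfrac4n\Sigma)$ and $\bar{\beps}\sim N(\mathbf 0,\tfrac1n\Sigma)$ for $n_1=n_2=n/2$) gives $B\sim N(0,4C_p/n)$, $P\sim N(0,C_p/n)$, $D\sim\tfrac4n\chi^2_p$ with $\E D=4p/n$ and $\mathrm{Var}(D)=32p/n^2$, and $\E Q=0$, $\mathrm{Var}(Q)=4p/n^2$. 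Hence $D/(4p/n)\stackrel{P}{\to}1$ concentrates, while $B,P=O_p(\sqrt{C_p/n})$ and $Q=O_p(\sqrt p/n)$ are all of smaller order than the signal $\half C_p$ and than $D$; under $p/n\to C$ and $\sqrt p/n\to0$ these are the dominant balances, so substituting $D\approx 4p/n$ and discarding the negligible terms pins down $\Psi_1,\Psi_2$ up to $o_p(1)$.

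The case analysis then follows the sign behaviour of $\kappa:=C_p/\sqrt{p/n}$. If $\kappa\to0$ (part (3)) both numerators are dominated by $Q$ relative to the denominator $\approx\sqrt{4p/n}$, so $\Psi_1,\Psi_2\stackrel{P}{\to}0$ and $W(\hat{\delta}_{\hmu})\stackrel{P}{\to}\half$. If $\kappa\to c\in(0,\infty)$ (part (2)): when $p/n\to C<\infty$ the denominator tends to $\sqrt{C_0+4C}=\sqrt C\,\sqrt{4+c/\sqrt C}$ with $C_0=c\sqrt C$ while the numerator tends to $\half C_0$, giving $\Psi_1\to c/(2\sqrt{4+c/\sqrt C})$, whence $W\to1-\Phi\!\big(c/(2\sqrt{4+c/\sqrt C})\big)$ and $W_{OPT}\to1-\Phi(\sqrt{c\sqrt C}/2)$; when $p/n\to\infty$ the term $D$ swamps $C_p$ in the denominator, so $\Psi_1\to\tfrac12 C_p/\sqrt D\to c/4$, giving $W\to1-\Phi(c/4)$ while $W_{OPT}\to0$. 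Finally, for $\kappa\to\infty$ (part (1)) I would show $\Psi_1,\lvert\Psi_2\rvert\to\infty$, so $W\to0$ and $W_{OPT}\to0$, and compare rates through the Gaussian tail asymptotic $1-\Phi(x)\sim\phi(x)/x$: expanding $\Psi_1=\tfrac12\sqrt{C_p}-D/(4\sqrt{C_p})+o_p(\cdot)$ yields $\Psi_1^2-C_p/4\to-D/4\approx-(p/n)$, so that $W(\hat{\delta}_{\hmu})/W_{OPT}\asymp\exp\{(p/n)/2\}\to\infty$.

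I expect part (1) to be the main obstacle. Establishing $W/W_{OPT}\to\infty$ is delicate precisely because both quantities vanish: one must carry the second-order expansion of $\Psi_1$ coming from the inflated denominator $\sqrt{C_p+D}$ against the Bayes scale $\sqrt{C_p}$, and then verify that the genuinely random perturbations $P,Q$ move $\log(1-\Phi(\Psi_1))$ by only $o_p(1)$, so that the ratio's probability limit is governed by the deterministic shift $D/4$. Throughout, converting convergence of $\Psi_1,\Psi_2$ into convergence of $W=\half[(1-\Phi(\Psi_1))+\Phi(\Psi_2)]$ and of $W_{OPT}=1-\Phi(\sqrt{C_p}/2)$ requires Slutsky and continuous-mapping arguments, together with care that the $\chi^2_p$ fluctuations of $D$ and the bilinear term $Q$ are controlled in probability rather than only in mean.
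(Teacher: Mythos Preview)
Your approach is essentially the same as the paper's. The paper works directly with $\heps_1^T\Sigma^{-1}\heps_1 - \heps_2^T\Sigma^{-1}\heps_2$ and shows via a second-moment bound that this difference is $O_p(\sqrt{p}/n)$ when $n_1=n_2$, which is what replaces the deterministic $\tfrac{p(n_1-n_2)}{n_1n_2}$ term from Theorem~\ref{ThErrp>n} by a stochastic term of smaller order. Your bilinear form $Q=\bar{\beps}^T\Sigma^{-1}\bu$ is exactly half of this quantity (expand $\tfrac12(\heps_1+\heps_2)^T\Sigma^{-1}(\heps_1-\heps_2)$), so the two parametrizations are equivalent; yours has the advantage of making explicit that the vanishing of the bias is precisely the independence of $\bu$ and $\bar{\beps}$ when $n_1=n_2$. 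The case analysis for parts (2) and (3) then proceeds identically in both treatments.

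On part (1), your own expansion already exposes a subtlety that neither you nor the paper resolves. You correctly obtain $C_p/4-\Psi_1^2\approx D/4\approx p/n$, but the conclusion $\exp\{(p/n)/2\}\to\infty$ holds only when $p/n\to\infty$. If $p/n\to C<\infty$ (which the hypotheses allow), then $C_p/4-\Psi_1^2\to C$ and the tail ratio $W(\hat\delta_{\hmu})/W_{OPT}$ tends to a finite constant of order $e^{C/2}$, not to infinity. The paper does not give an independent argument here either---it simply refers back to the proof of Theorem~\ref{ThErrp>n}(1), where the corresponding quantity diverges because of the nonzero $(n_1-n_2)$ term that is absent in the balanced case. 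Your instinct that part (1) is the main obstacle is therefore correct, and the obstacle is the deterministic gap $C_p/4-\Psi_1^2$ itself, not the control of the random perturbations $P,Q$.
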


Theorem \ref{ThErrp>n} and the Corollary \ref{CorErrp>n} show that while $p/n\to C$ with $0<C\le\infty$, $\hat{\delta}_{\hmu}$ is never asymptotically optimal. It is asymptotically sub-optimal only if $C_p\to \infty$. It reveals that though there's no difficulty in applying LDA on spatial dependent data while estimating the parameters by MLE, however, in high dimensional case ($p/n\to C$ with $0<C\le \infty$), the classification performance may be poor due to noise accumulated in the estimation of $\bmu_1$ and  $\bmu_2$ (see \cite{fan2008high} and \cite{shao2011sparse}). Therefore, feature selection is still critical for classification with high dimension. \cite{fan2008high} seeks to extract salient features by two-sample t-test and proved that t-test can pick up all important features by choosing an appropriate critical value once the features are assumed to be independent. \cite{shao2011sparse} proposes to select features by threshold. For the spatially correlated features, we can use penalized maximum likelihood estimates (PMLE) for feature selection.

\section{Classification using penalized maximum likelihood estimate (PMLE-LDA)}
\label{sec-PLDA}

\subsection{The penalized maximum likelihood estimation (PMLE) }
\label{sec_PMLE_intr}
{
In this section, we consider feature selection for the high dimensional classification problem (i.e. $p/n \to C$ with $0<C\le \infty$ as $p\to \infty$ and $n\to \infty$). We recall the notation used in section \ref{sec_introSpatialLDA}. We use $(\Delta_1,...,\Delta_2)=(\mu_{21}-\mu_{11},...,\mu_{2p}-\mu_{1p})$ to denote the differences of the mean between class $\mathcal{C}_1$ and $\mathcal{C}_2$. The vector form is denoted as $\bDelta=\bmu_2-\bmu_1$, which is a $p$ dimensional vector. Define the signal set $S=\{j: \Delta_j\neq 0\}$. Let $s$ be the number of non zero elements in $\bDelta$. The important features are contained in the set $S$. Instead of assuming the sparsity of discriminant direction \citep{cai2011direct,fan2012road, mai2012direct}, we assume the sparsity of feature difference $\bDelta$ (i.e. $s\ll n$ and $s/n\to 0$). Next, we derive the penalized likelihood function based on the assumption that the observations $\bY_{ki}$ are normally distributed $\bY_{ki}\sim N(\bmu_k,\Sigma(\btheta_0))$ for $k=1,2$ and $i=1,2,...,n_k$.}

{First, we define two matrix forms that help simplify notations in subsequent derivations. Let $I_p$ be a $p\times p$ identity matrix. We denote the diagonal block matrix for square matrix $A$ as $diag_n(A)$. We denote the block matrix for identity matrix $I_p$ as $\tilde{J}_{n,p}$. Their definitions are given as follows.}
\begin{align*}
diag_{n}(A) = \underbrace{\left(   \begin{array}{cccc}
A   &  0     &  \cdots  & 0\\
0     &  A   &  \cdots  & 0\\
\vdots& \vdots &  \ddots  & \vdots\\
0     & 0      &  0       &A
\end{array}
\right)}_{n\times n \ blocks},
\  \
\tilde{J}_{n,p} = \underbrace{ \left(\begin{array}{cccc}
I_p   &  I_p     &  \cdots  & I_p\\
I_p   &  I_p     &  \cdots  & I_p\\
\vdots& \vdots   &  \ddots  & \vdots\\
I_p   &  I_p     &  I_p     &I_p
\end{array}
\right)}_{n\times n\ blocks}.
\end{align*}
Both $diag_n(A)$ and $\tilde{J}_{n,p}$ consist of $n\times n$ blocks. Thus we have $\tilde{I}_{n,p}=diag_n(I_p)$.

{
Recall that $\bY=(\bY_{11}^T,\cdots, \bY_{1n_1}^T,\bY_{21}^T,\cdots,\bY_{2n_2}^T)^T$ is a $np$ dimensional vector. In order to estimate $\bDelta=\bmu_1-\bmu_2$, we transform $\bY$ by letting $\bZ=\textbf{V}\bY$, where $\textbf{V}$ is a $(n-1)p\times np$ matrix made up of the first $(n-1)p$ rows of $\tilde{I}_{n,p}-\frac{1}{n}\tilde{J}_{n,p}$. Then $\bZ=(\bZ_1^T\ \bZ_2^T\cdots \bZ_{n-1}^T)^T$, where $\bZ_i=\bY_{1i}-\bar{\bY}$ for $i=1,2,...,n_1$, $\bZ_i=\bY_{2(i-n_1)}-\bar{\bY}$ for $i=n_1+1,n_1+2,...,n-1$ and $\bar{\bY}=\frac{1}{n}\sum_{k=1}^{2}\sum_{i=1}^{n_k}\bY_{ki}$. Note that transformed data $\bZ$ becomes $(n-1)p$ dimensional instead of $np$ dimensional, because it is known that the freedom of a centered transformation is $n-1$ (if performed on $n$ observations). Then the distribution of $\bZ$ is given as} 
$$
\bZ_i\sim\left\{
\begin{array}{l}
N(-\tau_2\bDelta, \frac{n-1}{n}\Sigma(\btheta_0)), i=1,2,\ldots,n_1, \\
N( \tau_1\bDelta, \frac{n-1}{n}\Sigma(\btheta_0)), i=n_1+1,\ldots,n-1.\\
\end{array}\right.
$$
where $\tau_1=\frac{n_1}{n}$ and $\tau_2=\frac{n_2}{n}$. The covariance is $cov(\bZ_i,\bZ_j)=-\frac{1}{n}\Sigma$ for $i\neq j$. We define $\bX^{(1)}$ and $\bX^{(2)}$ as
\begin{align*}
\bX^{(1)}=\left(
\begin{array}{cccc}
-\tau_2   &  0     &  \cdots  & 0\\
0     &  -\tau_2    &  \cdots  & 0\\
\vdots& \vdots &  \ddots  & \vdots\\
0     & 0      &  0       &-\tau_2
\end{array}
\right)_{p\times p},
\ \ \
\bX^{(2)}=\left(
\begin{array}{cccc}
\tau_1   &  0     &  \cdots  & 0\\
0     &  \tau_1    &  \cdots  & 0\\
\vdots& \vdots &  \ddots  & \vdots\\
0     & 0      &  0       &\tau_1
\end{array}
\right)_{p\times p}.
\end{align*}
We further define that $\bX_i=\bX^{(1)}$ for $i=1,2,...,n_1$ and $\bX_i=\bX^{(2)}$ for $i=n_1+1,,...,n-1$. Then we have $\bX=(\bX_1^T, \bX_2^T,\cdots,\bX_{n-1}^T)^T$.

{Because PMLE is traditionally used in a linear regression setup, we rewrite $\bbeta=\bDelta$. This keeps the consistency with the traditional notations, which researchers have been familiar with.}  Then the $(n-1)p\times 1$ vector $\bZ$ as a multivariate normal distribution $N(\bX\bbeta,\dot{\Sigma})$, where $\dot{\Sigma}=(\tilde{I}_{n-1,p}-\frac{1}{n}\tilde{J}_{n-1,p})diag_{n-1}(\Sigma)$. Denote all the unknown parameters by $\boldeta=(\bbeta,\btheta)\in \IR^{p+q}$. Based on the fact that
$\abs{\dSigma}=\abs{\tilde{I}_{n-1,p}-\frac{1}{n}\tilde{J}_{n-1,p}}\abs{diag_{n-1}\Sigma(\btheta))}=(\frac{1}{n})^p\abs{\Sigma(\btheta)}^{n-1}$ and $(\tilde{I}_{n-1,p}-\frac{1}{n}\tilde{J}_{n-1,p})^{-1}=\tilde{I}_{n-1,p}+\tilde{J}_{n-1,p}$, we can write the penalized log-likelihood function of $\bbeta$ and $\btheta$ as
\begin{align}\label{PMLE_G}
Q(\btheta,\bbeta;\bZ)
=&-\frac{np}{2}log(2\pi)-\half log\abs{\dSigma}-\half(\bZ-\bX\bbeta)^T\dot{\Sigma}^{-1}(\bZ-\bX\bbeta)-n\sum_{j=1}^{p}P_{\lambda}(|\beta_j|) \nonumber \\
=&C_{n,p}-\frac{n-1}{2}log\abs{\Sigma}-\half(\bZ-\bX\bbeta)^T diag_{n-1}({\inSigma}) (\tilde{I}_{n-1,p}+\tilde{J}_{n-1,p})(\bZ-\bX\bbeta) \nonumber \\
&-n\sum_{j=1}^{p}P_{\lambda}(|\beta_j|),
\end{align}
where $C_{n,p}=-\frac{(n-1)p}{2}\log\pi+\frac{p}{2}\log n$. $P_{\lambda}(x)$ is a generic sparsity-inducing penalty, which could be the lasso penalization or folded concave penalization (such as the SCAD and the MCP). We will elaborate the choice of penalization later in this paper.

{By observing the joint likelihood \ref{PMLE_G}, we can see that $\btheta$ and $\bbeta$ play different roles because one is included in the mean and the other is included in the covariance. So it is difficult to obtain the estimation of them simultaneously. The exact solution of \ref{PMLE_G} should be achieved through numerous iterations before convergence. However, to save computational time, we adopt the one-step estimation procedure to estimate $\btheta$ and $\bbeta$ though iterative updates \citep{chu2011penalized}. The procedure is shown as follows.
\begin{center}
\colorbox[gray]{0.95}{
\begin{minipage}{0.85\textwidth}
\textbf{One-step PMLE (PMLE$_{ose}$) computing procedure:}
\begin{enumerate}[1.]
\itemsep0em
\item Initialize $\bbeta$ by minimizing $R(\bbeta)=(\bZ-\bX\beta)^T(\bZ-\bX\beta)+n\sum_{j=1}^{p}P_{\lambda}(|\beta_j|)$ with respect to $\bbeta$. Denote the initialization by $\hbeta^{(0)}$;
\item With $\bbeta=\hbeta^{(0)}$, estimate $\btheta$ by maximizing $Q(\btheta,\hbeta^{(0)};\bZ)$ in (\ref{PMLE_G}) with respect to $\btheta$. Denote the estimate by $\htheta^{(0)}$;
\item With $\btheta=\htheta^{(0)}$, update $\bbeta$ by maximizing $Q(\htheta^{(0)},\bbeta;\bZ)$ in (\ref{PMLE_G}) with respect to $\bbeta$. Denote the estimate by $\hbeta^{(1)}$;
\item With $\bbeta=\hbeta^{(1)}$, estimate $\btheta$ by maximizing $Q(\btheta,\hbeta^{(1)};\bZ)$ in (\ref{PMLE_G}) with respect to $\btheta$. Denote the estimate by $\htheta^{(1)}$.
\end{enumerate}
\end{minipage}}
\end{center}
}
Then $\htheta_{ose}=\htheta^{(1)}$ and $\hbeta_{ose}=\hbeta^{(1)}$ are the obtained estimates. We call $\htheta_{ose}$ and $\hbeta_{ose}$ as the one-step PMLE.
Mean parameters $\bmu_1$ and $\bmu_2$ can be estimated by $\hat{\bmu}_{1,ose}=\bar{\bY}-\tau_2 \hbeta_{ose}$ and $\hat{\bmu}_{2PMLE}=\bar{\bY}+\tau_1 \hbeta_{ose}$. Besides, we estimate the covariance as $\hat{\Sigma}=\Sigma(\htheta_{ose})$. The $(i,j)$th element of $\hSigma$ is 
$
\hat{\sigma}_{i,j}=\gamma(h_{ij};\htheta_{ose}),
$
where $h_{ij}=\norm{s_j-s_i}{2}$ is the Euclidean distance between site $s_i$ and $s_j$.

\subsubsection{Consistency of {one-step PMLE}}
Penalty function largely determines the sampling properties of the penalized likelihood estimates. Some additional assumptions about the penality function and tuning parameter $\lambda$ are needed:

\begin{asu}\label{asu_penfunc1}
Assume $a_{n}=O_p(\frac{1}{\sqrt{n}})$, where $a_{n}=\max_{1\le j\le p}\{p^{'}_{\lambda_{n}}(|\beta_{0j}|),\beta_{0j}\ne 0\}$
\end{asu}

\begin{asu}\label{asu_penfunc2}
$b_{n}\to 0$ as $n\to \infty$, where $b_{n}=\max_{1\le j\le m}\{p^{''}_{\lambda_{n}}(|\beta_{0j}|),\beta_{0j}\ne 0\}$
\end{asu}

\begin{asu}\label{asu_sparsity_1}
$\lambda_{n}\to 0$ and $\lambda_{n} /\sqrt{\frac{s}{n}}\to \infty$.
\end{asu}

\begin{asu}\label{asu_sparsity_2}
$\lim\inf_{\substack{n\to \infty\\p\to \infty}}\lim\inf_{\theta\to 0+}P_{\lambda_{n}}^{'}(|\theta|)/\lambda_{n}>0$
\end{asu}

A \ref{asu_penfunc1} ensures the unbiasedness property for large parameters and the existence of the consistent penalized likelihood estimator. A \ref{asu_penfunc2} ensures that the penalty function does not influence the penalized likelihood estimators more than the likelihood function itself. A\ref{asu_sparsity_2}  ensures the penalized likelihood estimators possess the sparsity property. A\ref{asu_sparsity_1} leads to the variable selection consistency.

Smoothly Clipped Absolute Deviation (SCAD) penalty satisfies all these assumptions. We adopt SCAD penalization in this paper. \cite{fan2001variable} proposed the SCAD penalty function and claimed that it has three good properties: unbiasedness, sparsity and continuity. Unbiasedness means that there is no over-penalization of large features to avoid unnecessary modeling biases. Sparsity means that the insignificant parameters are set to 0 by a thresholding rule to reduce model complexity. Continuity means that the penalized likelihood produces continuous estimators. The SCAD penalty function is defined as
\begin{align*}
p_{\lambda}(\beta)=
\begin{cases}
\lambda \abs{\beta} & \text{if } \abs{\beta}\le \lambda \\
-\frac{\beta^2-2a\lambda \beta + \lambda^2}{2(a-1)} & \text{if } \lambda<\abs{\beta}\le a\lambda \\
\frac{(a+1)\lambda^2}{2}  & \text{if } \abs{\beta}>a\lambda
\end{cases}
\end{align*}
for some $a>0$. More details can be found in \cite{fan2001variable}. We first illustrate the property of PMLE of (\ref{PMLE_G}) by the following theorem.

{As demonstrated in \cite{zou2008one}, the one-step method is as efficient as the fully iterative method both empirically and theoretically, provided that the initial estimators are reasonably good. We will see in the proof of Theorem \ref{Th_consist_PMLE}, in the one step estimation Algorithm stated in section \ref{sec_PMLE_intr}, the initial estimators for $\bbeta$ ($\hbeta^{(0)}$) is obtained by minimizing penalized regression function $R(\bbeta)=(\bZ-\bX\beta)^T(\bZ-\bX\beta)+n\sum_{j=1}^{p}P_{\lambda}(|\beta_j|)$, which is consistent and has oracle property. The initial estimator for $\btheta$ ($\htheta^{(0)}$) is also a consistent estimate from MLE. As a result, the one-step PMLE (PMLE$_{ose}$) has good property as demonstrated in Theorem \ref{Th_consist_PMLE}.  }

Recall that the true parameter $\bbeta_0$ is a parameter vector of size $p$, and $\btheta_0=(\theta_{01},\theta_{02},...,\theta_{0q})$ is a $q-$dimensional parameter in covariance function. We define the sparsity of $\bbeta_0$ as follows. Without loss of generality, we can write $\bbeta_0=(\beta_{1,0}^T,\beta_{2,0}^T)^T$, where $\beta_{1,0}\in\IR^{s}$ stands for non-zero components, and $\beta_{0,2}=\textbf{0}_{(p-s)\times 1}$ stands for zero components. The number of nonzero components suffices that $\frac{s}{n}\to 0$ as $n, p, s \rightarrow \infty$. So
we can write $\bX_i=(\bX^1_{i},\bX^2_{i}), i=1,2,..,n$, where $\bX^1_{i}$ is the $p\times s$ submatrix of $\bX_i$ made up of nonzero columns in $supp(\bbeta_0)$ and $\bX^2_{i}$ is the $p\times (p-s)$ complement matrix. We have the following Theorem for PMLE$_{ose}$:

\begin{thm}\label{Th_consist_PMLE}
Assume conditions A \ref{asu_covfunc}-A \ref{asu_sparsity_2} hold. Assume $\bbeta_0=(\bbeta_{1,0}^T,\bbeta_{2,0}^T)^T$, where $\bbeta_{1,0}\in\IR^{s}$ is non-zero component, $\bbeta_{2,0}=\textbf{0}_{(p-s)\times 1}$ is the zero component of $\bbeta_0$ with $\frac{s}{n}\to 0$, $\frac{p}{n}\to C$ with $0<C\le\infty$ as $n, p, s \rightarrow \infty$. The one-step PMLE of (\ref{PMLE_G}) from the one step procedure in section  \ref{sec_PMLE_intr} (PMLE$_{ose}$) is $\hat{\boldeta}_{ose}=(\hbeta_{ose},\htheta_{ose})$  with $\hbeta_{ose}=(\hbeta_{1,ose}^T,\hbeta_{2,ose}^T)^T$ and $\hbeta_{1, ose}$ is a sub-vector of $\hbeta_{ose}$ formed by nonzero components in $supp(\bbeta_0)$. Then $\hat{\boldeta}_{ose}$ satisfy:
\begin{itemize}
\item[a. \textbf{(consistency)}] $\norm{\htheta_{ose}-\btheta_0}{2}=O_p(\frac{1}{\sqrt{np}})$ and $\norm{\hbeta_{ose}-\bbeta_0}{2}=O_p(\sqrt{\frac{s}{n}})$.
\item[b. \textbf{(sparsity)}] $\hbeta_{2, ose}=0$ with probability tending to $1$ as $n\to \infty$.
\end{itemize}
\end{thm}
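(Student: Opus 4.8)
The plan is to follow the four alternating updates of the one-step procedure literally and to show that each $\bbeta$-update inherits the penalized-likelihood consistency and oracle sparsity of the Fan--Li framework \citep{fan2001variable}, while each $\btheta$-update inherits the spatial-MLE rate already established in Theorem \ref{ThconsistMLE}. The two rates in play are $\sqrt{s/n}$ for $\bbeta$ and $1/\sqrt{np}$ for $\btheta$; the central bookkeeping fact is that, since $s/n\to 0$ and $np\to\infty$, the $\btheta$-error is of strictly smaller order than the $\bbeta$-error, so plugging an estimated $\btheta$ into a $\bbeta$-update (or an estimated $\bbeta$ into a $\btheta$-update) perturbs the relevant score negligibly and does not degrade either rate. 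Accordingly I would prove one ``single-update'' statement for each coordinate block and then chain the four steps of the algorithm.

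First I would treat the initializer $\hbeta^{(0)}$, the minimizer of $R(\bbeta)=(\bZ-\bX\bbeta)^T(\bZ-\bX\bbeta)+n\sum_{j}P_{\lambda}(|\beta_j|)$. Because the block design gives $\bX^T\bX=(n_1\tau_2^2+(n_2-1)\tau_1^2)I_p=\pi(1-\pi)\,n\,I_p(1+o_p(1))$, the unpenalized Gram matrix is perfectly conditioned with eigenvalues of order $n$, and the score $\bX^T(\bZ-\bX\bbeta_0)$ has covariance $\bX^T\dSigma\bX=\pi(1-\pi)\,n\,\Sigma(1+o_p(1))$; the SCAD-penalized least-squares estimator is then root-$(n/s)$ consistent and sets the zero block to $\mathbf{0}$ with probability tending to one, so $\norm{\hbeta^{(0)}-\bbeta_0}{2}=O_p(\sqrt{s/n})$. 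Step~2 maximizes $Q(\btheta,\hbeta^{(0)};\bZ)$ over $\btheta$. I would expand the $\btheta$-score about $\btheta_0$ exactly as in Theorem \ref{ThconsistMLE}: the stochastic part of each component, accumulated over the $n-1$ transformed vectors, has variance controlled by $\norm{\Sigma_k}{F}^2\asymp p$ (A\ref{asu_covnorm_inv}), while the expected information scales like $n\,tr(\inSigma\Sigma_k\inSigma\Sigma_k)\asymp np$ (A\ref{asu_eigen_sig}, A\ref{asu_tij_lim}, A\ref{asu_cov_star}), yielding $\norm{\htheta^{(0)}-\btheta_0}{2}=O_p(1/\sqrt{np})$. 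The mismatch caused by $\hbeta^{(0)}\neq\bbeta_0$ enters the $\btheta$-score only through $\bX(\hbeta^{(0)}-\bbeta_0)$ and is $O_p(\sqrt{s/n})$, negligible at the $\btheta$-scale.

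The heart of the argument is Step~3, the $\bbeta$-update with $\btheta$ fixed at the consistent $\htheta^{(0)}$. Writing $\indSigma=diag_{n-1}(\inSigma)(\tilde{I}_{n-1,p}+\tilde{J}_{n-1,p})$, I would apply the Fan--Li device: for $\bu$ with $\norm{\bu}{2}=\alpha$, examine $Q(\htheta^{(0)},\bbeta_0+\sqrt{s/n}\,\bu;\bZ)-Q(\htheta^{(0)},\bbeta_0;\bZ)$ and show that the quadratic term $-\tfrac12\tfrac{s}{n}\bu^T(\bX^T\indSigma\bX)\bu$, of order $s\alpha^2$ since $\bX^T\indSigma\bX=\pi(1-\pi)\,n\,\inSigma(1+o_p(1))$ with $\inSigma$ of bounded spectrum (A\ref{asu_eigen_sig}), dominates both the linear score term and the penalty increment, each $O_p(s\alpha)$ under A\ref{asu_penfunc1}--A\ref{asu_penfunc2}, once $\alpha$ is large. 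This forces a local maximizer inside the ball of radius $\alpha\sqrt{s/n}$, giving $\norm{\hbeta^{(1)}-\bbeta_0}{2}=O_p(\sqrt{s/n})$. For sparsity, I would restrict to a coordinate $j$ in the zero block and note that, over the consistency ball, the likelihood gradient $[\bX^T\indSigma(\bZ-\bX\bbeta)]_j$ is $O_p(\sqrt{ns})$, whereas the penalty term $nP^{'}_{\lambda_n}(|\beta_j|)\gtrsim n\lambda_n$ (A\ref{asu_sparsity_2}); since $\lambda_n/\sqrt{s/n}\to\infty$ (A\ref{asu_sparsity_1}), the penalty swamps the gradient and fixes the sign of $\partial Q/\partial\beta_j$, so the maximizer satisfies $\hbeta^{(1)}_{2}=\mathbf{0}$ with probability tending to one.

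Finally, Step~4 re-estimates $\btheta$ from $Q(\btheta,\hbeta^{(1)};\bZ)$; since $\hbeta^{(1)}$ is again $O_p(\sqrt{s/n})$-consistent and sparse, the Step~2 analysis applies verbatim and gives $\norm{\htheta^{(1)}-\btheta_0}{2}=O_p(1/\sqrt{np})$. Setting $\htheta_{ose}=\htheta^{(1)}$ and $\hbeta_{ose}=\hbeta^{(1)}$ yields both claims. I expect the main obstacle to be controlling the cross-dependence rigorously, i.e.\ verifying that holding $\btheta$ at $\htheta^{(0)}$ rather than $\btheta_0$ in the $\bbeta$-update leaves the curvature and score at their advertised orders. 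This reduces to bounding $\|\indSigma(\htheta^{(0)})-\indSigma(\btheta_0)\|$ in operator norm by $O_p(\norm{\htheta^{(0)}-\btheta_0}{2})=O_p(1/\sqrt{np})$ using the bounded spectra of $\Sigma_k(\btheta^*)$ and $\Sigma^{k}(\btheta^*)$ from A\ref{asu_cov_star}, and then checking that the correlated block form $\indSigma=diag_{n-1}(\inSigma)(\tilde{I}_{n-1,p}+\tilde{J}_{n-1,p})$ keeps the quadratic forms $\bX^T\indSigma\bX$ and $\bX^T\indSigma(\bZ-\bX\bbeta_0)$ at orders $n$ and $\sqrt{n}$ per coordinate, respectively; this nonstandard block covariance $\dSigma$ is what makes the bounds more delicate than in the independent-sample penalized-likelihood literature.
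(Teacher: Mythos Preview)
Your proposal follows essentially the same route as the paper: treat the four updates in order, use Fan--Li penalized-likelihood arguments for each $\bbeta$-step, borrow the $O_p(1/\sqrt{np})$ rate of Theorem~\ref{ThconsistMLE} for each $\btheta$-step, and absorb the cross-dependence by Taylor-expanding $\indSigma(\htheta^{(0)})$ about $\btheta_0$ with the bounded spectra from A\ref{asu_cov_star}. The paper makes one technical maneuver more explicit than you do: in Step~3 it does \emph{not} work directly on $\IR^p$ but first restricts to the $s$-dimensional subspace via the reduced objective $\bar{Q}(\htheta^{(0)},\bbeta^1)$, obtains a local maximizer $\hbeta_1^{(1)}$ with $\norm{\hbeta_1^{(1)}-\bbeta_{1,0}}{2}=O_p(\sqrt{s/n})$ there, and only then shows that $(\hbeta_1^{(1)},\mathbf{0})$ is a local maximizer of the full $Q$ on $\IR^p$ via the partial-derivative sign argument. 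Your claimed orders $O_p(s\alpha)$ for the linear score and penalty increments are correct only under this restriction (on the full space the score has $p$ coordinates and the bound would be $O_p(\sqrt{sp}\,\alpha)$, which is too large), so when you write out the details be sure to impose the restriction to the nonzero block first.
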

\begin{proof}
See Supplementary Materials.
\end{proof}

\subsubsection{Covariance tapering and one-step PMLE}
\label{sec_tapper}
When the number of features is large ($p$ is large) for each realization of the spatial process, calculating the likelihood can be computationally infeasible (requiring $\mathcal{O}(p^3)$ calculation). Covariance tapering can be used to approximate the likelihood. When the covariance matrix is replaced with a tapered one, the resulting matrices can then be manipulated using efficient sparse matrix algorithms which would reduce computational burden effectively.

In section \ref{sec_introSpatialLDA}, the covariance matrix is defined as $\Sigma(\btheta)=[\gamma(s_i,s_j)]_{i,j=1}^{p}$. Under A\ref{asu_covfunc}, we can simply write it as $\Sigma=[\gamma(h_{ij})]_{i,j=1}^{p}$, where $h_{ij}=\norm{s_i-s_j}{2}$ is the Euclidean distance between sites $s_i$ and $s_j$. Let $K_T(h,w)$ denote a tapering function, which is an isotropic autocorrelation function when $0<h<w$ and 0 when $h\geq w$ for a given threshold $w>0$.  We use a simple tapering function from \cite{wendland1995piecewise},
\begin{align}\label{tapperfunction}
K_T(h,\omega)=\left[(1-h/w)_{+}\right]^2
\end{align}
where $x_{+}=\max({x,0})$ meaning that the correlation is $0$ when the lag distance $h$ is greater than the threshold distance $w$. Let $\textbf{K}(w)=[K_{T}(h_{ii{'}},w)]_{i,i^{'}=1}^{p}$ denote the $p\times p$ tapering matrix. Then a tapered covariance of $\Sigma$ is defined as $\Sigma_{T}=\Sigma \circ \textbf{K}(w)$, where $\circ$ is the Schur product (i.e. elementwise product). By the properties of the Schur product (\cite{neumaier1992horn}, chap. 5), the tapered covariance matrix would keep the positive definiteness thus it is still a valid covariance matrix. When $p$ is large, we approximate the  penalized log-likelihood (\ref{PMLE_G}) by replacing $\Sigma$ with $\Sigma_T$ and obtain a covariance tapered penalized log-likelihood:

\begin{align}\label{penloglik_g_taper}
Q_T(\btheta,\bbeta;\bZ)
=&-\frac{np}{2}log(2\pi)-\half log\abs{\dSigma_T}-\half(\bZ-\bX\bbeta)^T\dot{\Sigma}_T^{-1}(\bZ-\bX\bbeta)-n\sum_{j=1}^{p}P_{\lambda}(|\beta_j|)\nonumber\\
=&C_{n,p}-\frac{n-1}{2}log\abs{\Sigma_T}-\half(\bZ-\bX\bbeta)^T diag_{n-1}({\inSigma_T}) (\tilde{I}_{n-1,p}+\tilde{J}_{n-1,p})(\bZ-\bX\bbeta)\nonumber\\
&-n\sum_{j=1}^{p}P_{\lambda}(|\beta_j|)
\end{align}
where $C_{n,p}=-\frac{(n-1)p}{2}\log\pi+\frac{p}{2}\log n$.

We keep all the notations the same as in (\ref{PMLE_G}), except that $\Sigma$ is replaced by $\Sigma_T$. {We follow the one-step PMLE procedure.} Let $\hbeta_{T, ose}=\hDelta_{T, ose}$ and $\htheta_{T, ose}$ be the one-step PMLE with tapered covariance (PMLE$_{T, ose}$). Next, we prove the consistency of PMLE$_{T, ose}$.
Let $\gamma_k(\btheta,h)=\der{\gamma(\btheta,h)}{\theta_k}(\btheta)$ and $\gamma_{jk}(\btheta,h)=\frac{\partial^2\gamma(\btheta,h)}{\partial \theta_k\theta_j}$. Two additional assumptions are made here for regularization.

\begin{asu}\label{asu_tapperRange}
Assume $0 <\inf_{p}\{\frac{w_{p}}{p^{\delta}}\}< \sup_{p}\{\frac{w_{p}}{p^{\delta}}\}  <\infty$, where $w_{p}$ is the threshold distance in the tapering function for some $\delta>0$.
\end{asu}
\begin{asu}\label{asu_covfunc_int}
Let $d\ (d\ge1)$ be the dimension of the domain, i.e. $D\subset \IR^d$. Assume for all $\btheta\in \Xi$ and $1\le k,j\le q$, we have $ \gamma(\btheta,h), \gamma_k(\btheta,h), \gamma_{jk}(\btheta,h)$ belong to the function space $\pounds$, where $\pounds=\{f(h):\int_{0}^{\infty}h^d f(h)dh<\infty\}.$
\end{asu}
Let $\Sigma$ be the covariance matrix and $\Sigma_T$ be the tapered covariance matrix. $\Sigma_{k,T}=\der{\Sigma_T}{\theta_k}$ and $\Sigma_{jk,T}=\frac{\partial^2 \Sigma_T}{\partial \theta_j \partial \theta_k}$. By using the tapering function (\ref{tapperfunction}), we have the following result for {PMLE$_{T,ose}$}.

{Similar to Theorem \ref{Th_consist_PMLE}, since the initial estimates have good properties, the one-step PMLE with tapering (PMLE$_{T,ose}$) also has good properties hence we have the following theorem:
\begin{thm}\label{Th_consist_PMLE_T}
Assume conditions \ref{asu_covfunc}-\ref{asu_covfunc_int} hold. Assume $\bbeta_0=(\bbeta_{1,0}^{T},\bbeta_{2,0}^{T})^{T}$, where $\bbeta_{1,0}\in\IR^s$ is non-zero component, $\bbeta_{2,0}=\textbf{0}_{(p-s)\times 1}$ is the zero component of $\bbeta_0$ with $\frac{s}{n}\to 0$, $\frac{p}{n}\to C$ with $0<C\le\infty$ as $n,p,s\to \infty$. The one-step PMLE estimates of (\ref{penloglik_g_taper}) from Algorithm in section \ref{sec_PMLE_intr} (PMLE$_{T,ose}$) is $\hat{\boldeta}_{T, ose}=(\hbeta_{T,ose},\htheta_{T,ose})$  with $\hbeta=(\hbeta_{1,T,ose}^T,\hbeta_{2,T,ose}^T)^T$ and $\hbeta_{1,T,ose}$ is a sub-vector of $\hbeta_{T,ose}$ formed by non-zero components in $supp(\bbeta_0)$. Then $\hat{\boldeta}_T$ satisfy:
\begin{itemize}
\item[a. \textbf{(consistency)}] $\norm{\htheta_{T,ose}-\btheta_0}{2}=O_p(\frac{1}{\sqrt{np}})$ and $\norm{\hbeta_{T,ose}-\bbeta_0}{2}=O_P(\sqrt{\frac{s}{n}})$.
\item[b. \textbf{(sparsity)}] $\hbeta_{2,T, ose}=0$ with probability tending to $1$ as $n\to \infty$.
\end{itemize}
\end{thm}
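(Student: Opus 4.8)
The plan is to mirror the proof of Theorem~\ref{Th_consist_PMLE} step for step, since the one-step procedure applied to the tapered objective $Q_T$ in (\ref{penloglik_g_taper}) is formally identical to the one applied to $Q$ in (\ref{PMLE_G}) after the substitutions $\Sigma\mapsto\Sigma_T$, $\Sigma_k\mapsto\Sigma_{k,T}$ and $\Sigma_{jk}\mapsto\Sigma_{jk,T}$. Consequently the entire argument carries over \emph{provided} the tapered objects inherit the regularity conditions A\ref{asu_eigen_sig}, A\ref{asu_covnorm_inv}, A\ref{asu_tij_lim} and A\ref{asu_cov_star} that were imposed on the untapered covariance. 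The heart of the proof is therefore to verify these tapered analogues using the two extra hypotheses A\ref{asu_tapperRange} and A\ref{asu_covfunc_int}; once they are in hand, consistency and sparsity follow exactly as before.

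First I would establish the spectral bounds. Positive-definiteness of $\Sigma_T=\Sigma\circ\textbf{K}(w)$ is already guaranteed by the Schur product theorem (\cite{neumaier1992horn}), so $\Sigma_T$ is a valid covariance. For the upper eigenvalue bound I would apply Gershgorin's theorem to control $\lambda_{\max}(\Sigma_T)$ by the maximal row sum $\max_i\sum_j |\gamma(h_{ij})|\,K_T(h_{ij},w_p)$. Under the minimum-distance condition A\ref{asu_domain}, the number of sites in a shell of radius $h$ about $s_i$ grows like $h^{d-1}$, so this row sum is dominated by a constant multiple of $\int_0^\infty h^{d-1}|\gamma(h)|\,dh$, which is finite under A\ref{asu_covfunc_int}; the same estimate applies to $\Sigma_{k,T}$ and $\Sigma_{jk,T}$ with $\gamma$ replaced by $\gamma_k$ and $\gamma_{jk}$, yielding the tapered versions of the derivative eigenvalue bounds in A\ref{asu_cov_star}. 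The Frobenius condition A\ref{asu_covnorm_inv} is the easiest: since $K_T(h,w_p)\to 1$ as $h\to 0$, the near-diagonal entries of $\Sigma_{k,T}$ coincide with those of $\Sigma_k$ up to a factor approaching one, so $\norm{\Sigma_{k,T}}{F}^2$ retains the $\Theta(p)$ lower bound and hence $\norm{\Sigma_{k,T}}{F}^{-2}=O_p(p^{-1})$.

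The main obstacle is the \emph{lower} eigenvalue bound $\liminf_{p\to\infty}\lambda_{\min}(\Sigma_T)>0$, since Hadamard multiplication by $\textbf{K}(w)$ can in principle degrade conditioning. This is precisely what A\ref{asu_tapperRange} is designed to handle: letting the taper range grow as $w_p\sim p^{\delta}$ guarantees that $\Sigma_T$ retains enough of the correlation mass of $\Sigma$ to stay uniformly bounded away from singularity, while the compactly supported Wendland taper (\ref{tapperfunction}) keeps $\Sigma_T$ genuinely positive-definite. I would make this quantitative by invoking the spectral-equivalence estimates from the covariance-tapering literature, which bound $\lambda_{\min}(\Sigma_T)$ below in terms of $\lambda_{\min}(\Sigma)$ (positive by A\ref{asu_eigen_sig}) once the taper range is large relative to the correlation scale. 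A companion technical point is the tapered analogue of the trace-derivative condition in A\ref{asu_cov_star}, namely $\norm{\partial t_{ij,T}(\btheta^*)/\partial\btheta}{2}=O_p(p)$ for $t_{ij,T}=tr(\Sigma_T^{-1}\Sigma_{i,T}\Sigma_T^{-1}\Sigma_{j,T})$; this follows by combining the spectral bounds just established with the integrability of $\gamma_{jk}$ supplied by A\ref{asu_covfunc_int}.

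With the tapered regularity conditions verified, the remaining steps replicate Theorem~\ref{Th_consist_PMLE} verbatim: the initialization $\hbeta^{(0)}$ minimizing the penalized least-squares objective $R(\bbeta)$ is $O_p(\sqrt{s/n})$-consistent with the oracle property; the profiled MLE step for $\btheta$ attains the rate $O_p(1/\sqrt{np})$ by the argument underlying Theorem~\ref{ThconsistMLE}; and the one-step updates propagate these rates, using A\ref{asu_penfunc1}--A\ref{asu_sparsity_2} on the SCAD penalty to preserve both the estimation rates and the exact-zero recovery $\hbeta_{2,T,ose}=0$ with probability tending to one. This yields $\norm{\htheta_{T,ose}-\btheta_0}{2}=O_p(1/\sqrt{np})$ and $\norm{\hbeta_{T,ose}-\bbeta_0}{2}=O_p(\sqrt{s/n})$, completing the proof.
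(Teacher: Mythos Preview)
Your overall strategy is exactly the paper's: verify that the tapered objects $\Sigma_T,\Sigma_{k,T},\Sigma_{jk,T}$ satisfy the analogues of A\ref{asu_eigen_sig}--A\ref{asu_cov_star}, then rerun the proof of Theorem~\ref{Th_consist_PMLE} verbatim with $\Sigma$ replaced by $\Sigma_T$. The paper packages the verification as Lemma~\ref{lem_taper_cov_asu} and then dispatches the theorem in two lines.

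Where you diverge is in the lower eigenvalue bound, which you flag as ``the main obstacle'' and propose to handle via spectral-equivalence results from the tapering literature together with the growing taper range A\ref{asu_tapperRange}. The paper's argument is far more elementary: since the Wendland taper has $K_T(0,w)=1$, the diagonal of $\textbf{K}(w)$ is identically one, and the Schur-product eigenvalue inequality (Horn--Johnson) gives directly $\lambda_{\min}(\Sigma_T)\ge \lambda_{\min}(\Sigma)\min_i K_T(0,w)=\lambda_{\min}(\Sigma)>0$ by A\ref{asu_eigen_sig}. No growing taper range is needed here; A\ref{asu_tapperRange} enters not for the eigenvalue bound but for the approximation Lemma~\ref{lem_tapermatrix}, which shows $\norm{\Sigma-\Sigma_T}{\infty}$, $\norm{\Sigma_k-\Sigma_{k,T}}{\infty}$, $\norm{\Sigma_{jk}-\Sigma_{jk,T}}{\infty}=O(p^{-\delta_0})$.

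Your treatment of the trace-derivative condition is also more hand-wavy than the paper's. Rather than appealing generically to spectral bounds plus integrability, the paper writes $\partial t_{ij,T}/\partial\theta_l$ explicitly, telescopes each trace term (e.g.\ $tr(\Sigma_T^{-1}\Sigma_{l,T}\Sigma_T^{-1}\Sigma_{i,T}\Sigma_T^{-1}\Sigma_{j,T})$) against its untapered counterpart, and bounds each difference by $O(p^{1-\delta_0})$ using the $\norm{\cdot}{\infty}$ estimates from Lemma~\ref{lem_tapermatrix} together with the spectral bounds; this yields $\partial t_{ij,T}/\partial\theta_l=\partial t_{ij}/\partial\theta_l+O(p^{1-\delta_0})=O(p)$ by A\ref{asu_star_covdif3}. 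Your route would work, but the paper's is both shorter and more explicit about where each of the extra hypotheses A\ref{asu_tapperRange} and A\ref{asu_covfunc_int} is actually consumed.
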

}

\subsection{The penalized maximum likelihood estimation LDA (PMLE-LDA) classifier}
Now we can develop the PMLE-LDA classifier in this section. No matter using PMLE$_{ose}$ or PMLE$_{T, ose}$, we obtain the consistent estimates for $\bDelta$ and $\btheta$ denoted by $\hDelta$ and $\hat{\btheta}$. The estimation of $\bmu_1$ and $\bmu_2$ are $\hat{\bmu}_{1}=\bar{\bY}-\tau_2 \hat{\bDelta}$ and $\hat{\bmu}_{2}=\bar{\bY}+\tau_1 \hat{\bDelta}$. Besides, we have estimated covariance $\hat{\Sigma}=\Sigma(\htheta)$, where the $(i,j)$th element of $\hSigma$ is:
\begin{align}
\hat{\sigma}_{ij}=\gamma(\abs{s_j-s_i};\htheta)
\end{align}

When $p>n$, the error accumulated in estimate of each $\hat{\sigma}_{ij}$ may also cause problems in classification (see \cite{bickel2008covariance} and \cite{shao2011sparse}). For regularization of the covariance matrix, we use the tapered covariance matrix in classification function. Specifically, we define $\tSigma=\Sigma_T(\htheta)=\Sigma(\htheta) \circ \textbf{K}(w)$, where $\textbf{K}(w)$ is defined in section \ref{sec_tapper}. We then replace $\bmu_1,\bmu_2, \Sigma$ in LDA (\ref{BayesRule}) by $\hat{\bmu}_{1},\hat{\bmu}_{2}$ and $\tSigma$ for classification. Then the PMLE-LDA function is:
\begin{align}\label{PLDA}
\hat{\delta}_{PLDA}(\bX)=(\bX-\bar{\bY}-\frac{n_1-n_2}{2n}\hDelta)^T\inTSigma \hDelta
\end{align}
where $\bar{\bY}=\frac{1}{n}\sum_{k=1}^{2}\sum_{i=1}^{n_k}\bY_{ki}$.

The conditional misclassification rate for class 1 and class 2 are defined by (\ref{W1}) and (\ref{W2}) with $\hSigma$ replaced with $\tSigma$. Similarly we have the overall misclassification rate defined in (\ref{W}).

We need more assumptions for the covariance function $\gamma(h;\btheta)$ in Theorem \ref{ThErrPgeN}.
\begin{asu}\label{asu_covfunc_int2}
Let $d\ (d\ge 1)$ be the dimension of the domain $D$, i.e. $D\subset R^d$. Assume $\int_{1}^{\infty}h^d\gamma(h;\btheta)dh<\infty$ and $\int_{0}^{1}h^{d-1}\gamma(h;\btheta)dh<\infty$ for $\btheta\in \Xi$.
\end{asu}
 This requires that when  $h\to\infty$, $\gamma(h;\btheta) \sim h^{x} $ with $x<-(d+1)$ and  when $h\to 0$, $\gamma(h;\btheta)\sim h^x$ with $x>-d$.

\begin{asu}\label{asu_covfunc_bound}
Assume there exist a constant $M$ such that for any $h\ge 0$ and $\btheta\in\Xi$, $\parallel \frac{\partial \gamma(h;\btheta)}{\partial \btheta} \parallel_2\le M$.
\end{asu}
\begin{thm}\label{ThErrPgeN}
Assume $\htheta$, $\hDelta$ in (\ref{PLDA}) are estimated from Theorem \ref{Th_consist_PMLE} or \ref{Th_consist_PMLE_T}. Suppose assumptions A\ref{asu_covfunc}-A\ref{asu_eigen_sig} and A\ref{asu_covfunc_int2}-\ref{asu_covfunc_bound} hold. Assume $\frac{s}{n}\to 0$, $\frac{p}{n}\to C$ with $0<C\le \infty$, $C_p \to C_0$ with $0\le C_0\le \infty$, $\frac{C_p}{\sqrt{s/n}}\to 0$. Also, assume $w=O((\sqrt{np})^{\frac{\alpha}{d}})$ with $0<\alpha<1$, and $w^{-1}=O(p^{-\delta})$ with some $\delta>0$, where $d$ is the dimension of the domain. Then the classification error rate of $\hat{\delta}_{PLDA}$ is asymptotically sub-optimal, i.e. $W(\hat{\delta})\stackrel{P}{\to} 1-\Phi(\frac{\sqrt{C_0}}{2})$. Moreover,
\begin{itemize}
\item[(1)] If $C_p\to C_0<\infty$, $W(\hat{\delta})$ is asymptotically optimal, i.e. $\frac{W(\hat{\delta})}{W_{OPT}}\stackrel{P}{\to} 1$;
\item[(2)] If $C_p\to \infty$ and $C_p\kappa_{n,p}\to 0$, $W(\hat{\delta})$ is asymptotically optimal, i.e. $\frac{W(\hat{\delta})}{W_{OPT}}\stackrel{P}{\to} 1$,  where $\kappa_{n,p}=\max(\frac{w^d}{\sqrt{np}},\frac{1}{w},\sqrt{\frac{s}{n}})$.
\end{itemize}

\end{thm}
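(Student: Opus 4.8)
The plan is to reduce the overall error rate $W(\hat\delta_{PLDA})=\tfrac12\bigl(1-\Phi(\Psi_1)+\Phi(\Psi_2)\bigr)$ to the limiting behaviour of the two standardized quantities in (\ref{Psi1})--(\ref{Psi2}) with $\hSigma$ replaced throughout by $\tSigma$, and to show $\Psi_1\stackrel{P}{\to}\tfrac12\sqrt{C_0}$ and $\Psi_2\stackrel{P}{\to}-\tfrac12\sqrt{C_0}$, which immediately gives $W(\hat\delta_{PLDA})\stackrel{P}{\to}1-\Phi(\sqrt{C_0}/2)$. Writing $\hmu=(\hmu_1+\hmu_2)/2$ and using the PMLE identities $\hmu_1=\bar\bY-\tau_2\hDelta$, $\hmu_2=\bar\bY+\tau_1\hDelta$ (so that $\hmu_1-\hmu_2=-\hDelta$ and $\bmu_1-\hmu=-\tfrac12\bDelta+(\bmu-\hmu)$), I first decompose the numerator of $\Psi_1$ as $(\bmu_1-\hmu)^T\inTSigma(\hmu_1-\hmu_2)=\tfrac12\bDelta^T\Sigma^{-1}\bDelta+R$, where $R$ collects three error contributions: the precision replacement $\tfrac12\bDelta^T(\inTSigma-\Sigma^{-1})\bDelta$, the mean-difference error through $\hDelta-\bDelta$, and the sample-mean fluctuation $(\bar\bY-\E\bar\bY)^T\inTSigma\hDelta$. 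A parallel expansion yields the denominator $\hDelta^T\inTSigma\Sigma\inTSigma\hDelta=C_p+R'$. The leading terms reproduce $\tfrac12\sqrt{C_p}$, so the whole argument rests on showing $R$ and $R'$ are negligible relative to $\sqrt{C_p}$ for the sub-optimality claim and, for the ratio statements, controlling them precisely relative to $C_p$.

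The central quantitative step is to bound $\inTSigma-\Sigma^{-1}$. I would write $\inTSigma-\Sigma^{-1}=\inTSigma(\Sigma-\tSigma)\Sigma^{-1}$ and split $\Sigma-\tSigma=(\Sigma-\Sigma_T(\btheta_0))+(\Sigma_T(\btheta_0)-\Sigma_T(\htheta))$. The first piece is the pure tapering bias $\Sigma\circ(\mathbf{1}\mathbf{1}^T-\textbf{K}(w))$, supported on pairs with $h_{ij}\ge w$; under A\ref{asu_domain} the number of sites in a shell of radius $h$ grows like $h^{d-1}$, so by a Gershgorin bound its operator norm is controlled by $\int_w^\infty h^{d-1}\gamma(h;\btheta_0)\,dh$, which A\ref{asu_covfunc_int2} renders finite and of order $1/w$. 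The second piece is treated by a first-order expansion in $\btheta$: its entries are $\sum_k(\htheta_k-\theta_{0k})\gamma_k(h_{ij};\cdot)K_T(h_{ij},w)$, nonzero only for $h_{ij}<w$ (hence $O(w^d)$ per row) and bounded by A\ref{asu_covfunc_bound}, so its operator norm is $O_p\bigl(w^d\norm{\htheta-\btheta_0}{2}\bigr)=O_p(w^d/\sqrt{np})$ by Theorem \ref{Th_consist_PMLE}/\ref{Th_consist_PMLE_T}. Combined with the bounded-spectrum assumption A\ref{asu_eigen_sig}, this gives $\norm{\inTSigma-\Sigma^{-1}}{2}=O_p(w^d/\sqrt{np}+1/w)$, and appending the rate $\norm{\hDelta-\bDelta}{2}=O_p(\sqrt{s/n})$ produces exactly the composite $\kappa_{n,p}$.

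Feeding these bounds into $R$ and $R'$ and using $C_p\asymp\norm{\bDelta}{2}^2$ (a consequence of A\ref{asu_eigen_sig}), I obtain $R=O_p\bigl(C_p\kappa_{n,p}+\sqrt{C_p/n}\bigr)$ and an analogous bound for $R'$; after dividing by $\sqrt{C_p}$ the leading $\tfrac12\sqrt{C_p}$ dominates whenever $\sqrt{C_p}\,\kappa_{n,p}\to0$ and $nC_p\to\infty$, giving $\Psi_1=\tfrac12\sqrt{C_p}\,(1+o_p(1))$ and hence the stated sub-optimality $W(\hat\delta)\stackrel{P}{\to}1-\Phi(\sqrt{C_0}/2)$. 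For the optimality ratio in (1)--(2) I would invoke the Gaussian tail equivalence $1-\Phi(x)\sim\phi(x)/x$ to reduce $W(\hat\delta)/W_{OPT}\to1$ to the single requirement $\Psi_1^2-C_p/4\stackrel{P}{\to}0$; when $C_p\to C_0<\infty$ this is automatic, whereas when $C_p\to\infty$ it is exactly the extra hypothesis $C_p\kappa_{n,p}\to0$ that forces the relative error in $\Psi_1^2$ to vanish.

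The hardest part will be the optimality-ratio claim when $C_p\to\infty$. Because $W_{OPT}=1-\Phi(\sqrt{C_p}/2)$ decays exponentially, the ratio is extraordinarily sensitive to any error in $\Psi_1$, so the expansion of $\Psi_1^2$ must be carried to an accuracy at which only $C_p\kappa_{n,p}\to0$ rescues it; arranging the tapering-range scaling $w=O((\sqrt{np})^{\alpha/d})$, $w^{-1}=O(p^{-\delta})$ so that the $w^d/\sqrt{np}$ and $1/w$ contributions are simultaneously small inside this exponentially magnified comparison is the crux of the proof.
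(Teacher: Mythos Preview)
Your decomposition and the operator-norm control of $\inTSigma-\Sigma^{-1}$ match the paper's argument closely, and the Mills-ratio reduction for the optimality claims is also the same. The one real gap is in your treatment of the sample-mean fluctuation $(\bar\bY-\E\bar\bY)^T\inTSigma\hDelta$: you assert this is $O_p(\sqrt{C_p/n})$, but $\bar\bY$ and $\hDelta$ are both computed from the same training data, and a blind Cauchy--Schwarz bound yields only $(\xi^T\inTSigma\xi)^{1/2}(\hDelta^T\inTSigma\hDelta)^{1/2}=O_p(\sqrt{p/n})\sqrt{C_p}$, which is fatal in the regime $p/n\to C>0$. Your claimed rate \emph{can} be salvaged by the observation---which you should make explicit---that in the Gaussian model $\bar\bY$ is independent of the centered data $\bZ$, hence of $(\hDelta,\htheta,\tSigma)$ since these are functions of $\bZ$ alone; conditioning on the latter then gives conditional variance $\tfrac1n\hDelta^T\inTSigma\Sigma\inTSigma\hDelta\approx C_p/n$, which justifies $O_p(\sqrt{C_p/n})$.

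The paper handles this term by a genuinely different device: rather than independence, it invokes the exact support recovery $\hDelta_2=0$ from the sparsity conclusion of Theorem~\ref{Th_consist_PMLE}, block-decomposes $\inTSigma$ according to $\{1,\dots,s\}$ versus its complement, and bounds $\xi_1^T\tilde C_1\hDelta_1$ and $\xi_0^T\tilde C_{12}^T\hDelta_1$ separately by Cauchy--Schwarz, obtaining $\sqrt{C_p}\,O_p(\sqrt{s/n})$; the cross-block piece requires an auxiliary lemma that $\max_{i\le s}\sum_{k>s}\sigma_{ik}^2$ is bounded under the covariance-integrability assumption A\ref{asu_covfunc_int2}. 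Your independence route, once stated, is cleaner and slightly sharper, but the paper's block argument makes transparent that it is the \emph{sparsity} of $\hDelta$---not merely the norm rate $\norm{\hDelta-\bDelta}{2}=O_p(\sqrt{s/n})$ that you cite---which rescues the classifier in high dimensions.
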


\begin{proof}
See Supplementary Materials.
\end{proof}

Since $W_{OPT}=1-\Phi(\frac{\sqrt{C_p}}{2})\to 1-\Phi (\frac{\sqrt{C_0}}{2})$, Theorem \ref{ThErrPgeN} shows that with moderate conditions, the error rate of the proposed classifier goes to the unique limit $W_{OPT}$. Moreover, if $C_p\to C_0<\infty$ or $C_p$ goes to $\infty$ with a moderate rate, then $W(\hat{\delta})$ goes to $1-\Phi (\frac{\sqrt{C_0}}{2})$ with the same rate as $W_{OPT}$.

\section{Numerical Study}
\label{sec-simulation}
We conduct extensive simulation study to evaluate the performance of our proposed method compared to available generic procedures. Then we illustrate the methodology with real neuroimgaing data from ADNI.

\subsection{Simulation Analysis}
Assume that the spatial domain of interest $D$ in $\mathbb{R}^2$ is a $u\times u$ square area. We can observe signal at each lattice. Then we have $p=u \times u$ features for classification. The mean effects of the signal for class $\mathcal{C}_1$ and $\mathcal{C}_2$ are $\bmu_1$ and $\bmu_2$. We assume that $\bmu_1=(\textbf{1}_{10},\textbf{0}_{p-10})$ and $\bmu_2=\textbf{0}_p$, where $\textbf{1}_k$ is a $k$ dimension vector with all the elements equal to $1$ and $\textbf{0}_k$ is a $k$ dimension vector with all the elements equal to $0$. For example, if $u=4$ (hence $p=16$), then the corresponding spatial domain $D$, and the mean structure of $\bmu_1$ and $\bmu_2$ are shown as in Figure \ref{fig_domain}. In the simulation setting, we construct various simulation scenarios by letting $u=6,\ 20,\ $ and $35$, making $p=36,\ 400,\ 1225$ respectively.

 \begin{figure}
 \begin{scriptsize}
       \minipage{0.33\textwidth}
      \begin{tikzpicture}
        \draw(0,0) grid (4,4);
        \node at (0.5,0.5) {\large ${1}$};
        \node at (0.5,1.5) {\large ${2}$};
        \node at (0.5,2.5) {\large ${3}$};
        \node at (0.5,3.5) {\large ${4}$};
        \node at (1.5,0.5) {\large ${5}$};
        \node at (1.5,1.5) {\large ${6}$};
        \node at (1.5,2.5) {\large ${7}$};
        \node at (1.5,3.5) {\large ${8}$};
        \node at (2.5,0.5) {\large ${9}$};
        \node at (2.5,1.5) {\large ${10}$};
        \node at (2.5,2.5) {\large ${11}$};
        \node at (2.5,3.5) {\large ${12}$};
        \node at (3.5,0.5) {\large ${13}$};
        \node at (3.5,1.5) {\large ${14}$};
        \node at (3.5,2.5) {\large ${15}$};
        \node at (3.5,3.5) {\large ${16}$};
      \end{tikzpicture}
      \endminipage\hfill
       \minipage{0.33\textwidth}
       \begin{tikzpicture}
        \draw(0,0) grid (4,4);
        \node at (0.5,0.5) {\large $\color{red}{1}$};
        \node at (0.5,1.5) {\large $\color{red}{1}$};
        \node at (0.5,2.5) {\large $\color{red}{1}$};
        \node at (0.5,3.5) {\large $\color{red}{1}$};
        \node at (1.5,0.5) {\large $\color{red}{1}$};
        \node at (1.5,1.5) {\large $\color{red}{1}$};
        \node at (1.5,2.5) {\large $\color{red}{1}$};
        \node at (1.5,3.5) {\large $\color{red}{1}$};
        \node at (2.5,0.5) {\large $\color{red}{1}$};
        \node at (2.5,1.5) {\large $\color{red}{1}$};
        \node at (2.5,2.5) {\large ${0}$};
        \node at (2.5,3.5) {\large ${0}$};
        \node at (3.5,0.5) {\large ${0}$};
        \node at (3.5,1.5) {\large ${0}$};
        \node at (3.5,2.5) {\large ${0}$};
        \node at (3.5,3.5) {\large ${0}$};
      \end{tikzpicture}
    \endminipage\hfill  	
	 \minipage{0.33\textwidth}
      \begin{tikzpicture}
        \draw(0,0) grid (4,4);
        \node at (0.5,0.5) {\large $\color{red}{0}$};
        \node at (0.5,1.5) {\large $\color{red}{0}$};
        \node at (0.5,2.5) {\large $\color{red}{0}$};
        \node at (0.5,3.5) {\large $\color{red}{0}$};
        \node at (1.5,0.5) {\large $\color{red}{0}$};
        \node at (1.5,1.5) {\large $\color{red}{0}$};
        \node at (1.5,2.5) {\large $\color{red}{0}$};
        \node at (1.5,3.5) {\large $\color{red}{0}$};
        \node at (2.5,0.5) {\large $\color{red}{0}$};
        \node at (2.5,1.5) {\large $\color{red}{0}$};
        \node at (2.5,2.5) {\large ${0}$};
        \node at (2.5,3.5) {\large ${0}$};
        \node at (3.5,0.5) {\large ${0}$};
        \node at (3.5,1.5) {\large ${0}$};
        \node at (3.5,2.5) {\large ${0}$};
        \node at (3.5,3.5) {\large ${0}$};
      \end{tikzpicture}
     \endminipage
     \caption{ Two dimensional domain example. Left: 2D domain with p=$4\times4$; middle: $\bmu_1$; right: $\bmu_2$.}\label{fig_domain}
\end{scriptsize}
\end{figure}

{
For the spatial covariance, we generate the error terms from stationary and isotropic Gaussian process with zero mean. A widely used spatial covariance function $Mat\acute{e}rn$ covariance was defined in (\ref{matern}). We use a special case of $Mat\acute{e}rn$ covariance function when $\nu=\half$, which is the exponential covariance function. In the simulation, we set the variance scale as $\sigma^2=1$, the nugget effect as $c=0.2$ and the range parameter as $r=1, 2, ...,8,9$. Larger value of $r$ means longer range spatial dependency. Let $h$ be the Euclidean distance between two sites on the domain $D$. Specifically, on the domain $D\in\  \mathbb{R}^2$, the distance between site $i$ with coordinate $s_i=(x_i,y_i)$ and site $j$ with coordinate $s_j=(x_j,y_j)$ is $h_{ij}=\sqrt{(x_i-x_j)^2+(y_i-y_j)^2}$.}

We generate 100 groups of training sets with $n_1=n_2=30$ according to different setting of $\bmu_1,\ \bmu_2$ and $\Sigma(\btheta_0)$. For each training set, we estimate the parameters $\bmu_1$, $\bmu_2$ and $\btheta_0$ by MLE, tapered MLE, PMLE (penalized) and tapered PMLE. We also generate 100 groups of testing data sets with $n_1=n_2=100$ to test the classification performance. The average classification error rate was calculated from the 100 groups of testing data sets.

We name our classification method proposed in this paper as PMLE-LDA. For each choice of $p$, we compare the classification performance of PMLE-LDA with MLE-LDA, PREG-LDA, FAIR (Feature Annealed Independence Rule) and NB (Naive Bayes) and RPEC(Random-projection ensemble classification).
More specifically, PMLE-LDA is the classifier defined in (\ref{PLDA}); MLE-LDA uses $\hat{\bmu}_{1MLE}$, $\hat{\bmu}_{2MLE}$ and $\Sigma(\hat{\btheta}_{MLE})$ in LDA function for classification;  PREG-LDA uses $\hDelta=\hbeta^{(0)}$ and $\htheta=\htheta^{(0)}$  in LDA function, where $\hbeta^{(0)}$ and $\htheta^{(0)}$ are estimated in the first and second step in the procedure in section \ref{sec_PMLE_intr}.  NB \citep{bickel2004some} uses sample mean $\hat{\bmu}_1$, $\hat{\bmu}_2$ and diagonal of sample covariance $\hat{\Sigma}$ in LDA. This method is also known as independent rule(IR). FAIR \citep{fan2008high} assumes independence between variables and utilizes t-test for variable selection in NB. RPEC \citep{cannings2017random} is a very general method which is not designed for a specific classifier (e.g. LDA). It uses random projections to project the feature vectors from any classifier into a lower dimensional space. {To compare with the truth, we uses TRUE to denote that true mean $\bmu_1,\ \bmu_2$ and true covariance $\Sigma(\btheta_0)$ are used in LDA for classification.}

We recall the basic differences of these methods as follows. RPEC, FAIR and NB are the classification methods without considering spatial dependency, while MLE-LDA, PREG-LDA and PMLE-LDA are methods considering spatial dependency. MLE and NB are two methods without feature selection, while PREG-LDA, PMLE-LDA and FAIR are classification methods with feature selection. Moreover, PREG-LDA selects features by penalized regression without considering spatial dependency, while PMLE-LDA selects features by penalized maximum likelihood estimation with spatial dependency incorporated.

{
We also compared our method with four more methods, which are $l_1$-logistic regression, $l_1$-FDA, DSDA (Direct sparse discriminant analysis) and CATCH (Covariate-adjusted tensor classification in high-dimensions). $l_1$-logistic regression is one of the most basic and popular methods to solve a classification problem. $l_1$-LDA is proposed for penalizing the discriminant vectors in Fisher's discriminant problem~\citep{witten2011penalized}. DSDA generalizes classical LDA and formulates high-dimensional LDA into a penalized least squares problem~\citep{mai2012direct}. CATCH takes advantage of the tensor structure to significantly reduce the number of parameters and hence alleviate computation complexity~\citep{pan2019covariate}. 
}

{
The classification performance of all the methods is shown in Table \ref{TCexp36}. $r=1$, $r=5$ and $r=9$ means weak, moderate and strong spatial dependence respectively. Among all the methods, PMLE-LDA outperforms all the others. We have the following conclusions. First, when spatial dependency is weak ($r=1$), all the methods with or without spatial dependency do not have much difference. But when spatial dependency is strong ($r=9$), the methods with spatial dependency (MLE-LDA, PREG-LDA, PMLE-LDA) outperform the methods without spatial dependency (FAIR, NB, RPEC,l1-logistic, l1-LDA, DSDA and CATCH). Second, when the number of feature is small ($p=36$), the methods with or without feature selection have similar performance. But when number of feature is large ($p=400$ and $p=1225$), the methods with feature selection outperforms the methods without feature selection. PREG-LDA and PMLE-LDA outperforms MLE-LDA. FAIR outperforms NB. Third, CATCH outperforms l1-logistic, l1-LDA and DSDA, because it honors the tensor structure and preserves more information. But CATCH does not consider the spatial dependency, so the performance is not as good as PMLE-LDA. In the end, PMLE-LDA outperforms PREG-LDA, which implies the selection procedure considering spatial dependency outperforms feature selection without considering spatial dependency.}

The parameter estimation results are shown in Table \ref{TPexp}. It shows the parameters are all consistently estimated. We also compared the average number of variables selected from PMLE-LDA, PREG-LDA and FAIR in Table \ref{TVexp36}. The tunning parameter $\lambda$ in PREG-LDA and PMLE-LDA is selected by 10 fold cross validation by minimizing the classification error rate. Table \ref{TVexp36} shows that FAIR tends to select the fewest features. PMLE tends to select more features than PREG. But when spatial dependency is strong, PMLE produces smaller variance for feature selection and thus smaller misclassification rate.

Additionally, we investigate the performance of classification, parameter estimation and feature selection of tapered MLE-LDA, tapered PMLE-LDA and tapered PREG-LDA. Note that the tapering technique is applied in parameter estimation. The performances of classification and feature selection are similar with the ones without tapering (See Table \ref{TCexp36} and \ref{TVexp36}). However, the tapering technique estimate a larger range parameter $r$ when the spatial dependency is strong. This is consistent with the characteristic of tapering technique. To save space, these tables are omitted here but are available in \cite{li2018high}.
\begin{table}[]
\centering
\begin{scriptsize}
\caption{Comparisons of classification accuracy rate for simulations.}
\label{TCexp36}
\small
\resizebox{\columnwidth}{!}{
\begin{tabular}{c|ccccccccccc}
\hline
        & TRUE & MLE  & PREG & PMLE & FAIR & NB  &  RPEC  &l1-logistic & l1-LDA & DSDA & CATCH\\
\hline
        & p=36 &      &      &      & 	&		 &  &&&&       \\
\hline
r=1	&0.884(0.02)	&0.838(0.03)	&0.838(0.04)	&0.839(0.05)	&0.807(0.04)	&0.837(0.04)& 0.822(0.03) & 0.810(0.04) & 0.83(0.03) & 0.800(0.05) & 0.830(0.04)\\
r=5	&0.911(0.02)	&0.879(0.02)	&0.881(0.02)	&0.881(0.04)	&0.722(0.05)	&0.752(0.05)& 0.846(0.03) & 0.830(0.04) & 0.750(0.06)  & 0.830(0.04) & 0.840(0.04)\\
r=9	&0.936(0.02)	&0.913(0.02)	&0.915(0.02)	&0.917(0.02)	&0.709(0.05)	&0.746(0.05)& 0.883(0.03) & 0.870(0.03) & 0.750(0.08)  & 0.860(0.04) & 0.870(0.03)\\
\hline
       & p=400 &      &      &      & 	&		 &   &&&&     \\
\hline
r=1	&0.915(0.02)	&0.739(0.03)	&0.841(0.05)	&0.824(0.05)	&0.833(0.04)	&0.738(0.04)&0.728(0.04) & 0.800(0.04) & 0.740(0.04) & 0.790(0.05) & 0.830(0.04) \\
r=5	&0.953(0.01)	&0.814(0.03)	&0.895(0.05)	&0.924(0.03)	&0.743(0.04)	&0.598(0.05)& 0.678(0.04) & 0.770(0.05) & 0.590(0.06) & 0.790(0.05) & 0.780(0.06)\\
r=9	&0.971(0.01)	&0.863(0.02)	&0.926(0.04)	&0.955(0.02)	&0.716(0.05)	&0.575(0.05)& 0.707(0.04) &  0.820(0.05) & 0.560(0.06) & 0.830(0.05) & 0.820(0.04)\\
\hline
	& p=1225 &      &      &      & 	&		 &   &&&&      \\
\hline
r=1	&0.915(0.02)	&0.653(0.03)	&0.829(0.05)	&0.765(0.07)	&0.825(0.05)	&0.658(0.05) & 0.650(0.04) & 0.780(0.04) &0.660(0.04) & 0.790(0.05) & 0.810(0.04)\\
r=5	&0.951(0.02)	&0.715(0.03)	&0.858(0.07)	&0.902(0.04)	&0.741(0.05)	&0.556(0.05)& 0.578(0.04) & 0.730(0.05) & 0.530(0.04) & 0.740(0.05) & 0.730(0.04)\\
r=9	&0.968(0.01)	&0.761(0.03)	&0.888(0.07)	&0.930(0.03)	&0.718(0.05)	&0.539(0.05)&0.583(0.04) & 0.760(0.05) & 0.520(0.03) & 0.780(0.05) & 0.760(0.05) \\ 
\hline
\end{tabular}
}
\end{scriptsize}
\end{table}

\begin{table}[]
\centering
\begin{scriptsize}
\caption{Comparisons of parameter estimation for simulations. }
\label{TPexp}
\begin{tabular}{cc|c|cc|cc|cc}
\hline

        & &     &\multicolumn{2}{c}{p=36}&    \multicolumn{2}{c}{p=400}&         \multicolumn{2}{c}{p=1225} \\
\hline
        & &TRUE & MLE & PMLE                  & MLE & PMLE     & MLE&  PMLE\\
\hline
r=1	&r&	1	&1.01(0.16)	&1.04(0.16)	&1(0.05)	    &1(0.05)		    &1.00(0.02)	&1.00(0.01)\\
	&c&	0.2	&0.19(0.12)	&0.19(0.12)	&0.2(0.04)	&0.19(0.03)		&0.20(0.01)	&0.20(0.02)\\
	&$\sigma$&	1	&0.97(0.03)	&1(0.04)	&0.97(0.01)	&1(0.01) 	&0.97(0.01)	&1.00(0.01)\\
\hline										
r=5	&r&	5	&5.08(0.75)	&5.09(0.75)	&5.03(0.27)	&5.04(0.27)		&4.99(0.17)	&5.00(0.17)\\
	&c&	0.2	&0.2(0.03)	&0.2(0.03)	&0.2(0.01)	&0.2(0.01)		&0.20(0.004)	&0.20(0.004)\\
	&$\sigma$&	1	&0.97(0.08)	&1.01(0.08)	&0.97(0.03)	&1.01(0.04)&0.97(0.02)&1.00(0.02)\\
\hline																				
r=9	&r&	9	&9.17(1.58)	&9.1(1.57)	&9.1(0.67)	&9.12(0.69)		&8.96(0.45)	&8.97(0.45)\\
	&c&	0.2	&0.2(0.02)	&0.2(0.02)	&0.2(0.01)	&0.2(0.01)		&0.20(0.01)	&0.20(0.01)\\
	&$\sigma$&	1	&0.97(0.1)	&1.01(0.1)	&0.97(0.05)&1.01(0.05)&0.96(0.03)&1.00(0.03)\\
\hline																					
\end{tabular}
\end{scriptsize}
\end{table}

\begin{table}[]
\centering
\begin{scriptsize}
\caption{Comparisons of number of selected features for simulations.}
\label{TVexp36}
\begin{tabular}{c|cccccc}

\hline
        &\multicolumn{2}{c}{PMLE}&    \multicolumn{2}{c}{PREG}&\multicolumn{2}{c}{FAIR} \\
\hline
p=36      & selectedN  & correctN  & selectedN & correctN  & selectedN  & correctN  \\
\hline
r=1	&20.77(6.21)	&9.47(1.71)	&18.92(7.03)	&9.81(0.8)	&6.45(3.98)	&5.48(2.36)\\
r=5	&19.97(5.51)	&9.51(1.52)	&19.97(7.06)	&10(0)	&3.23(1.48)	&3.08(1.47)\\
r=9	&19.8(5.04)	&9.72(1.07)	&19.45(7.57)	&10(0)	&2.81(1.38)	&2.68(1.34)\\
\hline
p=400      &   &   &  &    &  & \\
\hline
r=1	&84.8(55.3)	&9.2(1.5)	&42(55.2)	&9.2(1.4)		    &20.8(15.1)	&7.2(2.1)\\
r=5	&73.7(32.2)	&9.8(0.6)	&50.7(73.3)	&9.5(1.2)			&11.3(10.6)	&5.3(2.8)\\
r=9	&52.5(21.9)	&9.9(0.5)	&63.7(98.5)	&9.7(1.1)			&6.8(7.6)	&4.1(2.7)\\
\hline
p=1225      &   &   &  &    &  & \\
\hline
r=1	&181.3(205.3)	&7.85(2.83)	&37.48(59.38)	&8.59(1.85)	&31.07(22.28)	&6.96(1.73)\\
r=5	&174.5(126.3)	&9.48(1.53)	&76.88(191.66)	&8.89(1.84)	&26.15(20.55)	&6.42(2.83)\\
r=9	&115.1(58.9)  	&9.81(0.88)	&79.21(192.62)	&9.1(1.81)	&15.28(14.02)	&5.28(3.41)\\
\hline
\end{tabular}
\end{scriptsize}
\end{table}

{
Finally, we investigate the simulation results when covariance is mis-specified. 
More specifically, we use Gaussian covariance function (i.e. $Mat\acute{e}rn$ covariance when $\nu\to\infty$) to generate the data. Then we use exponential covariance function (i.e. $Mat\acute{e}rn$ covariance when $\nu=\half$) to estimate the structure and complete classification. Both of them are $Mat\acute{e}rn$ covariance with different smoothness parameters. Table \ref{TCmiscov36} shows the classification performance if the covariance are misspecified. We generate the data using Gaussian covariance function with $\sigma^2=1,\ c=0.2$, and $r=1,2,...,9$ in exponential covariance function.
It shows that with misspecified covariance, the PMLE-LDA classification method has the best performance,  even when the spatial dependency is strong ($r=5$ and $r=9$). Therefore, the proposed method is robust to the mis-specification of covariance. }

\begin{table}[]
\centering
\begin{scriptsize}
\caption{Comparisons of classification accuracy rate for simulations when covariance is mis-specified.}
\label{TCmiscov36}
\begin{tabular}{c|ccccccc}
\hline
        & TRUE & MLE  & PREG & PMLE & FAIR & NB & RPEC     \\
\hline
        & p=36 &      &      &      & 	&		&          \\
\hline
r=1	&0.887(0.02)	&0.841(0.03)	&0.84(0.04)	&0.843(0.05)	&0.826(0.05)	&0.854(0.05)&0.830(0.03)\\
r=5	&0.939(0.02)	&0.922(0.02)	&0.922(0.02)	&0.925(0.03)	&0.709(0.05)	&0.735(0.05)&0.895(0.03)\\
r=9	&0.969(0.01)	&0.956(0.01)	&0.954(0.02)	&0.959(0.02)	&0.705(0.05)	&0.736(0.05)&0.939(0.02)\\
\hline
        & p=400 &      &      &      & 	& &\\
\hline
r=1	&0.911(0.02)	&0.731(0.03)		&0.839(0.05)	&0.825(0.05)		&0.845(0.05)	 &0.753(0.03)&0.743(0.04)\\
r=5	&0.976(0.01)	&0.879(0.02)	    &0.938(0.03)	&0.967(0.02)		&0.726(0.04)	 &0.585(0.04)&0.669(0.04)\\
r=9	&0.99(0.01)	&0.936(0.02)		&0.968(0.02)	&0.981(0.01)	    &0.697(0.06)	&0.558(0.04)&0.748(0.05)\\
\hline
 & p=1225 &      &      &      & 	& &\\
\hline
r=1	&0.914(0.02)	&0.642(0.04)	&0.818(0.05)	&0.766(0.06)	&0.832(0.05)	&0.666(0.05)&0.654(0.04)\\
r=5	&0.977(0.01)	&0.778(0.03)	&0.895(0.07)	&0.952(0.02)	&0.725(0.05)	&0.543(0.05)&0.562(0.04)\\
r=9	&0.989(0.01)	&0.848(0.03)	&0.915(0.07)	&0.962(0.03)	&0.707(0.06)	&0.533(0.06)&0.573(0.04)\\
\hline
\end{tabular}
\end{scriptsize}
\end{table}

\section{Real data Application}
\label{sec-data}
{
Alzheimer’s disease (AD) is a neuro-degenerative disease and the most common form of dementia, affecting many millions around the world. Classification of AD patients is a crucial task in dementia research.  To apply our classification method, we obtain the data from the Alzheimer's disease Neuroimaging Initiative (ADNI) database (http:// www.loni.ucla.edu/ADNI), which was launched in 2004. ADNI aims to improve clinical trials for prevention and treatment of Alzheimer's disease (AD). With the interest of promoting consistency in data analysis, the ADNI Core has created standardized analysis sets of the structured MRI scans comprising only image data that have passed quality control (QC) assessments. The assessments were conducted at the Aging and Dementia Imaging Research laboratory at the Mayo Clinic \citep{jack2008alzheimer}. In this study, we used T1-weighted MRI images from the collection of standardized datasets. The description of the standardized MRI imaging from ADNI can be found in \url{http://adni.loni.usc.edu/methods/mri-analysis/adni-standardized-data/} and \cite{wyman2013standardization}.}

According to \cite{jack2008alzheimer}, the images were generated using magnetization prepared rapid gradient echo (MPRAGE) or equivalent protocols with varying resolutions (typically 1.0 $\times$ 1.0 mm in plane spatial resolution and 1.2 mm thick sagittal slices with $256\times256\times166$ voxels). The images were then pre-processed according to a number of steps detailed in \cite{jack2008alzheimer} and \url{http://adni.loni.usc.edu/methods/mri-analysis/mri-pre-processing/}, which corrected gradient non-linearity, intensity inhomogeneity and phantom-based distortion. In addition, the pre-processed imaging were processed by FreeSurfer for cortical reconstruction and volumetric segmentation by Center for Imaging of Neurodegnerative Diseases, UCSF.

In this paper, we obtain images from ADNI-1 subjects obtained using 1.5 T scanners at screening visits. We use the first time point if there are multiple images of the same subject acquired at different times. 187 subjects diagnosed as Alzheimer's disease at screening visits and 227 healthy subjects at screening visits are contained in this study. The total number of subjects is 414. Details of the subjects can be found in Table \ref{T_subject_all}. {The authors used ADNI data in their previous research work \citep{zhang2019analysis,li2019early}. Please refer these papers for other information about the data.}

\begin{table}[]
\centering
\caption{Subjects characteristics}
\label{T_subject_all}
\begin{tabular}{l l l l }
\hline
    &    AD      &   NL   &   p-value\\
    \hline
n	             &$187$				&	$227$		&			\\
Age (Mean$\pm$sd)&	$75.28\pm7.55$	& $75.80\pm4.98$& $0.4168$\\
Gender (F/M)    	&$88/99$&	$110/117$	&$0.813$\\
MMSE (Mean$\pm$sd)&	$23.28\pm2.04$	&$29.11\pm1.00$&	$<1e-15$\\
\hline
\end{tabular}
\begin{flushleft}
\small{Key: AD, subjects with Alzheimer's disease ; NL, healthy subjects; Age, baseline age; MMSE, baseline Mini-Mental State Examination.}
\end{flushleft}
\end{table}

After retrieving the pre-processed imaging data from ADNI, an R package ANTsR is applied for imaging registration. Then we use ``3dresample'' command by AFNI software (\cite{cox1996afni}) to adjust the resolution and reduce the total number of voxels in the images to $18\times22\times18$ voxels. Take $x$ axis and $y$ axis for horizontal plane, $x$ axis and $z$ axis for coronal plane and $y$ axis and $z$ axis for sagittal plane.

After removing the voxels with zero signal for most of the subjects (more than 409 subjects), we have $1971$ voxels left in use. The distance between each pair of voxels can be calculated by their coordinates. For example, there are two voxels $s_1$, $s_2$ with coordinate $s_1=(x_1,y_1,z_1)$ and $s_2=(x_2,y_2,z_2)$. Then the Euclidean distance between $s_1$ and $s_2$ is defined by: $d (s_1,s_2)=\sqrt{(x_1-x_2)^2+(y_1-y_2)^2+(z_1-z_2)^2}$. Other distances can also be used in our method. 

We randomly sample 100 from the 187 AD subjects and 100 from the 227 health subjects as the training set. Then there are 87 AD subjects and 127 healthy subjects left. The testing set includes the 87 AD subjects and a random sample of 87 from the 127 healthy subjects. Details of the subjects in the training and testing set are provided in Table \ref{T_subject_testtraing}.

\begin{table}[]
\centering
\caption{Subjects characteristics of training and testing set}
\label{T_subject_testtraing}
\begin{tabular}{l l l l l}
\hline
   & &    training set     &   testing set  &   p-value\\
    \hline
AD&n	             &$100$				    &	$87$		    &			\\
  &Age (Mean$\pm$sd) &	$75.64\pm7.39$	& $74.85\pm7.75$& $0.478$\\
  &Gender (F/M)    	 &  $47/53$         &  $41/46$ 	    &$0.999$\\
  &MMSE (Mean$\pm$sd)&	$23.22\pm2.08$	&  $23.36\pm2.01$&	$0.649$\\
\hline
NL&n	             &$100$				&	$87$		&			\\
  &Age (Mean$\pm$sd)&	$75.99\pm5.39$	& $75.34\pm4.56$& $0.3723$\\
  &Gender (F/M)    	&$42/58$&	$50/37$	&$0.05$\\
  &MMSE (Mean$\pm$sd)&	$29.06\pm1.04$	&$29.09\pm1.01$&	$0.8307$\\
\hline
\end{tabular}
\begin{flushleft}
\small{Key: AD, subjects with Alzheimer's disease ; NL, healthy subjects; Age, baseline age; MMSE, baseline Mini-Mental State Examination.}
\end{flushleft}
\end{table}

We assume the exponential correlation structure among voxels. Then we apply the PMLE-LDA method proposed in this research for classification. First, the parameter are estimated by PMLE: $r=61.66, c=0.954, \sigma^2=223.09$ and 5 voxels are selected for classification from training data. Then we plug-in the estimates into the classification function and obtain classification results on the testing data.

The classification accuracy rate of PMLE-LDA is listed in Table \ref{TCReal}. We also list the classification accuracy from other methods. It shows the classification accuracy rate of our method is about $77.0\%$, which is superior to other comparable methods (MLE-LDA: $69.0\%$, PREG-LDA: $75.9\%$, FAIR: $56.9\%$, NB: $66.1\%$ and RPEC: $67.8\%$).

\begin{table}[]
\centering
\begin{scriptsize}
\caption{Classification performance for voxel level MRI data. We split the data into Training sets (200 samples) and testing sets (174 samples) as described.}
\label{TCReal}
\begin{tabular}{ccccccc}

\hline
                         & MLE      & PREG & PMLE  & FAIR &NB &RPEC\\
\hline
 Accuracy                &0.690&  0.759& 0.770 &  0.569 &0.661   & 0.678\\
 No. of training err     & 37  &  52   &  47   &  68    &  43 & 21\\
 No. of testing err      & 54  &  42   &  40   &  75    &  59 & 56\\
 No. of selected voxels  &1971 &  26    &  4    & 16       &1971& --\\
\hline
\end{tabular}
\end{scriptsize}
\end{table}

To show the robustness of the proposed method, we repeat the above procedure for 100 times and calculated the average accuracy and standard deviation in Table \ref{TCReal100}. Although we know that MRI data is noisy in general and heterogeneous across subjects, our method still provide reasonably higher classification rates with small standard deviations.  

\begin{table}[]
\centering
\begin{scriptsize}
\caption{Average classification performance for voxel level MRI data. We repeated the training/testing procedure for 100 times.}
\label{TCReal100}
\begin{tabular}{ccccccc}

\hline
            & MLE    & PREG  & PMLE  & FAIR  &NB   &RPEC\\
\hline
Accuracy 	&0.704	&0.704	&0.703	&0.577	&0.691& 0.695\\
(sd)        &0.036	&0.035	&0.035	&0.054	&0.034  & 0.032 \\
No. of selected voxels  &1971 &  1076.30    &  1043.35    & 6       &1971& --\\
(sd)        & --    & 35.61 & 35.5 & 0.19 & -- & --\\
\hline
\end{tabular}
\end{scriptsize}
\end{table}

\section{Conclusion and discussion}

{
The paper contains new developments for the classification problem of multivariate Gaussian variables with spatial structures. We generalize the classical LDA by assuming spatially dependent structures in the covariance and imposing sparsity on the feature difference. In particular, by using the $Mat\acute{e}rn$ covariance function, the $p\times p$ dimensional covariance is parameterized by only three univariate parameters. By utilizing the additional spatial location information and constructing the data-driven spatial correlation structure in the data, the new spatial LDA method is expected to be more efficient than other sparse LDA methods. Under this framework, we adopt Penalized Maximum Likelihood Estimation (PMLE) method to perform parameter estimation. Most importantly, we show in theory that the proposed method can not only provide consistent results of parameter estimation and feature selection, but also achieve an asymptotically optimal classifier for high dimensional data with spatial structures. 

Brain imaging data are usually matrix-variate or tensor-variate observations. In this paper, we adopt one type of data arrangements that vectorize the tensor data into vectors and stack covariates along with the long vector to apply vector methods. This will inevitably increase the dimension of covariance matrix and thus lead to high computational burden regarding matrix operations. We can increase the speed by avoiding matrix inverse calculation~\citep{bhattacharya2016fast} or directly estimating sample covariance instead of using MLE. In the future, we plan to extend our current computing strategy to tensor methods (e.g. tensor LDA) that significantly reduce the number of parameters and hence alleviate computation complexity~\citep{li2017parsimonious, pan2019covariate}.

In general, nonparametric methods do not have the assumption on data distribution. It overcomes the limitation of parametric LDA, which can not perform well in non-Gaussian data. 
But intrinsically, these two types of methods are demanded to solve the same kind of problem, which is to resemble the changing pattern of brain regions. Mostly, they use average trend to denote the deterioration of brain function and use network structure to denote synergies of the functional connectivity. To model the average trend, typical methods such as kernel function, spline, wavelet and Fourier transformation are exploited to transform the original coordinates to a new space with better properties. To model the network structure, typical methods such as nonparametric graphical model, random matrices and Fourier transform of images are exploited to decompose the nodes into new clusters relevant information. 
However, one disadvantage of nonparametric method is that the results are not straightforward to interpret, as they are wrapped up in a ``black box''. 
In the future, we plan to extend our current method to nonparametric models by using Fourier transformation. We will work more on how to include inverse transformation into the model so that the results are interpretable in the original space.

}

\section*{Remarks on the assumptions}
\label{remarks_asu}
\begin{itemize}
\item[\textbf{Remarks on A\ref{asu_eigen_sig}}]: The first part of A\ref{asu_eigen_sig} is the same as that in \cite{mardia1984maximum}.   We now verify that the covariance matrix derived from $Mat\acute{e}rn$ covariance function satisfy the first part of A\ref{asu_eigen_sig}. First, for symmetric matrix, we have
\begin{align*}
\lambda_{\max}(\Sigma)\le(\norm{\Sigma}{1})^{1/2}(\norm{\Sigma}{\infty})^{1/2}=\norm{\Sigma}{\infty}=\max_{i}\sum_{j}^{p}\gamma(h_{ij})
\end{align*}
Using the same notation in the proof of Lemma~\ref{lem_tapermatrix}, for each $i$
\begin{align}\label{remark(0)}
\sum_{j=1}^{p}\gamma(h_{ij})\le \sum_{m=0}^{\infty}\sum_{j\in B_m^i}r(h_{ij})\le K\rho\sum_{m=0}^{\infty}m^{d-1}\delta^d \max_{j\in B_m^i}r(h_{ij})\le K\rho\int_{0}^{\infty}h^{d-1}r(h)dh
\end{align}
Recall that $Mat\acute{e}rn$ covariance function has the following expansion at $h=0$:
\begin{align*}
\gamma(h;\sigma^2,c,\nu,r)=\sigma^2(1-c)(1-b_1h^{2\nu}+b_2h^2+O(h^{2+2\nu})) \ as\ h\to 0
\end{align*}
where $b_1$ and $b_2$ are explicit constants depending only on $\nu$ and $r$. Thus for $\epsilon>0$,
\begin{align}\label{remark(1)}
\int_{0}^{\epsilon}h^{d-1}\gamma(h)dh=O(\int_{0}^{\epsilon}h^{d-1}dh)=O(\epsilon^d/d)\to 0 \ as\ \epsilon\to 0
\end{align}

Also, since $K_{\nu}(h)\propto e^{-h}h^{-\half}(1+O(\frac{1}{h}))$ as $h\to \infty$, there exist a constant $K$, for any $C$ sufficiently large, we have:
\begin{align}\label{remark(2)}
\int_{C}^{\infty}h^{d-1}\gamma(h)dh\le K \int_{0}^{\infty} h^{d-1+v-\half} e^{-h}dh=\Gamma(d+v-\half)<\infty
\end{align}

\ref{remark(1)} and \ref{remark(2)} lead to $\int_{0}^{\infty}h^{d-1}\gamma(h)dh<\infty$. Let $p\to\infty$ in \ref{remark(0)}, we have $\lim sup_{p\to \infty} \lambda_{\max}(\Sigma)<\infty$ if $\Sigma$ is derived from $Mat\acute{e}rn$ covariance function.

Now consider the second part of A\ref{asu_eigen_sig}. A\ref{asu_domain} assumes increasing domain framework. \cite{bachoc2016smallest} showed that under A\ref{asu_domain} and some weak assumptions on the matrix covariance function, if the spectral density of the covariance function is positive, the smallest eigenvalue of the covariance matrix is asymptotically bounded away from zero. Most standard covariance function such as $Mat\acute{e}rn$ covariance function satisfy those assumptions hence satisfy the second part of A\ref{asu_eigen_sig}.

\item[\textbf{Remarks on A\ref{asu_covnorm_inv} and A\ref{asu_tij_lim}}]: A\ref{asu_covnorm_inv} and A\ref{asu_tij_lim} are the same as the assumptions in \cite{mardia1984maximum}. $\norm{\Sigma_k}{F}=\sum_{i,j=1}^{p}\gamma_k^2(h_{ij};\btheta)$, where $\gamma_k(h_{ij};\btheta)=\der{\gamma(h_{ij};\btheta)}{\theta_k}$, $k=1,2,...,q$ and $\btheta$ is a $k$ dimensional parameter. We now verify that $Mat\acute{e}rn$ covariance function satisfy A\ref{asu_covnorm_inv} for fixed $\nu$. For $Mat\acute{e}rn$ covariance function with fixed $\nu$, we have
\begin{align} \label{remark_matern_der}
\der{\gamma(h)}{\sigma^2}&=\frac{2^{1-\nu}}{\Gamma(\nu)}(h/r)^{\nu}K_{\nu}(h/r)(1-c)\\\nonumber
\der{\gamma(h)}{c}&=-\sigma^2\frac{2^{1-\nu}}{\Gamma(\nu)}(h/r)^{\nu}K_{\nu}(h/r)\\\nonumber
\der{\gamma(h)}{(1/r)}&=\sigma^2(1-c)\frac{2^{1-\nu}}{\Gamma(\nu)}h(h/r)^{\nu}(2\frac{\nu}{h/r}K_{\nu}(h/r)-K_{\nu-1}(h/r))
\end{align}
It is easy to show that for each $k$, exist a constant $\epsilon>0$ independent of $n,p$, for each $i$, there's $j$ such that $\gamma_k(h_{ij})>c$. As a result, $\norm{\Sigma_k}{F}=\sum_{i,j=1}^{p}\gamma_k^2(h_{ij};\btheta)\ge \sum_{i=1}^{p}\epsilon=p\epsilon$. Therefore $\norm{\Sigma_k}{F}^{-1}=O_p(p^{-1})$.

\item[\textbf{Remarks on A\ref{asu_cov_star}}]: We can also verify that the $Mat\acute{e}rn$ covariance function with fixed $\nu$ satisfy A\ref{asu_star_coveigen} and A\ref{asu_star_coveigen2}. Similar to the procedure in the remarks on A\ref{asu_eigen_sig}, it is sufficient to verify for any $\btheta\in \Theta$, $\gamma_k(h;\btheta)$ and $\gamma_{kj}(h;\btheta)$ belong to the function space:
\begin{align*}
\Im=\{f(x):\int_{0}^{\infty}f(x)x^{d-1}dx<\infty\}
\end{align*}
where $d\ge 1$ is the dimension of the domain. We have the first-order partial derivative of $Mat\acute{e}rn$ covariance function in (\ref{remark_matern_der}). The second-order partial derivative of $Mat\acute{e}rn$ covariance function is as follows:
\begin{align}\label{remark_matern_der2}
&\frac{\partial^2{\gamma(h)}}{(\partial^2 \sigma^2)}=0\\\nonumber
&\frac{\partial^2{\gamma(h)}}{\partial \sigma^2 \partial c}=-\frac{2^{1-\nu}}{\Gamma(\nu)}(h/r)^{\nu}K_{\nu}(h/r)\\\nonumber
&\frac{\partial^2{\gamma(h)}}{\partial \sigma^2 \partial (1/r)}=(1-c)\frac{2^{1-\nu}}{\Gamma(\nu)}h(h/r)^{\nu}(2\frac{\nu}{h/r}K_{\nu}(h/r)-K_{\nu-1}(h/r))\\\nonumber
&\frac{\partial^2{\gamma(h)}}{\partial c^2}=0\\\nonumber
&\frac{\partial^2{\gamma(h)}}{\partial c \partial (1/r)}=-\sigma^2\frac{2^{1-\nu}}{\Gamma(\nu)}h(h/r)^{\nu}(2\frac{\nu}{h/r}K_{\nu}(h/r)-K_{\nu-1}(h/r))\\\nonumber
&\frac{\partial^2{\gamma(h)}}{\partial^2(1/r)}=\sigma^2(1-c)\frac{2^{1-\nu}}{\Gamma(\nu)}[(h/r)^{\nu-2}h^2K_{\nu}(h/r)(2\nu-1)2\nu\\\nonumber
&\  \ \ \ \ \ \ \ -(4\nu+1)(h/r)^{\nu-1}h^2K_{\nu+1}(h/r)-(h/r)^{\nu}h^2K_{\nu+2}(h/r)]
\end{align}
Note that the covariance function and its first-order and second-order partial derivatives are linear combinations of a Bessel function of $h$ times a polynomial of $h$. Similar to proving $\int_{0}^{\infty}h^{d-1}\gamma(h)dh<\infty$ in (\ref{remark(0)}), we have $\gamma_k(h;\btheta)\in \Im$ and  $\gamma_{kj}(h;\btheta)\in \Im$. Hence A\ref{asu_star_coveigen} and A\ref{asu_star_coveigen2} are satisfied. By similar procedure, we can verify that $Mat\acute{e}rn$ covariance function also satisfy A\ref{asu_covfunc_int} and A\ref{asu_covfunc_int2}.

\end{itemize}

\clearpage
\bibliography{Li_Reference}
\bibliographystyle{cbe}

\clearpage
\newpage

\title{Supplementary Material for ``High Dimensional Classification for Spatially Dependent Data with Application to Neuroimaging''}

\author{{Yingjie Li$^\ast$, Liangliang Zhang$^\dag$ and Tapabrata Maiti$^\ast$}\\
{\small \em $^\ast$Department\ of\ Statistics\ and\ Probability,\
  Michigan\ State\ University,\ East Lansing,\ Michigan,\ U.S.A.}\\
{\small \em $^\dag$Department\ of\ Biostatistics,\ University\ of\ Texas\ MD\ Anderson\ Cancer\ Center,\ Houston,\ Texas,\ U.S.A.}\\
{\small \em liangliangzhang.stat@gmail.com}}

\date{}
\maketitle

\setcounter{page}{1}

\setcounter{section}{0}

\section{Proofs for classification using MLE}
\begin{lemma}\label{lemma_sumEps}
Let $\beps$ be $p$-dimensional vectors and $\beps\sim N(0,\Sigma)$, where $\Sigma$ is a $p\times p$ positive definite covariance matrix. For $m$-dimension vector $\bu$ with $\norm{\bu}{2}=\sqrt{\bu^T\bu}=C$ and $p\times m$ matrix $\bX_i$, we have:
\begin{align}
\abs{\beps^T\bX u}=O_p(\sqrt{tr(\bX^T\Sigma\bX)}\norm{\bu}{2})
\end{align}
\end{lemma}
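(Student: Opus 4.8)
The plan is to reduce the whole statement to the elementary fact that a mean-zero univariate Gaussian with standard deviation $\sigma$ is $O_p(\sigma)$, and then to control the relevant standard deviation by the quantity appearing on the right-hand side. The matrix called $\bX_i$ in the hypothesis is just a generic $p\times m$ matrix, which I write as $\bX$ to match the displayed formula.

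First I would observe that $\beps^T\bX\bu$ is a scalar linear functional of the Gaussian vector $\beps$. Setting $a=\bX\bu\in\IR^p$, we have $\beps^T\bX\bu=a^T\beps$, so that $\beps^T\bX\bu$ is a mean-zero Gaussian random variable with variance $a^T\Sigma a=\bu^T\bX^T\Sigma\bX\bu$. This distributional identity is the only place the Gaussianity of $\beps$ is used.

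Second, I would bound this variance by the target quantity. The matrix $M:=\bX^T\Sigma\bX$ is positive semidefinite because $\Sigma$ is positive definite, so $\bu^T M\bu\le \lambda_{max}(M)\,\norm{\bu}{2}^2$. Since every eigenvalue of a positive semidefinite matrix is nonnegative, the largest eigenvalue is dominated by the sum of all eigenvalues, i.e. $\lambda_{max}(M)\le tr(M)=tr(\bX^T\Sigma\bX)$. Combining these gives $\mathrm{Var}(\beps^T\bX\bu)=\bu^T\bX^T\Sigma\bX\bu\le tr(\bX^T\Sigma\bX)\,\norm{\bu}{2}^2$.

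Third, writing $\sigma^2:=\mathrm{Var}(\beps^T\bX\bu)$, the normalized variable $\beps^T\bX\bu/\sigma$ is standard normal and hence tight, so $\abs{\beps^T\bX\bu}=O_p(\sigma)$; together with the bound $\sigma\le \sqrt{tr(\bX^T\Sigma\bX)}\,\norm{\bu}{2}$ this yields $\abs{\beps^T\bX\bu}=O_p(\sqrt{tr(\bX^T\Sigma\bX)}\,\norm{\bu}{2})$, as claimed. I do not expect any genuine obstacle here; the only points requiring care are the correct direction of the eigenvalue–trace inequality and the observation that positive definiteness of $\Sigma$ is exactly what guarantees $M$ is positive semidefinite, so that $\lambda_{max}(M)\le tr(M)$ is legitimate.
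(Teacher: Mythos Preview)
Your proof is correct and follows essentially the same approach as the paper: bound the second moment of the scalar $\beps^T\bX\bu$ by $tr(\bX^T\Sigma\bX)\,\norm{\bu}{2}^2$ and conclude $O_p$ from that. The only cosmetic differences are that the paper reaches the variance bound via Cauchy--Schwarz on the vectors $\bX^T\beps$ and $\bu$ (giving $(\beps^T\bX\bu)^2\le\beps^T\bX\bX^T\beps\cdot\norm{\bu}{2}^2$ before taking expectation), whereas you compute the variance exactly and then use $\lambda_{max}(M)\le tr(M)$; and the paper closes with Chebyshev rather than invoking Gaussian tightness directly. Your route is slightly cleaner, the paper's Chebyshev step is marginally more general (it needs only finite second moments, not normality), but the substance is identical.
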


\begin{proof}
Since $E(\beps^T\bX)=0$,
\begin{align}
E(\beps^T\bX \bu)^{2}\le& \left[E(\beps^T\bX \bX^T\beps)\right]^{1/2}\norm{\bu}{2}
=\left[E(tr(\beps^T\bX \bX^T\beps))\right]^{1/2}\norm{\bu}{2}\\\nonumber
=&tr(\bX^T\Sigma\bX)\norm{\bu}{2}
\end{align}
By Chebyshev's inequality, for any $M$
\begin{align}
P(\frac{\beps^T\bX \bu}{\sqrt{\norm{\bu}{2}^2tr(\bX^T\Sigma\bX)}}>M)\le \frac{E(\beps^T\bX u)^{2}}{M^2 \norm{\bu}{2}^2tr(\bX^T\Sigma\bX)}=\frac{1}{M^2}
\end{align}
Thus for any $\epsilon>0$, exits $M$ large enough such that
\begin{align}
P(\frac{\abs{\beps^T\bX \bu}}{\sqrt{\norm{\bu}{2}^2c(n)tr(\bX^T\Sigma\bX)}}>M)<\epsilon
\end{align}
This lead to $\abs{\beps^T\bX u}=O_p(\sqrt{tr(\bX^T\Sigma\bX)}\norm{\bu}{2})$.
\end{proof}

\begin{lemma}\label{lemma_sumEps_sq}
Let $\beps_i (i=1,2,...,n)$ be $p$-dimensional vectors and $\beps_i\sim N(0,c(n)\Sigma)$, where $c(n)$ is a function of $n$ and $\Sigma$ is a $p\times p$ positive definite covariance matrix with $\lambda(\Sigma)<\infty$. For a $p\times p$ matrix $A$, we have
$$\sum_{i=1}^{n}\left[\beps_i^T A \beps_i-c(n)tr(A\Sigma)\right]=O_p(c(n)\sqrt{n}\norm{A}{F})$$
\end{lemma}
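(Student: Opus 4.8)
The plan is to bound the centered sum through its second moment and Chebyshev's inequality, in the same spirit as the proof of Lemma~\ref{lemma_sumEps}. First I would record that each summand is already centered: since $\beps_i\sim N(0,c(n)\Sigma)$, the standard identity $E[\beps_i^T A \beps_i]=c(n)\,tr(A\Sigma)$ holds, so that $S_n:=\sum_{i=1}^{n}\left[\beps_i^T A \beps_i - c(n)tr(A\Sigma)\right]$ has mean zero. Because $\beps_i^T A \beps_i=\beps_i^T A_s \beps_i$ with $A_s=\tfrac12(A+A^T)$ the symmetric part (the quadratic form only sees the symmetric part), I would replace $A$ by $A_s$ throughout. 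This reduction is harmless, since $tr(A\Sigma)=tr(A_s\Sigma)$ and $\norm{A_s}{F}\le\norm{A}{F}$, and it lets me invoke the Gaussian quadratic-form variance formula, which requires a symmetric matrix.

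Next, assuming the $\beps_i$ are independent (as in the sampling model), I would write $\mathrm{Var}(S_n)=\sum_{i}\mathrm{Var}(\beps_i^T A_s \beps_i)=n\,\mathrm{Var}(\beps_1^T A_s \beps_1)$. For a centered Gaussian vector with covariance $V=c(n)\Sigma$, Isserlis'/Wick's theorem gives $\mathrm{Var}(\beps_1^T A_s \beps_1)=2\,tr(A_s V A_s V)=2c(n)^2\,tr(A_s\Sigma A_s\Sigma)$. The key algebraic step is then to control this trace by $\norm{A}{F}^2$: factoring $\Sigma=\Sigma^{1/2}\Sigma^{1/2}$ and using the cyclic property, $tr(A_s\Sigma A_s\Sigma)=\norm{\Sigma^{1/2}A_s\Sigma^{1/2}}{F}^2$, and by submultiplicativity of the Frobenius norm against the spectral norm, $\norm{\Sigma^{1/2}A_s\Sigma^{1/2}}{F}\le\lambda_{\max}(\Sigma)\norm{A_s}{F}\le\lambda_{\max}(\Sigma)\norm{A}{F}$. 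Since $\lambda_{\max}(\Sigma)$ is bounded (the hypothesis $\lambda(\Sigma)<\infty$, consistent with A\ref{asu_eigen_sig}), this yields $\mathrm{Var}(S_n)\le 2c(n)^2\lambda_{\max}(\Sigma)^2\, n\,\norm{A}{F}^2=O(c(n)^2 n\norm{A}{F}^2)$.

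Finally I would apply Chebyshev's inequality to $S_n/(c(n)\sqrt{n}\norm{A}{F})$: for any $\eps>0$ there is $M$ with $P(\abs{S_n}>M c(n)\sqrt{n}\norm{A}{F})\le \mathrm{Var}(S_n)/(M^2 c(n)^2 n\norm{A}{F}^2)\le 2\lambda_{\max}(\Sigma)^2/M^2<\eps$, which is exactly the claim $S_n=O_p(c(n)\sqrt{n}\norm{A}{F})$. The only genuinely delicate points are the Gaussian fourth-moment (variance) identity for the quadratic form and the trace inequality $tr(A_s\Sigma A_s\Sigma)\le\lambda_{\max}(\Sigma)^2\norm{A}{F}^2$; everything else is routine bookkeeping. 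A secondary point to confirm is the independence of the $\beps_i$ in the setting where the lemma is applied: if they were merely uncorrelated or weakly dependent, the variance of $S_n$ would pick up the off-diagonal terms $cov(\beps_i^T A_s\beps_i,\beps_j^T A_s\beps_j)$, but under the model these vanish and the above argument goes through unchanged.
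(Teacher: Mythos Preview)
Your proposal is correct and follows essentially the same route as the paper: center the quadratic forms, compute the variance of the sum using the Gaussian fourth-moment identity (the paper derives $\mathrm{Var}(\beps_1^T A\beps_1)=2c(n)^2 tr(\Sigma A\Sigma A)$ by an explicit diagonalization, whereas you invoke Isserlis' theorem), bound $tr(A\Sigma A\Sigma)\le\lambda_{\max}(\Sigma)^2\norm{A}{F}^2$, and finish with Chebyshev. Your symmetrization step $A\mapsto A_s$ is a clean touch that the paper leaves implicit (its diagonalization of $\Sigma^{1/2}A\Sigma^{1/2}$ tacitly assumes symmetry), but otherwise the arguments coincide.
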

\begin{proof}
Since $E(\beps_i^T A \beps_i)=tr(c(n)A\Sigma)$, we have
\begin{align}
E(\sum_{i=1}^{n}\beps_i^T A\beps_i-c(n)tr(A\Sigma))^2=\sum_{i=1}^{n}E(\beps_i^T A\beps_i-c(n)tr(A\Sigma))^2=\sum_{i=1}^{n}E(\beps_i^T A\beps_i)^2-nc^2(n)tr^2(A\Sigma)
\end{align}
Let $B=c(n)\Sigma^{\half}A \Sigma^{\half}$, then exit orthogonal matrix $Q$ such that $B=Q^T \Lambda Q$ where $\Lambda=diag(\lambda_i)$ and $\lambda_i$ are eigenvalues of $B$. Let $\tilde{\beps}_i=\sqrt{c(n)Q\Sigma^{-\half}\beps_i}$, then $\tilde{\beps}_i\sim N(0, I_{p\times p})$ where $I_{p\times p}$ is identity matrix. Then
\begin{align}
E(\beps^T_i A\beps_i)^2&=E(\tilde{\beps}_i^T \Lambda \tilde{\beps}_i)^2=E(\sum_{i=1}^{p}\lambda_j\tilde{\beps}_{ij}^2)^2\\\nonumber
&=E(\sum_{i=1}^{p}\lambda_j^2\tilde{\beps}_{ij}^4+\sum_{j,k=1}^{p}\lambda_j\lambda_k \tilde{\beps}_{ij}^2 \tilde{\beps}_{ij}^2)=2\sum_{i=1}^{p}\lambda_j^2+(\sum_{i=1}^{p}\lambda_j)^2\\\nonumber
&=2tr(B^T B)+tr^2(B)\\\nonumber
&=c(n)^2[2tr(\Sigma A \Sigma A)+tr^2(A\Sigma)]
\end{align}
Hence
\begin{align}
E(\sum_{i=1}^{n}\beps_i^T A\beps_i-c(n)tr(A\Sigma))^2=2n c(n)^2 tr(\Sigma A^T \Sigma A)\le 2n c(n)^2 \lambda^2_{\max}(\Sigma)tr(A^T A)
\end{align}
By Chebyshev's inequality, for any $M$ we have
\begin{align}
P(\frac{\sum_{i=1}^{n}\beps_i^T A\beps_i-c(n)tr(A\Sigma)}{\sqrt{n c(n)^2 \norm{A}{F}^2}}>M)\le \frac{2n c(n)^2 tr(\Sigma A^T \Sigma A)}{M^2 n c(n)^2 \norm{A}{F}^2}\le \frac{2\lambda^2_{\max}(\Sigma)}{M^2}
\end{align}
Then for any $\epsilon>0$ exits $M$ large enough such that
\begin{align}
p(\frac{\sum_{i=1}^{n}\beps_i^T A\beps_i-c(n)tr(A\Sigma)}{\sqrt{n c(n)^2 \norm{A}{F}^2})}>M)<\epsilon
\end{align}
which means $\sum_{i=1}^{n}\beps_i^T A \beps_i-c(n)tr(A\Sigma)=O_p(c(n)\sqrt{n}\norm{A}{F})$
\end{proof}

\begin{proof}[\textbf{Proof of Theorem \ref{ThconsistMLE}}]
Take derivative with respect to $\bmu_k$ ($k=1,2$) with the function $L(\btheta,\bmu_1,\bmu_2)$ defined in \ref{loglik}. Considering $\inSigma(\btheta)$ is nonsingular, we have $\bmu_k=\hmu_{kMLE}=\bar{\bY}_{k\cdot}$.

For $\htheta_{MLE}$, we first consider the case of $p/n\to 0$. It is sufficient to prove that for any given $\epsilon>0$, there is a large constant $C$ such that for large $p$ and $n$, the smallest rate of convergence $\eta_{n,p}$ is $\sqrt{\frac{1}{np}}$ such that we have
\begin{align}
P(\sup_{\norm{\bu}{2}=C}L(\btheta_0+\bu \eta_{n,p},\hmu_1,\hmu_2)<L(\btheta_0,\hmu_1,\hmu_2))>1-\epsilon
\end{align}
where $\bu\in \IR^q$. This implies that there exists a local maximum for the function $L$ in the neighborhood of $\btheta_0$ with the radius at most proportional to $\eta_{n,p}$.

\begin{align}
&L(\btheta_0+\bu\eta_{n,p},\hmu_1,\hmu_2)- L(\btheta_0,\hmu_1,\hmu_2)\\\nonumber
&=(\der{L}{\btheta}(\btheta_0))^T\bu\eta_{n,p} + \half \bu^T (\frac{\partial ^2 L}{\partial \btheta \partial \btheta^T}(\btheta^*))\bu \eta_{n,p}^2\\\nonumber
&=-\frac{n}{2}\bu^T T(\btheta_0) \bu \eta_{n,p}^2+ (\der{L}{\btheta}(\btheta_0))^T\bu\eta_{n,p} + \half \bu^T (\frac{\partial ^2 L}{\partial \btheta \partial \btheta^T}(\btheta^*)+mnT(\btheta_0))\bu \eta_{n,p}^2\\\nonumber
&=(I)+(II)+(III)
\end{align}
where $T(\btheta_0)$ is a $q\times q$ matrix with its $(i,j)th$ element $t_{ij}(\btheta_0)=tr(\inSigma\Sigma_i\inSigma\Sigma_j)$.


From A\ref{asu_covnorm_inv} and A\ref{asu_tij_lim}, $t_{ij}=a_{ij}(t_{ii})^{\half}(t_{jj})^{\half}\ge a_{ij}\lambda^{-2}_{\min}(\Sigma)\norm{\Sigma_i}{F}\norm{\Sigma_j}{F}$. There exists a constant M such that
\begin{align}
(I)=-\frac{n}{2}\bu^T T \bu \eta_{n,p}^2=-\frac{n}{2}\sum_{i,j=1}^{q}t_{ij}u_iu_j\eta_{n,p}^2\le-\frac{Mnp}{2}\eta_{n,p}^2\norm{\bu}{2}^2
\end{align}

\begin{align}
(II)&=-\half\sum_{k=1}^{2}\sum_{i=1}^{n_k}\sum_{j=1}^{q} \left[(\bY_{ki}-\hmu_k)^T\Sigma^j(\bY_{ki}-\hmu_k)-(\frac{n_k-1}{n_k})tr(\Sigma\Sigma^j)\right]u_{\theta_j}\eta_{n,p}\\\nonumber
&+\half\sum_{j=1}^{q}tr(\Sigma\Sigma^j)u_{\theta_j}\eta_{n,p}\\\nonumber
&=(1)+(2)
\end{align}
Because $\bY_{ki}-\hmu_k\sim N(0, \frac{n_k-1}{n_k}\Sigma)$, by lemma\ref{lemma_sumEps_sq},
\begin{align}
\abs{(1)}=O_p(\sum_{k=1}^{2}\frac{n_k-1}{\sqrt{n_k}}\norm{\Sigma^j}{F}\norm{\bu}{2}\eta_{n,p})=O_p(\sqrt{n}\norm{\Sigma^j}{F}\norm{\bu}{2}\eta_{n,p})=O_p(\sqrt{np}\norm{\bu}{2}\eta_{n,p})
\end{align}
The last equality is because from A\ref{asu_star_coveigen}, $\norm{\Sigma^i}{F}^2\le \norm{\Sigma^i}{2}^2=p\lambda_{\max}^2(\Sigma^i)=O(p)$.

Also by A\ref{asu_star_coveigen}, $tr(\Sigma\Sigma^j)=tr(\Sigma^{1/2}\Sigma^j\Sigma^{1/2})\le \lambda_{\max}(\Sigma^j)tr(\Sigma)$. Then noticing that $tr(\Sigma)=O(p)$, and $p/n\to 0$
\begin{align*}
\abs{(2)}=\abs{\half \sum_{j=1}^{q}tr(\Sigma\Sigma^j)u_j\eta_{n,p}}=O_p(tr(\Sigma)\eta_{n,p}\norm{\bu}{2})=O_p(p\eta_{n,p}\norm{\bu}{2})
\end{align*}
Thus
\begin{align*}
(II)=O_p((\sqrt{np}+p)\eta_{n,p}\norm{\bu}{2})
\end{align*}
\begin{itemize}
\item If $p/n\to 0$, $(II)=O_p(\sqrt{np})\eta_{n,p}$. By choosing sufficient large $C=\norm{\bu}{2}$, the minimal rate of $\eta_{n,p}$ to have $(II)$ be dominated by $(I)$ is $\eta_{n,p}=O_p(\sqrt{\frac{1}{np}})$;
\item If $p/n \to C $ with $0<C\le \infty$, $(II)=O_p(p\eta_{n,p})$. Then the minimal rate of $\eta_{n,p}$ to have $(II)$ dominated by $(I)$ is $\eta_{n,p}=O_p(\sqrt{\frac{1}{n}})$
\end{itemize}
Since
\begin{align*}
\frac{\partial L}{\partial \theta_j \partial \theta_l}(\btheta)=\frac{n}{2}\left[ tr(\Sigma^{j}(\btheta)\Sigma(\btheta))+tr(\Sigma^j(\btheta)\Sigma_l(\btheta))\right]-\frac{1}{2}\sum_{k=1}^{2}\sum_{i=1}^{n_k}(\bY_{ki}-\hmu_k)^T\Sigma^{jl}(\btheta) (\bY_{ki}-\hmu_k)
\end{align*}
$(III)$ can be written as
\begin{align*}
(III)=&-\half\sum_{j,l=1}^{q}\left[\sum_{k=1}^{2}\sum_{i=1}^{n_k} \big((\bY_{ki}-\hmu_k)^T\Sigma^{jl}(\btheta^*)(\bY_{ki}-\hmu_k)-(\frac{n_k-1}{n_k})tr(\Sigma(\btheta_0)\Sigma^{jl}(\btheta^*))\big)      \right]u_{\theta_j}u_{\theta_l}\eta_{n,p}^2\\\nonumber
&+\frac{n}{2n_1n_2}\sum_{j,l=1}^{q}tr(\Sigma^{jl}(\btheta^*)\Sigma(\btheta_0)) u_{\theta_j}u_{\theta_l}\eta_{n,p}^2\\\nonumber
& +\frac{n}{2} \sum_{j,l=1}^{q}\left[ tr(\Sigma^{jk}(\btheta^*)\Sigma(\btheta^*)) - tr(\Sigma^{jk}(\btheta^*)\Sigma(\btheta_0))\right] u_{\theta_j}u_{\theta_l}\eta_{n,p}^2\\\nonumber
&+ \frac{n}{2} \sum_{j,l=1}^q\left[ tr(\Sigma^{j}(\btheta^*)\Sigma_l(\btheta^*)) - tr(\Sigma^{j}(\btheta_0)\Sigma_l(\btheta_0))\right] u_{\theta_j}u_{\theta_l}\eta_{n,p}^2\\\nonumber
= & (3)+(4)+(5)+(6)
\end{align*}
By lemma\ref{lemma_sumEps_sq} and A\ref{asu_star_coveigen2},
\begin{align*}
\abs{(3)}=O_p(\sqrt{n}\norm{\Sigma^{jl}(\theta^*)}{F}\eta_{n,p}^2)=O_p(\sqrt{np}\eta_{n,p}^2)
\end{align*}

For (4), by A\ref{asu_star_coveigen2}
\begin{align*}
tr(\Sigma^{jl}(\btheta^*)\Sigma(\btheta_0))=O_p(p)
\end{align*}
thus $\abs{(4)}=O_p(\frac{p}{n}\eta_{n,p}^2)$
It is easy to see (3), (4) are dominated by $(I)$.
For (5),
\begin{align}
|(5)|=\left|{1\over4}n\sum_{j,k=1}^{q} tr(\Sigma^{kj}(\btheta^*)(\Sigma(\btheta_0)-\Sigma(\btheta^*)))  u_{\theta_k}u_{\theta_j}\eta_{n,p}^2\right|.
\end{align}
Let $d_{il}(\btheta^*)$ be the $i,l$th entry of matrix $\Sigma^{kj}(\btheta^*)$, $\gamma_{il}(\btheta) $ be the $i,l$th entry of $\Sigma(\btheta)$, then by A\ref{asu_covfunc} and A\ref{asu_star_coveigen2}

\begin{align}
tr(\Sigma^{kj}(\btheta^*)(\Sigma(\btheta_0)-\Sigma(\btheta^*))) &=\sum_{i,l=1}^{p}d_{il}(\btheta^*)(\gamma_{li}(\btheta_0)-\gamma_{li}(\btheta^*))\\\nonumber
&\le \sum_{i,l=1}^{p}|d_{il}(\btheta^*)| {\parallel \frac{\partial \gamma_{il}(\theta^*)}{\partial \btheta}\parallel_2} {\parallel \btheta_0-\btheta^*\parallel_2}\\\nonumber
&\le \sum_{i,l=1}^{p}d_{il}(\btheta_2^*) M \eta_{n,p} \\\nonumber
&\le M \eta_{n,p} p\norm{\Sigma^{kj}(\btheta^*)}{F} \\\nonumber
&=O_p(\sqrt{p^3}\eta_{n,p})
\end{align}
Hence
\begin{align}
\abs{(5)}=O_P(n \sqrt{p^3}\eta^3)=O_p((p^{3/2}n\eta_{n,p})\eta_{n,p}^2))
\end{align}
\begin{itemize}
\item If $p/n\to 0$ and $\eta_{n,p}=O_p(\frac{1}{\sqrt{np}})$, (5) is dominated by $(I)$.
\item If $p/n\to C$ with $0<C\le \infty$, $eta_{n,p}=O_p(\frac{1}{n})$ and $\sqrt{p}/n\to 0$, (5) is dominated by $(I)$.
\end{itemize}

For (6), let $t_{ij}(\theta^*)=tr(\Sigma^{-1}(\btheta^*)\Sigma_i(\btheta^*)\Sigma^{-1}(\btheta^*)\Sigma_j(\btheta^*))$, by A\ref{asu_star_covdif3},
\begin{align}
|(6)|&\le n\sum_{k,j=1}^{q}\norm{\der{ t_{ij}(\btheta^*)}{\btheta}}{2} \norm{\btheta^*-\btheta_0}{2} u_{\theta_k}u_{\theta_j}\eta_{n,p}^2\\\nonumber
&=O_P(np\eta_{n,p}^3)
\end{align}
While $\eta_{n,p}=O_p(\frac{1}{\sqrt{np}})$ or $\eta_{n,p}=O_p(\frac{1}{n})$, $(6)$ is also dominated by $(I)$. Hence $(III)$ is dominated by $(I)$. This completes the proof.
\end{proof}

\begin{proof}[\textbf{Proof of Theorem \ref{ThErrpLessn}}]
We start with $W_1(\hat{\delta}_{MLE})=1-\Phi(\Psi_1)$, where $W_1(\hat{\delta}_{MLE})$ is the conditional misclassification rate defined in \ref{W1} and $\Psi_1$ is defined in \ref{Psi1}. The idea is to prove $\lim\inf_{n,p\to0}\Psi_1\to \frac{\sqrt{C_0}}{2}$.

From Therom \ref{ThconsistMLE}, we have $\parallel \htheta-\btheta \parallel_2=O_p(\frac{1}{\sqrt{np}})$.
Recall that $\Sigma=\Sigma(\btheta)=\big[ \gamma(h_{ij};\btheta)\big]_{i,j=1}^{p}$ and $\hSigma=\Sigma(\htheta)=\Big[ \gamma(h_{ij};\htheta)\Big]_{i,j=1}^{p}$.

By A\ref{asu_covfunc}, we have:
\begin{align}
\max_{i,j}|\gamma(h_{ij};\btheta)-\gamma(h_{ij};\htheta)|\le M  \parallel \btheta-\htheta \parallel_2
\end{align}
Thus there exist $\epsilon>0$ and matrix $E=\big[ e_{ij}\big]_{i,j=1}^{p}$ such that
\begin{align}
\hSigma=\Sigma+\epsilon E
\end{align}
where $\epsilon=O_p(\frac{1}{\sqrt{np}})$ and $E$ is a $p\times p$ matrix with absolute values of all entries less than $1$, i.e. $\abs{e_{ij}}\le 1$ for any $i,j=1,2,...,p$. As a result, for large $p$ and $n$, the inverse of $\Sigma$ can be written as:
\begin{align}\label{pr_hSigInv}
\hSigma^{-1}=\Sigma^{-1}-\epsilon\Sigma^{-1}E\Sigma^{-1}+O(\epsilon^2)E_2
\end{align}
where $\E_2$ is a $p\times p$ matrix with all entries less than $1$, see \cite{meyer2001matrix}.

Now we consider the denominator of \ref{Psi1}. We first claim the denominator can be written as:
\begin{align}\label{deno1}
\hDelta^T(\inhSigma\Sigma \inhSigma)\hDelta=\hDelta^T\inSigma\hDelta(1+o_p(1)).
\end{align}
Because by \ref{pr_hSigInv}, we have
\begin{align}
\hSigma^{-1}\Sigma\hSigma^{-1}&=(\Sigma^{-1}-\epsilon\Sigma^{-1}E\Sigma^{-1}+O(\epsilon^2)E_2)\Sigma(\Sigma^{-1}-\epsilon\Sigma^{-1}E\Sigma^{-1}+O(\epsilon^2)E_2)\\\nonumber
&=\inSigma-2\epsilon A +\epsilon^2 A \Sigma A+O(\epsilon^2)E_2+O(\epsilon^3)E E_2 \inSigma+O(\epsilon^4)E_2E_2
\end{align}
where $A=\inSigma E \inSigma$.

Also, noticing that $\epsilon=O({1 \over {\sqrt{np}}})$, $k_1\le \lambda_{\min}(\Sigma)\le \lambda_{\max}(\Sigma)\le k_2$ and $\lambda_{\max}(E)\le tr(E)\le p$, we have:
\begin{align}\label{pr_sanwid_1}
\frac{\hDelta^T(\epsilon A)\hDelta}{\hDelta^T \inSigma \hDelta} = \frac{y^T \epsilon E y}{y^T
\Sigma y}\le \frac{\epsilon \lambda_{\max}(E)}{\lambda_{\min}(\Sigma)}\le\epsilon p/\lambda_{\min}(\Sigma)=O(\sqrt{p\over n})
\end{align}
where $y^T=\hDelta^T\inSigma$. Similarly, we have:
\begin{align}\label{pr_sanwid_2}
\frac{\hDelta^T (\epsilon^2 A \Sigma A) \hDelta}{\hDelta^T \inSigma \hDelta}& \le  \epsilon^2 \frac{ \lambda_{\max}^2(E)}{\lambda_{\min}^2(\Sigma)}\le \epsilon^2 p^2/\lambda_{\min}^2(\Sigma)=O({p\over n})
\end{align}
\begin{align}\label{pr_sanwid_3}
\frac{\hDelta^T (O(\epsilon^2)E_2) \hDelta}{\hDelta^T \inSigma \hDelta}& \le O(\epsilon^2)\lambda_{\max}(E_2){\lambda_{\min}(\Sigma)}\le O(\epsilon^2)p\lambda_{\max}(\Sigma)=O({1\over n})
\end{align}

\begin{align}\label{pr_sanwid_4}
\frac{\hDelta^T (O(\epsilon^3)E E_2 \inSigma) \hDelta}{\hDelta^T \inSigma \hDelta} \le O(\epsilon^3) \frac{\lambda_{\max}(E_2)\lambda_{\max}(E)\lambda_{\max}(\inSigma)}{\lambda_{\min}(\Sigma)}\le O(\epsilon^3) p^2\frac{\lambda_{\max}(\Sigma)}{\lambda_{\min}(\Sigma)}=O(\sqrt{p \over n}{1\over n})
\end{align}

\begin{align}\label{pr_sanwid_5}
\frac{\hDelta^T (O(\epsilon^4)E_2 E_2) \hDelta}{\hDelta^T \inSigma \hDelta} \le O(\epsilon^4)p^2\lambda_{\max}(\Sigma)=O_p({1\over n^2})
\end{align}
Since ${p\over n} \to 0$ as $n\to \infty$ and $p\to \infty$, (\ref{deno1}) is derived by combining (\ref{pr_sanwid_1})- (\ref{pr_sanwid_5}).

Now we investigate $\hDelta^T\inSigma\hDelta$ and claim that:
\begin{align}\label{pr_delta_sigma_delta}
\hDelta^T\inSigma\hDelta=\Delta^T\inSigma\Delta(1+o_p(1))+\frac{np}{n_1n_2}(1+o_p(1))
\end{align}

Recall $\hmu_1=\bar{\bY}_{1\cdot}=\frac{1}{n_1}\sum _{i=1}^{n_1}\bY_{1i}$, which is normally distributed as $\mathcal{N}(\bmu_1, \frac{1}{n_1}\Sigma)$. Also, $\hmu_2=\bar{\bY}_{2\cdot}=\frac{1}{n_2}\sum _{i=1}^{n_2}\bY_{2i}$, which is normally distributed as ${N}(\bmu_2, \frac{1}{n_2}\Sigma)$. Let $\hmu_1=\bmu_1+\hat{\beps}_1$ and $\hmu_2=\bmu_2+\hat{\beps}_2$ where $\hat{\beps}_1\sim \mathcal{N}(0,\frac{1}{n_1}\Sigma)$ and $\hat{\beps}_2\sim {N}(0,\frac{1}{n_2}\Sigma)$. Then we have:

\begin{align*}
\hDelta^T\inSigma\hDelta=\bDelta^T\inSigma\bDelta+2\Delta^T\inSigma(\hat{\beps}_1-\hat{\beps}_2)+(\hat{\beps}_1-\hat{\beps}_2)^T\inSigma (\hat{\beps}_1-\hat{\beps}_2)
\end{align*}

Noticing $\hat{\beps}_1-\hat{\beps}_2 \sim N(0,\frac{n}{n_1n_2}\Sigma)$, by Chebyshev's inequality, for any $\epsilon_0>0$

\begin{align}\label{pr_delta_sigma_err}
P(\frac{\bDelta^T\inSigma(\hat{\beps}_1-\hat{\beps}_2)}{\bDelta^T\inSigma\bDelta}>\epsilon_0)
&\le \frac{ E\big( \bDelta^T\inSigma(\hat{\epsilon}_1-\hat{\epsilon}_2) \big)^2 }{(\epsilon_0 \bDelta^T\inSigma\bDelta)^2} \\\nonumber
&=\frac{n}{n_1n_2\bDelta^T\inSigma\bDelta}\\\nonumber
&\le \frac{n}{n_1n_2C_p}=\frac{1}{\pi(1-\pi)nC_p}\to0
\end{align}
It goes to $0$ because $nC_p\to \infty$. Then
\begin{align*}
\bDelta^T\inSigma(\hat{\beps}_1-\hat{\beps}_2)=o_p(\bDelta^T\inSigma\bDelta)
\end{align*}

Then we consider the third term in \ref{pr_delta_sigma_delta}. Let $\tilde{\beps}=\sqrt{\frac{n_1n_2}{n}}\Sigma^{1\over2}(\hat{\beps}_1-\hat{\beps}_2)$. Then $\tilde{\beps}\sim\mathcal{N}(0,I_{p\times p})$. Now for any $\eps_0>0$

\begin{align*}
P(|\frac{(\hat{\beps}_1-\hat{\beps}_2)^T\inSigma (\hat{\beps}_1-\hat{\beps}_2)-np/n_1n_2}{np/n_1n_2}|>\epsilon_0)&=P(|\frac{\tilde{\beps}^T\tilde{\beps}-p}{p}|>\epsilon_0)\\\nonumber
&\le \frac{ E\big( \tilde{\beps}^T\tilde{\beps} \big)^2}{\epsilon_0^2 p^2}\\\nonumber
&={2\over p} {1\over\epsilon^2} \to 0
\end{align*}
as $p\to \infty$.
Then
\begin{align}\label{deno4}
(\hat{\beps}_1-\hat{\beps}_2)^T\inSigma (\hat{\beps}_1-\hat{\beps}_2)=\frac{np}{n_1 n_2}(1+o_p(1))
\end{align}

Then \ref{pr_delta_sigma_delta} followed. Now \ref{deno1} and \ref{pr_delta_sigma_delta} yield:
\begin{align}\label{deno_r}
\hDelta^T(\inhSigma\Sigma \inhSigma)\hDelta=\bDelta^T\Sigma\bDelta(1+o_p(1))+\frac{np}{n_1n_2}(1+o_p(1))
\end{align}

Now we consider the nominator of \ref{Psi1}.
\begin{align}\label{nominator}
(\bmu_1-\hmu)^T\hSigma\hDelta=&{1\over 2}\Big( \bDelta^T\inhSigma \bDelta+\heps_2^T\inhSigma\heps_2-\heps_1^T\inhSigma\heps_1-2\bDelta^T\inhSigma\heps_2\Big)\\\nonumber
&={1\over 2}((1)+(2)-(3)-(4))
\end{align}

\begin{align*}
(1)=\bDelta^T (\inSigma-\epsilon E+O(\epsilon^2)E_2)\bDelta
\end{align*}
By the assumption that $k_1\le\lambda_{\min}(\Sigma)\le\lambda_{\max}\le k_2$, $\lambda_{\max}(E)\le p$ and $\lambda_{\max}(E_2)\le p$, we have
$$\frac{\bDelta^T(\epsilon E)\bDelta}{\bDelta^T \inSigma \bDelta}\to 0$$ and
$$\frac{\bDelta^T(O(\epsilon^2) E_2)\bDelta}{\bDelta^T \inSigma \bDelta}\to 0$$

thus
\begin{align}\label{nominator1}
(1)=\bDelta^T \inSigma \bDelta (1+o_p(1))
\end{align}

By the same argument, we have $(2)=\heps_2^T\inSigma\heps_2(1+o_p(1))$ and $(3)=\heps_1^T\inSigma\heps_1(1+o_p(1))$. Since $\hat{\beps}_1\sim N(0, \frac{1}{n_1}\Sigma)$ and $\hat{\beps}_1\sim N(0, \frac{1}{n_1}\Sigma)$,similar to the proof of (\ref{deno4}), we have:
\begin{align}\label{nominator2}
(2)=\frac{p}{n_2}(1+o_p(1))\ \text{and}\ (3)=\frac{p}{n_1}(1+o_p(1))
\end{align}
Now we consider term (4) in \ref{nominator}.
\begin{align*}
(4)=&\bDelta^T(\inSigma-\epsilon \inSigma E\inSigma+O(\epsilon^2)E_2)\heps_2\\\nonumber
=&\bDelta^T\inSigma \heps_2+ \epsilon \bDelta^T \inSigma E\inSigma \heps_2+ O(\epsilon^2)\bDelta^T E_2 \heps_2
\end{align*}
Similar to the proof of \ref{pr_delta_sigma_err}, all the three terms in (4) are small order of $\bDelta^T\inSigma \bDelta$. Thus we have
\begin{align}\label{nominator4}
(4)=o_p(\bDelta^T\inSigma \bDelta)
\end{align}

Now the nominator can be written as:
\begin{align}\label{nominator_r}
(\bmu_1-\hmu)^T\inhSigma\hDelta= \half \left( \bDelta^T\inSigma \bDelta(1+o_p(1))+\frac{p}{n_1n_2}(n_1-n_2)(1+o_p(1)) \right)
\end{align}

\ref{deno_r} and \ref{nominator_r} yield
\begin{align*}
W_1(\hat{\delta}_{MLE})= 1-\Phi\Big( \frac{\bDelta^T \Sigma^{-1}\bDelta(1+o_p(1))+{p\over{n_1n_2}}(n_1-n_2)(1+o_p(1))}{2\sqrt{\bDelta^T \Sigma^{-1}\bDelta(1+o_p(1))+{{np}\over{n_1n_2}} (1+o_p(1))}} \Big)
\end{align*}

By the same argument, we have:
\begin{align*}
W_2(\hat{\delta}_{MLE})= \Phi\Big( \frac{-\bDelta^T \Sigma^{-1}\bDelta(1+o_p(1))+{p\over{n_1n_2}}(n_1-n_2)(1+o_p(1))}{2\sqrt{\bDelta^T \Sigma^{-1}\bDelta(1+o_p(1))+{{np}\over{n_1n_2}} (1+o_p(1))}} \Big)
\end{align*}

Since ${p\over n}\to 0$ and $C_p=\bDelta^T \Sigma^{-1}\bDelta \to C_0$ with $0\le C_0\le \infty$, we have
\begin{align*}
W(\hat{\delta}_{MLE})
=&{\half}\Big( 1-\Phi\left( \frac{C_p(1+o_p(1))+{p\over{n_1n_2}}(n_1-n_2)(1+o_p(1))}{2\sqrt{C_p(1+o_p(1))+{{np}\over{n_1n_2}} (1+o_p(1))}} \right)\\\nonumber
&+\Phi\left( \frac{-C_p(1+o_p(1))+{p\over{n_1n_2}}(n_1-n_2)(1+o_p(1))}{2\sqrt{C_p(1+o_p(1))+{{np}\over{n_1n_2}} (1+o_p(1))}} \right) \Big)\\\nonumber
\to &1-\Phi(\frac{\sqrt{C_0}}{2})
\end{align*}
as $p\to \infty$ and $n\to \infty$.
If $C_p\to C_0<\infty$, $1-\Phi(\frac{\sqrt{C_0}}{2})>0$. Thus $\hat{\delta}_{MLE}$ is asymptotically optimal. Now we check the asymptotically optimal when $C_p\to \infty$. From the inequality
\begin{align}\label{inequality}
\frac{x}{1+x^2}e^{-\frac{x^2}{2}}\le\Phi(-x)\le \frac{1}{x}e^{-\frac{x^2}{2}},\ x>0
\end{align}
we have
\begin{align*}
\frac{xy}{1+x^2}e^{-\frac{x^2-y^2}{2}} \le \frac{W(\hat{\delta}_{MLE})}{\Phi(-\frac{\sqrt{C_p}}{2})}\le \frac{1+y^2}{xy}e^{-\frac{x^2-y^2}{2}}
\end{align*}
where $x=\frac{C_p(1+o_p(1))\pm{p\over{n_1n_2}}(n_1-n_2)(1+o_p(1))}{2\sqrt{C_p(1+o_p(1))+{{np}\over{n_1n_2}} (1+o_p(1))}}$ and $y=\frac{\sqrt{C_p}}{2}$.
It is easy to check that $\frac{xy}{1+x^2}\to 1$ and $\frac{1+y^2}{xy}\to 1$ as $C_p\to \infty$. Also $x^2-y^2\to 0$ if $C_p(p/n)\to 0$.
This completes the proof.
\end{proof}

\begin{proof}[\textbf{Proof of Theorem \ref{ThErrp>n}}]
The misclassification rate of $\delta_{\hmu}$ is:
\begin{align*}
W(\delta_{\hmu})=\half(W_1(\delta_{\hmu})+W_2(\delta_{\hmu}))
\end{align*}
where
\begin{align*}
W_1(\delta_{\hmu})=1-\Phi(\Psi_1)\ \text{and}\ W_2(\delta_{\hmu})=\Phi(\Psi_2)
\end{align*}
where $\Psi_1$ and $\Psi_2$ is defined by \ref{Psi1} and \ref{Psi2} with $\hSigma$ replaced by $\Sigma$. We start with
\begin{align*}
\Psi_1=\frac{(\bmu_1-\hmu)^T\inSigma(\hmu_1-\hmu_2)}{\sqrt{(\hmu_1-\hmu_2)^T\inSigma(\hmu_1-\hmu_2)}}
\end{align*}
The denominator is (\ref{pr_delta_sigma_delta}) and it can be represented as:
\begin{align*}
\hDelta \inSigma \hDelta=\bDelta^T\Sigma\bDelta(1+o_p(1))+\frac{np}{n_1n_2}(1+o_p(1))
\end{align*}
The nominator is:
\begin{align*}
(\bmu_1-\hmu)^T\inSigma \hDelta=\half(\bDelta^T \inSigma \bDelta+\heps_1^T\inSigma \heps_1-\heps_2^T\inSigma \heps_2-2\bDelta^T\inSigma\heps_2)
\end{align*}
By the similar procedure in the proof of (\ref{nominator_r}), it can be represented by
\begin{align*}
(\bmu_1-\hmu)^T\inSigma \hDelta=\half \left( \bDelta^T\inSigma \bDelta(1+o_p(1))+\frac{p}{n_1n_2}(n_1-n_2)(1+o_p(1)) \right)
\end{align*}
Thus we have:
\begin{align}\label{pr_W1_ineq}
W_1(\hat{\delta}_{\hmu})&= 1-\Phi\Big( \frac{C_p(1+o_p(1))+{p\over{n_1n_2}}(n_1-n_2)(1+o_p(1))}{2\sqrt{C_p(1+o_p(1))+{{np}\over{n_1n_2}} (1+o_p(1))}} \Big)\\\nonumber
\end{align}
Similarly, we have
\begin{align*}
W_2(\hat{\delta}_{\hmu})&= \Phi\Big( \frac{-C_p(1+o_p(1))+{p\over{n_1n_2}}(n_1-n_2)(1+o_p(1))}{2\sqrt{C_p(1+o_p(1))+{{np}\over{n_1n_2}} (1+o_p(1))}} \Big)\\\nonumber
\end{align*}
(i) If $\frac{C_p}{p/n}\to \infty$. Then
\begin{align*}
\frac{\pm C_p(1+o_p(1))+{p\over{n_1n_2}}(n_1-n_2)(1+o_p(1))}{2\sqrt{C_p(1+o_p(1))+{{np}\over{n_1n_2}} (1+o_p(1))}}
&=\frac{\pm C_p(1 \pm {p\over{n_1n_2C_p}}(n_1-n_2)(1+o_p(1)))}{2\sqrt{C_p(1+{{np}\over{n_1n_2C_p}}(1+o_p(1))}}\\\nonumber
&=\frac{\pm\sqrt{C_p}(1 \pm {p\over{n_1n_2C_p}}(n_1-n_2)(1+o_p(1)))}{2\sqrt{(1+{{np}\over{n_1n_2C_p}}(1+o_p(1))}}\\\nonumber
\to & \frac{\pm\sqrt{C_0}}{2}
\end{align*}
which yields $W(\hat{\delta}_{\hmu})\to 0$ since $\frac{p}{n}\to C$ with $0<C<\infty$ and $C_p\to C_0=\infty$.
Now we show that $\frac{W(\hat{\delta}_{\hmu})}{W_{OPT}}\to \infty$ in probability.

Noticing the fact that
\begin{align*}
\frac{x}{1+x^2}e^{-\frac{x^2}{2}}\le\Phi(-x)\le \frac{1}{x}e^{-\frac{x^2}{2}},\ x>0
\end{align*}
we have
\begin{align*}
\frac{W_{OPT}}{\Phi\left(-\frac{x}{2}\right)}=\frac{\Phi(-\frac{\sqrt{C_p}}{2})}{\Phi\left(-\frac{x}{2}\right)} \le
\frac{4+x^2}{x\sqrt{C_p}}e^{-\frac{1}{8}(C_p-x^2)}
\end{align*}
where $x=\frac{C_p(1+o_p(1) \pm\frac{p(n_1-n_2)}{n_1n_2}(1+o_p(1)))}{\sqrt{C_p(1+o_p(1))+\frac{np}{n_1n_2}(1+o_p(1))}}$

\begin{align*}
\frac{4+x^2}{x\sqrt{C_p}}=\frac{4}{x\sqrt{C_p}}+\frac{x}{\sqrt{C_p}} \to \text{a constant}
\end{align*}
because
\begin{align*}
\frac{1}{x\sqrt{C_p}}\to 0
\end{align*}
and
\begin{align*}
\frac{x}{\sqrt{C_p}}=
\begin{cases}
\to 1 & \text{if } c=\infty,\\
\to a_0 & \text{if } c<\infty
\end{cases}
\end{align*}
where $a_0=\frac{\sqrt{c}+1/\sqrt{c}}{\sqrt{c+1/(\pi(1-\pi))}}$.
Also
\begin{align*}
C_p-x^2=\frac{C_p^2o_p(1)+C_p \frac{(n\pm (n_1-n_2))p}{n_1n_2}(1+o_p(1))+\frac{p^2(n_1-n_2)^2}{n_1^2n_2^2}}{C_p(1+o_p(1))+\frac{np}{n_1n_2}(1+o_p(1))}\to\infty
\end{align*}
Thus we have
\begin{align}
\frac{\Phi(-\frac{\sqrt{C_p}}{2})}{\Phi\left(-\frac{x}{2}\right)}\to 0
\end{align}
As a result
\begin{align*}
\frac{W(\hat{\delta}_{\hmu})}{W_{OPT}}\to \infty
\end{align*}

(ii)While $\frac{C_p}{p/n}\to c$ with $0<c<\infty$
\begin{align*}
\frac{\pm C_p(1+o_p(1))+{p\over{n_1n_2}}(n_1-n_2)(1+o_p(1))}{2\sqrt{C_p(1+o_p(1))+{{np}\over{n_1n_2}} (1+o_p(1))}}
&=\frac{\pm C_p(1 \pm {p\over{n_1n_2C_p}}(n_1-n_2)(1+o_p(1)))}{2\sqrt{C_p(1+{{np}\over{n_1n_2C_p}}(1+o_p(1))}}\\\nonumber
&=\frac{\pm\sqrt{C_p}(1 \pm {p\over{n_1n_2C_p}}(n_1-n_2)(1+o_p(1)))}{2\sqrt{(1+{{np}\over{n_1n_2C_p}}(1+o_p(1))}}\\\nonumber
&\to \frac{\pm\sqrt{C_0}(1\pm\frac{1}{c}\frac{2\pi-1}{\pi(1-\pi)})}{2\sqrt{1+\frac{1}{c}\frac{1}{\pi(1-\pi)}}}
\end{align*}
Since $C_p\to C_0$,
\begin{align*}
W_1(\hat{\delta}_{\hmu})\to 1-\Phi(\frac{\sqrt{C_0}(1+\frac{1}{c}\frac{2\pi-1}{\pi(1-\pi)})}{2\sqrt{1+\frac{1}{c}\frac{1}{\pi(1-\pi)}}})
\end{align*}
and
\begin{align*}
W_2(\hat{\delta}_{\hmu})\to 1-\Phi(\frac{\sqrt{C_0}(1-\frac{1}{c}\frac{2\pi-1}{\pi(1-\pi)})}{2\sqrt{1+\frac{1}{c}\frac{1}{\pi(1-\pi)}}})
\end{align*}

If $\frac{p}{n}\to C$ with $0<C<\infty$, then $0<C_0<\infty$. Since $\Phi(x)$ is convex function in the sense that $\half(\Phi(x+\epsilon)+\Phi(x-\epsilon))\le\Phi(x)$ for any $x>0$ and $x>\epsilon>0$,
\begin{align*}
lim_{P} W(\hat{\delta}_{\hmu})=lim_{P} \half(W_1(\hat{\delta}_{\hmu})+W_2(\hat{\delta}_{\hmu}))\ge 1-\Phi(\frac{\sqrt{C_0}}{2\sqrt{1+\frac{1}{c}\frac{1}{\pi(1-\pi)}}})>1-\Phi(\frac{\sqrt{C_0}}{2})
\end{align*}
where $lim_{P}$ means converge in probability with $p\to\infty$ and $n\to \infty$.

If $\frac{p}{n}\to \infty$, then $C_0=\infty$. Hence $W(\hat{\delta}_{\hmu})=\half(W_1(\hat{\delta}_{\hmu})+W_2(\hat{\delta}_{\hmu}))\to 0$. By similar argument in (i), we have $\frac{W(\hat{\delta}_{\hmu})}{W_{OPT}}\to \infty.$

(iii)While $\frac{C_p}{p/n}\to 0$,
\begin{align*}
\frac{\pm C_p(1+o_p(1))+{p\over{n_1n_2}}(n_1-n_2)(1+o_p(1))}{2\sqrt{C_p(1+o_p(1))+{{np}\over{n_1n_2}} (1+o_p(1))}}
&=\frac{\sqrt{{p\over n}}(\pm C_p/({p\over n})+{\frac{n(n_1-n_2)}{n_1n_2}}(1+o_p(1)))}{2\sqrt{(C_p/({p\over n})+{{n^2}\over{n_1n_2}}(1+o_p(1))}}\\\nonumber
& \begin{cases}
   \to  \infty & \text{if } n_1>n_2,\\
    \to -\infty & \text{if } n_1<n_2.
    \end{cases}
\end{align*}
Which yields $W(\hat{\delta}_{\hmu})\to \frac{1}{2}$.
\end{proof}

\begin{proof}[\textbf{Proof of Corollary \ref{CorErrp>n}}]
Noticing that when $n_1=n_2=n/2$, $\heps_1\sim N(0,\frac{1}{n_1}\Sigma)$ and $\heps_2\sim N(0,\frac{1}{n_1}\Sigma)$. Let $\tilde{\epsilon}_i=\sqrt{n_i}\Sigma^{\frac{1}{2}}\hat{\epsilon}_i$ for $i=1,2$. Then $\tilde{\epsilon}_i\sim N(0,I_p)$ and
\begin{align*}
E(\heps_1^T\inSigma\heps_1-\heps_2^T\inSigma\heps_2)^2=&E(\tilde{\epsilon}_1^T\tilde{\epsilon}_1-\tilde{\epsilon}_2^T\tilde{\epsilon}_2)^2/N_1^2
=\sum_{j=1}^{p}(\tilde{\epsilon}_{1j}^2-\tilde{\epsilon}_{2j}^2)^2/n_1^2=6p/n_1^2
\end{align*}

Hence we have:
\begin{align*}
\heps_1^T\inSigma\heps_1-\heps_2^T\inSigma\heps_2=O_p(\frac{\sqrt{p}}{n})
\end{align*}

Similar to the proof of (\ref{pr_W1_ineq}) we have:
\begin{align*}
W_1(\hat{\delta}_{\hmu})&\le 1-\Phi\Big( \frac{C_p(1+o_p(1))+{\frac{\sqrt{p}}{n}}(1+o_p(1))}{2\sqrt{C_p(1+o_p(1))+{{4p}\over{n}} (1+o_p(1))}} \Big)
\end{align*}
and
\begin{align*}
W_2(\hat{\delta_{\hmu}})&\le \Phi\Big( \frac{-C_p(1+o_p(1))+{\frac{\sqrt{p}}{n}}(1+o_p(1))}{2\sqrt{C_p(1+o_p(1))+{{4p}\over{n}} (1+o_p(1))}} \Big)
\end{align*}

\begin{align*}
\frac{\pm C_p(1+o_p(1))+{\frac{\sqrt{p}}{n}}(1+o_p(1))}{2\sqrt{C_p(1+o_p(1))+{{4p}\over{n}} (1+o_p(1))}}=\frac{\pm C_p/\sqrt{p\over n}(1+o_p(1))+{\frac{1}{\sqrt{n}}}(1+o_p(1))}{2\sqrt{C_p/{p\over n}(1+o_p(1))+4+o_p(1))}}\\\nonumber
\begin{cases}
\to \pm \infty & \text{if } \frac{C_p}{\sqrt{p/n}}\to \infty \\
\to \pm \frac{c}{4} & \text{if } \frac{C_p}{\sqrt{p/n}}\to c \text{\ and } p/n\to\infty \\
\to \pm \frac{c}{2\sqrt{4+c/\sqrt{C}}} & \text{if } \frac{C_p}{\sqrt{p/n}}\to c \text{\ and } p/n\to C<\infty \\
\to 0 & \text{if } \frac{C_p}{\sqrt{p/n}}\to 0 \\
\end{cases}
\end{align*}
The proof of $\frac{W(\hat{\delta}_{\hmu})}{W_{OPT}}\to \infty$ is the same as that in the proof of Theorem \ref{ThErrp>n}(1). This completes the proof.

\end{proof}

\section{Proofs for consistency of one-step PMLE}
\begin{proof}[\textbf{Proof of Theorem \ref{Th_consist_PMLE}}]
In the algorithm, we estimate $\hbeta^{(0)}$ first. Then $\htheta^{(0)}$ is estimated by fixing $\bbeta=\hbeta^{(0)}$. Then update $\bbeta=\hbeta^{(1)}$ by fixing $\btheta=\htheta^{(0)}$. $\btheta=\htheta^{(1)}$ is updated in the last step by fixing $\bbeta=\hbeta^{(1)}$. So the idea is to prove the theorem in the following sequence: (a) The consistency and sparsity of $\bbeta^{(0)}$; (b) The consistency of $\htheta^{(0)}$; (c) The consistency and sparsity of $\hbeta^{(1)}$; (d) The consistency of $\htheta^{(1)}$.
\begin{itemize}
\item[(a)] We first prove  $\norm{\hbeta^{(0)}-\bbeta_0}{2}=O_p(\sqrt{\frac{s}{n}})$ and $\hbeta_2^{(0)}=0$ with probability tending to 1, where $\hbeta_2^{(0)}$ is the $p-s$ dimension sub-vector of $\hbeta^{(0)}=(\hbeta_1^{(0)T}, \hbeta_2^{(0)T})^T$. The proof of (a) is the same as the proof of (c), except the loss function is defined as $R(\bbeta)$ which is negative of the penalized MLE function (\ref{PMLE_G}) with covariance matrix $\dSigma$ replaced by $diag_{n-1}(I_p)$. Then the parameters are estimated by minimize the loss function. We omit the proof here and illustrate the details in (c).

\item[(b)] Second we prove $\norm{\htheta^{(0)}-\btheta_0}{2}=O_p(\sqrt{\frac{1}{np}})$. Fixing $\bbeta = \hbeta^{0}$, write
\begin{align*}
F(\btheta,\hbeta^{(0)};\bZ)=-\frac{np}{2}log(2\pi)-\half log\abs{\dot{\Sigma}(\btheta)}-\half (\bZ-\bX\hbeta^{(0)})^T\dot{\Sigma}^{-1}(\btheta)(\bZ-\bX\hbeta^{(0)})
\end{align*}
It is sufficient to prove for any given $\epsilon>0$ the smallest convergence rate of $\eta_{n,p}$ is $\sqrt{\frac{1}{np}}$ such that we have
\begin{align*}
P(\sup_{\norm{\bu}=C}F(\btheta_0+\bu\eta_{n,p},\hbeta^{(0)};\bZ)<F(\btheta_0,\hbeta^{(0)};\bZ))>1-\epsilon
\end{align*}
This implies there exists a local maximum for the function $Q(\btheta,\hbeta^{(0)};\bZ)$ of $\btheta$ in the neighborhood of $\btheta_0$ with the radius at most proportional to $\eta_{n,p}$.

By Taylor's expansion, $\dot{\Sigma}(\btheta_0+\bu\eta_{n,p})-\dot{\Sigma}(\btheta_0)=\sum_{j=1}^{q}\der{\dot{\Sigma}(\btheta^*)}{\theta_j}u_{\theta_j}\eta_{n,p}$, where $\btheta^*$ is between $\btheta_0+\bu\eta_{n,p}$ and $\btheta_0$. Denote $\dot{\Sigma}^j(\btheta^*)=\der{\dot{\Sigma}(\btheta^*)}{\theta_j}$, then
\begin{align*}
&F(\btheta_0+\bu\eta_{n,p},\hbeta^{(0)};\bZ)-F(\btheta_0,\hbeta^{(0)};\bZ)\\\nonumber
=&\left[ F(\btheta_0+\bu\eta_{n,p},\bbeta_0)-F(\btheta_0,\bbeta_0)\right]-\sum_{j=1}^{q}(\bZ-\bX\bbeta_0)^T\dot{\Sigma}^j(\btheta^*)\bX(\hbeta^{(0)}-\bbeta_0)u_{\theta_j}\eta_{n,p}\\\nonumber
&-\half\sum_{j=1}^{q}(\hbeta^{(0)}-\bbeta_0)^TX^T \dot{\Sigma}^j(\btheta^*) X (\hbeta^{(0)}-\bbeta_0)u_{\theta_j}\eta_{n,p}\\\nonumber
=&(I)+(II)+(III)
\end{align*}
where
\begin{align*}
(I)=& F(\btheta_0+\bu\eta_{n,p},\bbeta_0)-F(\btheta_0,\bbeta_0)\\\nonumber
=&-\frac{n-1}{2}\bu^T T \bu \eta_{n,p}^2+ \left(\der{F}{\btheta}(\btheta_0)\right)^T\bu \eta_{n,p}+ \frac{1}{2}\bu^T (\frac{\partial^2 F}{\partial \btheta \partial \btheta^T}(\btheta^*)+(n-1)T) \bu \eta_{n,p}^{2}\\\nonumber
=&(1)+(2)+(3)\\\nonumber
(II)=&\sum_{j=1}^{q}(\hbeta^{(0)}-\bbeta_0)^T\bX^T\dot{\Sigma}^j(\btheta^*)(\bZ-\bX\bbeta_0)u_j\eta_{n,p}\\\nonumber
(III)=&-\half \sum_{j=1}^{q}(\hbeta^{(0)}-\bbeta_0)^T \bX^T \dot{\Sigma}^j(\btheta^*)\bX(\hbeta^{(0)}-\bbeta_0)u_j\eta_{n,p}
\end{align*}
We consider (I) first. $T$ in $(1)$ is $q\times q$ matrix with its $(i,j)$th element as $t_{ij}(\btheta_0)$, where $t_{ij}(\btheta)=tr(\inSigma(\btheta)\Sigma_i(\btheta)\inSigma(\btheta)\Sigma_j(\btheta))$. By A\ref{asu_covnorm_inv} and A\ref{asu_tij_lim} and using the similar argument in proving the bound of $(I)$ in theorem \ref{ThconsistMLE}, there exist a constant $K$, such that $(1)=-\frac{n}{2}\sum_{i,j=1}^{q}t_{ij}(\btheta_0)u_iu_j\eta_{n,p}^2\le -Knp\eta_{n,p}^2\norm{\bu}{2}^2$ with probability tending to 1. In regarding to (2),
\begin{align*}
\der{F}{\theta_j}(\btheta_0)=\frac{n-1}{2}tr(\Sigma^j(\btheta_0)\Sigma(\btheta_0))-\half(\bZ-\bX\bbeta)^T\dSigma^j(\btheta_0)(\bZ-\bX\bbeta)
\end{align*}
Notice that $\bZ-\bX\bbeta\sim N(0,\dSigma(\btheta))$ and $tr(\dSigma^j\dSigma)=(n-1)tr(\Sigma^j\Sigma)$. By lemma $\ref{lemma_sumEps_sq}$,
\begin{align*}
(2)=&O_p(\sqrt{tr(\dSigma\dSigma^j\dSigma\dSigma^j)}\eta_{n,p}\norm{\bu}{2})=O_p(\sqrt{(n-1)tr(\Sigma\Sigma^j\Sigma\Sigma^j)}\eta_{n,p}\norm{\bu}{2})\\\nonumber
=&O_p(\sqrt{(n-1)\norm{\Sigma^j}{F}^2}\eta_{n,p}\norm{\bu}{2})\norm{\bu}{2})
\end{align*}
By A\ref{asu_eigen_sig}, A\ref{asu_star_coveigen}, $(2)=O_p(\sqrt{np}\eta_{n,p})$.

Then we consider (3). For any $j,k=1,2,...,q$
\begin{align*}
\frac{\partial^2(F)}{\partial \theta_j \partial \theta_k}(\btheta^*)=\frac{n-1}{2}(tr(\Sigma^{jk}(\btheta^*)\Sigma(\btheta^*)-t_{jk}(\btheta^*))-\half (\bZ-\bX\bbeta)^T\dSigma^{jk}(\btheta^*)(\bZ-\bX\bbeta)
\end{align*}
Thus $(3)$ could be written as:
\begin{align*}
(3)=&\sum_{j,k=1}^{q}\frac{n-1}{2}(tr(\Sigma^{jk}(\btheta^*)\Sigma(\btheta_0))-\half(\bZ-\bX\bbeta)^T \Sigma^{jk}(\btheta^*)(\bZ-\bX\bbeta))u_{\theta_j}u_{\theta_k}\eta_{n,p}^2\\\nonumber
&+\sum_{j,k=1}^{q}\frac{n-1}{2}(tr(\Sigma^{jk}(\btheta^*)\Sigma(\btheta^*))-tr(\Sigma^{jk}(\btheta^*)\Sigma(\btheta_0)))u_{\theta_j}u_{\theta_k}\eta_{n,p}^2\\\nonumber
&+\sum_{j,k=1}^{q}\frac{n-1}{2}(t_{jk}(\btheta_0)-t_{jk}(\btheta^*))u_{\theta_j}u_{\theta_k}\eta_{n,p}^2\\\nonumber
=&(i)+(ii)+(iii)
\end{align*}
By lemma \ref{lemma_sumEps_sq} and A\ref{asu_star_coveigen2}, $(i)=O_p(\sqrt{(n-1)tr(\Sigma\Sigma^{jk}\Sigma\Sigma^{jk})}\eta_{n,p}^2)=O_p(\sqrt{np}\eta_{n,p}^2)$.
Similar to the deriving the order of (5) and (6) in the proof of Theorem\ref{ThconsistMLE}, $(ii)=O_p(n\sqrt{p^3}\eta_{n,p}^3)$ and $(iii)=O_p(np\eta_{n,p}^3)$. By choosing large $C=\norm{\bu}{2}$, the minimal rate that $(2)$ and $(3)$ are dominated by $(1)$ is $\eta_{n,p}=O_p(\frac{1}{\sqrt{np}})$.

Now we consider $(II)$. Denote $B=\sum_{i=1}^{n-1}\bX_i$. Then by lemma \ref{lemma_sumEps}, for any $j=1,2,...,q$,
\begin{align*}
&(\hbeta^{(0)}-\bbeta_0)^T\bX^T\dot{\Sigma}^j(\btheta^*)(\bZ-\bX\bbeta_0)=O_p(\sqrt{tr(\bX^T \dSigma^j(\btheta^*)\dSigma(\btheta^*)\dSigma^j(\btheta^*)\bX)}\norm{\hbeta^{(0)}-\bbeta_0}{2})\\\nonumber
=&O_p(\sqrt{tr(\sum_{i=1}^{n-1}\bX^T_j\Sigma^{j*}\Sigma^*\Sigma^{j*}\bX_i)+tr(B^T \Sigma^{j*}\Sigma^*\Sigma^{j*}B)}\norm{\hbeta^{(0)}-\bbeta_0}{2})
\end{align*}
where $\Sigma^*=\Sigma(\btheta^*)$ and $\Sigma^{j*}=\Sigma^j(\btheta^*)$. Notice that $\sum_{i=1}^{n-1}\bX_i^T\bX_i=(\frac{n_1n_2}{n}-\frac{n_1^2}{n^2})I_{p\times p}$ and $(\sum_{i=1}^{n-1}\bX_i)^T(\sum_{i=1}^{n-1}\bX_i)=\frac{n_1^2}{n^2}I_{p\times p}$. Then by A\ref{asu_star_coveigen}
\begin{align*}
tr(\sum_{i=1}^{n-1}\bX^T_j\Sigma^{j*}\Sigma^*\Sigma^{j*}\bX_i)=(\frac{n_1n_2}{n}-\frac{n_1^2}{n_2})tr(\Sigma^{j*}\Sigma^*\Sigma^{j*})\le \lambda_{\max}(\Sigma)\frac{n_1n_2}{n}\norm{\Sigma^{j*}}{F}^2=O_p(np).
\end{align*}
Similarly $tr(B^T\Sigma^{j*}\Sigma^*\Sigma^{j*}B)=O_p(np)$.

Since $\norm{\bbeta^{(0)}-\bbeta_0}{2}=O_p(\sqrt{\frac{s}{n}})$, $(II)=O_p(\sqrt{ps}\eta_{n,p})$

For (III), by A\ref{asu_star_coveigen}, for any $j=1,2,...,q$ we also have:
\begin{align*}
&(\hbeta^{(0)}-\bbeta_0)^T \bX^T \dot{\Sigma}^j(\btheta^*)\bX(\hbeta^{(0)}-\bbeta_0)\\\nonumber
=& (\hbeta^{(0)}-\bbeta_0)^T \bX^T diag_{n-1}{\Sigma}^j(\btheta^*) (\tilde{I}_{n-1,p}+\tilde{J}_{n-1,p})\bX (\hbeta^{(0)}-\bbeta_0)\\\nonumber
\le& \lambda_{\max}(\Sigma^{j*})((\hbeta^{(0)}-\bbeta_0)^T\left(\sum_{i=1}^{n-1}\bX_i^T\bX_i +(\sum_{i=1}^{n-1}\bX_i)^T (\sum_{i=1}^{n-1}\bX_i)\right)(\hbeta^{(0)}-\bbeta_0))\\\nonumber
=&O_p(n\norm{\hbeta^{(0)}-\bbeta_0}{2}^2)=O_p(s)
\end{align*}
Thus $(III)=O_p(s\eta_{n,p})$. Both $(II)$ and $(III)$ are dominated by $(I)$ while $\eta_{n,p}=O_p(\sqrt{\frac{1}{np}})$. This concludes the proof  of $\norm{\htheta^{(0)}-\btheta_0}{2}=O_p(\sqrt{\frac{1}{np}})$

\item[(c)]Write $\hbeta^{(1)}=(\hbeta_1^{(1)},\hbeta_2^{(2)})^T$. Then we prove $\norm{\hbeta^{(1)}-\bbeta_0}{2}=O_p(\sqrt{\frac{s}{n}})$ and $\hbeta_2^{(1)}=0$ with probability tending to 1, where $\hbeta^{1}_{(1)}$ formed by elements in $supp(\bbeta_0)$ and $\hbeta_2^{(1)}$ is a $p-s$ sub-vector of $\hbeta^{(1)}$.
Let $\eta_{n,p}=O_p(\norm{\htheta^{(0)}-\btheta_0}{2})=O_p(\sqrt{\frac{1}{np}})$. We use two steps to prove the consistency and sparsity.

\textit{step 1} We first prove consistency on s-dimensional space. Define the log-likelihood function for $\bbeta^1$ as
\begin{align*}
\bar{Q}(\htheta^{(0)},\bbeta^1)=-\frac{np}{2}log(2\pi)-\half log\abs{\dSigma(\htheta^{(0)})}-\half(\bZ-\bX^1\bbeta^1)^T \indSigma(\htheta^{(0)})(\bZ-\bX^1\bbeta^1)-n\sum_{j=1}^{n}P_{\lambda}(\abs{\beta_j})
\end{align*}
where $\bbeta^1$ is sub-vector of $\bbeta_0=(\bbeta^{1T},\bbeta^{2T})^T$ formed by elements in supp $\bbeta_0$. We first $\norm{\hbeta^{(1)}_1-\bbeta^{1}}{2}=O_p(\sqrt{\frac{s}{n}})$. It is sufficient to prove that for any $\epsilon>0$, the smallest rate of $\xi_{n,p}$ is $\sqrt{\frac{s}{n}}$ such that we have:
\begin{align*}
P(\sup_{\norm{\bu}{2}=C}\bar{Q}(\htheta^{(0)},\bbeta^1+u\xi_{n,p})<\bar{Q}(\htheta^{(0)},\bbeta^1))>1-\epsilon
\end{align*}
where $\bu\in\IR^s$. This implies that with probability tending to 1, there is a local maximizer $\hbeta^{(1)}_1$ of the function $\bar{Q}$ in the neighborhood of $\bbeta_0^1$ with the radius of $\bbeta_0^1$ at most proportional to $\xi_{n,p}$.

\begin{align*}
&\bar{Q}(\htheta^{(0)},\bbeta_0^1+u\xi_{n,p})-\bar{Q}(\htheta^{(0)},\bbeta_0^1)\\\nonumber
=&-\half \bu^T \bX^{1T}\indSigma(\htheta^{(0)})\bX^1\bu\xi_{n,p}^2-(\bZ-\bX^1\bbeta_0^1)\indSigma(\htheta^{(0)})\bX^1\bu\xi_{n,p}\\\nonumber
&-np\sum_{j=1}^{m}\big( p_{\lambda_{n,p}}^{'}(|\beta_{0j}|)sgn(\beta_j)u_{\beta_j}\xi_{n,p}+ p_{\lambda_{n,p}}^{''} (|\beta_{0j}|)u_{\beta_j}^{2}\xi_{n}^2)(1+o(1))\big)\\\nonumber
=&(I)+(II)+(III)
\end{align*}
By Taylor's expansion, $\dSigma^{-1}(\htheta^{(0)})=\dSigma^{-1}(\btheta_0)+\sum_{j=1}^{q}\dSigma^j(\btheta*)$, where $\btheta*$ is a $q$ dimension vector between $\btheta_0$ and $\htheta^{(0)}$. Therefore,
\begin{align*}
(I)=&-\half \bu^T \bX^{1T}\indSigma(\btheta_0)\bX^1\bu\xi_{n,p}^2-\half \sum_{j=1}^{q}\bu^T \bX^{1T}\dSigma^j(\btheta^*)\bX^1\bu\xi_{n,p}^2u_j\eta_{n,p}\\\nonumber
=&(1)+(2)
\end{align*}
and
\begin{align*}
(II)=&-(\bZ-\bX^1\bbeta_0^1)\indSigma(\htheta^{(0)})\bX^1\bu\xi_{n,p}+\sum_{j=1}^{q}(\bZ-\bX^1\bbeta_0^1)\dSigma^j(\btheta^*)\bX^1\bu\xi_{n,p}u_j\eta_{n,p}\\\nonumber
=&(3)+(4)
\end{align*}

Since $\lambda_{\min}(\inSigma)>0$ and notice $\sum_{i=1}^{n-1}\bX_i^{1T}{\bX_i^1}=(\frac{n_1n_2}{n}-\frac{n_1^2}{n^2})I_s$, $\left( \sum_{i=1}^{n-1}\bX_{i=1}^{1T}\right) \left( \sum_{i=1}^{n-1}\bX_{i=1}^{1} \right)=\frac{n_1^2}{n^2}I_{s}$
\begin{align*}
(1)=&\half \bu^T \bX^{1T}diag_{n-1}(\inSigma)(\tilde{I}_{n-1,p}+\tilde{J}_{n-1,p}) \bX^1 \bu \xi_{n,p}^2\\\nonumber
=&-\half \bu^T \sum_{i=1}^{n-1}\bX_i^{1T}\inSigma\bX_i^1\bu \xi_{n,p}^2-\half \bu^T \sum_{i=1}^{n-1}\bX_i^{1T}\inSigma\sum_{i=1}^{n-1}\bX_i^1\bu \xi_{n,p}^2\\\nonumber
\le &-\half \bu^T \left(\sum_{i=1}^{n-1}\bX_i^{1T}\bX_i^1+\sum_{i=1}^{n-1}\bX_i^{1T}\sum_{i=1}^{n-1}\bX_i^1\right) \bu \xi_{n,p}^2\lambda_{\min}(\inSigma)\\\nonumber
= &-\half\frac{n_1n_2}{n}\norm{\bu}{2}^2\xi_{n,p}^2\lambda_{\min}(\inSigma)\\\nonumber
\le &-\half\frac{\pi(1-\pi)}{\lambda_{\max}(\Sigma)}n\xi_{n,p}^2\norm{\bu}{2}^2
\end{align*}
By similar argument and A\ref{asu_star_coveigen}, $(2)=O_p(n\xi_{n,p}^2\eta_{n,p})=o_p((1))$  while $\eta_{n,p}=O_p(\sqrt{\frac{1}{np}})$.
For (II), by lemma \ref{lemma_sumEps} and A\ref{asu_eigen_sig},
\begin{align*}
(3)=&O_p(\sqrt{tr(\bX^1\indSigma(\btheta_0)\bX^1)}\norm{\bu}{2}\xi_{n,p})\\\nonumber
=& O_p(\sqrt{\lambda_{\max}(\Sigma)\frac{n_1n_2}{n}tr(I_{s\times s})}\norm{\bu}{2}\xi_{n,p})\\\nonumber
=& O_p(\sqrt{ns}\norm{\bu}{2}\xi_{n,p}))
\end{align*}
Similarly, $(4)=O_p(\sqrt{ns}\norm{\bu}{2}\xi_{n,p}\eta_{n,p})=o_p((3))$. So we have $(II)=O_p(\sqrt{ns}\norm{\bu}{2}\xi_{n,p}))$.

$(III)=(5)+(6)$, where
\begin{align}
(5)&=-n\sum_{j=1}^{s} p_{\lambda_{n,p}}^{'}(|\beta_{0j}|)sgn(\beta_j)u_{\beta_j}\xi_{n,p}\\\nonumber
(6)&=-n \sum_{j=1}^{s} p_{\lambda_{n,p}}^{''} (|\beta_{0j}|)u_{\beta_j}^{2}\xi_{n}^2)(1+o(1))
\end{align}
Since $a_{n,p}=O_p(\frac{1}{\sqrt{n}})$ by A\ref{asu_penfunc1},
\begin{align}
|(5)|&\le n\sqrt{s} a_{n,p}\norm{\bu}{2}=O_p(\sqrt{ns}\xi_{n,p}\norm{\bu}{2})\\\nonumber
\end{align}
By A\ref{asu_penfunc2}
\begin{align}
|(6)|&\le 2n \xi_{n,p}^2 \sum_{j=1}^{s}p^{''}(\beta_{0j})u_{\beta_j}^2\le2n\xi_{n,p}^2b_{n,p}\norm{\bu}{2}^2\\\nonumber
&=o_p(n\xi_{n,p}^2)
\end{align}

By choosing large $C=\norm{\bu}{2}$, the smallest rate of $\xi_{np}$ that $(II)$ and $(III)$ are dominated by $(I)$ is $\xi_{n,p}=O_p(\sqrt{\frac{s}{n}})$. This completes the proof that $\norm{\hbeta^{(1)}_1-\bbeta_0^1}{2}=O_p(\sqrt{\frac{s}{n}})$.

\textit{step 2}. in step 2 we prove that the vector $\hbeta=(\hbeta_1^{(1)},0)$ is a strict local maximizer on $\IR^d$. It is sufficient to prove for any given $\bbeta\in \IR^d$ satisfying $\norm{\bbeta-\bbeta_0}{2}=O_p(\sqrt{\frac{s}{n}})$, we have $Q(\bbeta^s,\htheta^{(0)})\ge Q(\bbeta,\htheta^{(0)})$, where $\bbeta=(\bbeta^{1T},\bbeta^{2T})^T$ and $\bbeta^s=(\bbeta^{1T},0^T)^T$.

Let $\epsilon=C\sqrt{\frac{s}{n}}$, it is sufficient to prove for $j=s+1,s+2,...,p$:
\begin{align}\label{step2dQ}
&\der{Q(\bbeta,\htheta^{(0)})}{\beta_j}<0 \  for\   0<\beta_j<\epsilon\\\nonumber
&\der{Q(\bbeta,\htheta^{(0)})}{\beta_j}>0 \  for\   -\epsilon<\beta_j<0
\end{align}
\begin{align}
\der{Q(\bbeta,\htheta^{(0)})}{\bbeta_j}=&(\bZ-\bX\bbeta_0)^T\dSigma^{-1}(\htheta^{(0)})\bX_{j}+\sum_{l=1}^{p}\bX_{l}^T\indSigma(\htheta^{(0)})\bX_{j}(\beta_{l}-\beta_{0l})\\\nonumber
&-nP'_{\lambda}(|\beta_j|)sgn(\beta_j)\\\nonumber
=&(I)+(II)+(III)
\end{align}
where $\bX_{j}$ is the $j$th column of $\bX$.

We first consider (I). By Taylor's expansion,
\begin{align*}
(I)=&(\bZ-\bX\bbeta_0)^T\indSigma(\btheta_0)\bX_j+\sum_{k=1}^{q}(\bZ-\bX\bbeta_0)^T\dSigma^k(\btheta^*)\bX_j u_k\eta_{n,p}\\\nonumber
=&(5)+(6)
\end{align*}

Notice $\sum_{i=1}^{n-1}\bX_{ij}^T\bX_{ij}=\frac{n_1n_2}{n}-\frac{n_1^2}{n^2}$ and $(\sum_{i=1}^{n-1}\bX_{ij})^T (\sum_{i=1}^{n-1}\bX_{ij})=\frac{n_1^2}{n^2}$.
\begin{align*}
(5)=&(\bZ-\bX\bbeta_0)^T diag_{n-1}(\inSigma)(\tilde{I}_{n-1,p}+\tilde{J}_{n-1,p})\bX_j=O_p(\sqrt{\bX_j\indSigma\bX_j})\\\nonumber
=&O_p(\sqrt{\sum_{i=1}^{n-1}\bX_{ij}^T\inSigma\bX_{ij}+\sum_{i=1}^{n-1}\bX_{ij}^T\inSigma\sum_{i=1}^{n-1}\bX_{ij}})
\end{align*}
where $\bX_{ij}$ is the $j$th column of $\bX_i$.

Noticing $\sum_{i=1}^{n-1}\bX_{ij}^T\bX_{ij}=\frac{n_1n_2}{n}-\frac{n_1^2}{n^2}$, $(\sum_{i=1}^{n-1}\bX_{ij})^T(\sum_{i=1}^{n-1}\bX_{ij})=\frac{n_1^2}{n^2}$ and $\lambda_{\max}(\inSigma)\le \infty$, we have $(5)=O_p(\sqrt{n})$. Similarly, $(6)=O_p(\sqrt{n}\eta_{n,p})=o_p((3))$, which is dominated by (5) if $\eta_{n,p}=o(1)$.

For (II), by Taylor's expansion,
\begin{align*}
(II)=&\sum_{l=1}^{p}\bX_{l}^T\indSigma(\theta_0)\bX_{j}(\beta_{l}-\beta_{0l})+\sum_{k=1}^{q}\sum_{l=1}^{p}\bX_{l}^T\dSigma^k(\theta^{*})\bX_{j}(\beta_{l}-\beta_{0l})(\theta_k^*-\theta_{0k})\\\nonumber
=&(7)+(8)
\end{align*}
\begin{align*}
(7)=&\sum_{l=1}^{p}\bX_{l}^T diag_{n-1}(\inSigma)(\tilde{I}_{n-1,p}+\tilde{J}_{n-1,p}) \bX_{j}(\beta_{l}-\beta_{0l})\\\nonumber
=&\sum_{l=1}^{p}\left( \sum_{i=1}^{n-1}\bX_{il}^T\inSigma\bX_{ij} +   \sum_{i=1}^{n-1}\bX_{il}^T\inSigma\sum_{i=1}^{n-1}\bX_{ij}  \right) (\beta_{l}-\beta_{0l})
\end{align*}
Notice
\begin{align*}
&\sum_{i=1}^{n-1}\bX_{il}^T\inSigma\bX_{ij}\le \sum_{i=1}^{n-1}(\bX_{il}^T\inSigma\bX_{il}))^{1\over 2}(\bX_{ij}^T\inSigma\bX_{ij}))^{1\over 2}\\\nonumber
\le& \sum_{i=1}^{n-1} \lambda_{\max}(\inSigma)(\bX_{il}^T\bX_{il}))^{1\over 2}(\bX_{ij}^T\bX_{ij}))^{1\over 2}\\\nonumber
=&\lambda_{\max}(\inSigma)(\frac{n_1n_2}{n}-\frac{n_1^2}{n^2})=O_p(n).
\end{align*}
Also, let $B_l=\sum_{i=1}^{n-1}\bX_{il}$. Then $B_l^TB_l=\left ( \frac{n_1}{n} \right )^2$.
\begin{align*}
\sum_{i=1}^{n-1}\bX_{il}^T\inSigma\sum_{i=1}^{n-1}\bX_{ij}=B_l^T\inSigma B_j\le (B_l^T\inSigma B_l)^{1/2}(B_j^T\inSigma B_j)^{1/2}\le  \lambda_{\max}(\inSigma)\frac{n_1^2}{n^2}
\end{align*}

Then
\begin{align}
(7)=O_p(n\norm{\bbeta-\bbeta_0})= O_P(\sqrt{ns})
\end{align}
Similarly, $(8)=O_p(\sqrt{ns}\eta_{n,p})$
Thus
\begin{align}\label{step2Q}
\der{Q(\bbeta)}{\beta_{j}}=n\lambda_{n,p} \Big( O_P(\frac{\sqrt{s}}{\sqrt{n}\lambda_{n,p}}))+\frac{P_{\lambda_{n,p}}^{'}(|\beta_j|)}{\lambda_{n,p}}sgn(\beta_j) \Big)
\end{align}

By assumption \ref{asu_sparsity_1} and \ref{asu_sparsity_2}, the sign of \ref{step2Q} is determined by $\beta_j$, hence \ref{step2dQ} followed. This implies $\hbeta^{(1)}$ should satisfy sparse property and completes the proof of step 2.

\item[(d)]Lastly we prove $\norm{\htheta^{(1)}-\btheta_0}{2}=O_p(\sqrt{\frac{1}{np}})$. Since $\norm{\hbeta^{(1)}-\bbeta_0}{2}=O_p(\sqrt{\frac{s}{n}})$, it is the same as the proof of (b). We omit the detail here and this completes the proof.
\end{itemize}
\end{proof}

\section{Proofs for consistency for one-step PMLE with tappering}
\begin{lemma}\label{lem_tapermatrix}
Assume A\ref{asu_domain}, A\ref{asu_covfunc}, A\ref{asu_tapperRange} and A\ref{asu_covfunc_int} hold, we have:
\begin{itemize}
\item [(1)]$\norm{\Sigma(\btheta)-\Sigma(\btheta)_T}{\infty}=O(\frac{1}{p^{\delta_0}})$;
\item [(2)]$\norm{\Sigma_k(\btheta)-\Sigma_{k,T}(\btheta)}{\infty}=O(\frac{1}{p^{\delta_0}})$;
\item [(3)]$\norm{\Sigma_{jk}(\btheta)-\Sigma_{jk,T}(\btheta)}{\infty}=O(\frac{1}{p^{\delta_0}})$.
\end{itemize}
The matrix norm $\norm{\cdot}{\infty}$ for the $p\times p$ matrix $A=[a_{ij}]_{i,j=1}^{p}$ is defined as the maximum of row sumation, i.e. $\norm{A}{\infty}=\max_{i}\sum_{j=1}^{p}\abs{a_{ij}}$
\end{lemma}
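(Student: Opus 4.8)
The plan is to reduce all three bounds to one estimate on a weighted row sum of the covariance (or its derivatives), exploiting that the taper matrix $\textbf{K}(w)$ does not depend on $\btheta$. Since $\Sigma_T = \Sigma \circ \textbf{K}(w)$ and differentiation in $\btheta$ commutes with the fixed Schur factor, we have $\Sigma_{k,T} = \Sigma_k \circ \textbf{K}(w)$ and $\Sigma_{jk,T} = \Sigma_{jk}\circ \textbf{K}(w)$. Hence, under A\ref{asu_covfunc} (stationarity/isotropy), each of the three difference matrices has $(i,j)$ entry of the form $g(h_{ij})\left(1 - K_T(h_{ij},w)\right)$ with $g\in\{\gamma,\gamma_k,\gamma_{jk}\}$, so it suffices to bound $\max_i \sum_{j=1}^p |g(h_{ij})|\,(1 - K_T(h_{ij},w))$ for each such $g$; the three parts then follow from a single argument.

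First I would record an elementary pointwise bound on the taper: from $K_T(h,w)=[(1-h/w)_+]^2$ one checks $0\le 1-K_T(h,w)\le 2h/w$ for every $h\ge 0$ (for $0\le h<w$ expand the square to get $2h/w-h^2/w^2\le 2h/w$; for $h\ge w$ use $1-K_T=1\le 2h/w$). This single inequality treats near field and tail simultaneously and yields
\begin{align*}
\norm{\Sigma-\Sigma_T}{\infty}\le \frac{2}{w}\max_i\sum_{j=1}^p h_{ij}\,|\gamma(h_{ij})|,
\end{align*}
and verbatim analogues with $\gamma$ replaced by $\gamma_k$ and $\gamma_{jk}$.

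Next I would show the weighted row sum is bounded by a constant uniformly in $i$ and $p$. Using A\ref{asu_domain}, I partition the sites into spherical shells $B_m^i=\{j: m\delta\le h_{ij}<(m+1)\delta\}$ about each $s_i$; the minimum-separation condition $\norm{s_i-s_j}{2}\ge\epsilon$ gives a packing bound $|B_m^i|\le K\rho\,m^{d-1}$ with constants depending only on $\epsilon,\delta,d$. Summing a monotone envelope of the integrand over shells converts the row sum into a Riemann sum dominated by $K\rho\int_0^\infty h\cdot h^{d-1}|g(h)|\,dh=K\rho\int_0^\infty h^d|g(h)|\,dh$, which is finite by A\ref{asu_covfunc_int} (the condition $g\in\pounds$ is precisely $\int_0^\infty h^d g(h)\,dh<\infty$). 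Crucially this bound does not depend on $i$ or $p$, so each row sum is $O(1)$.

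Combining the two displays gives $\norm{\Sigma-\Sigma_T}{\infty}=O(1/w)$, and identically for the two derivative matrices. Finally A\ref{asu_tapperRange} gives $w=w_p\asymp p^{\delta}$, so $1/w=O(p^{-\delta})$ and all three claims hold with $\delta_0=\delta$. I expect the \emph{main obstacle} to be the uniform-in-$(i,p)$ control of the weighted row sum: making the shell-counting packing estimate rigorous and justifying the passage from the discrete shell sum to $\int_0^\infty h^d|g(h)|\,dh$, which requires dominating $\max_{j\in B_m^i} h_{ij}|g(h_{ij})|$ by a monotone integrable envelope (the envelope $r(h)$ used in the Remarks). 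Once this uniform integral bound is established, parts (1)--(3) are obtained by applying the identical argument to $\gamma$, $\gamma_k$, and $\gamma_{jk}$.
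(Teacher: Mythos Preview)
Your proposal is correct and follows essentially the same route as the paper: shell decomposition $B_m^i$ around each site using A\ref{asu_domain}, a packing bound $|B_m^i|\le K\rho\,m^{d-1}$, and comparison with the integral $\int_0^\infty h^d|g(h)|\,dh$ from A\ref{asu_covfunc_int}, leading to $O(1/w_p)=O(p^{-\delta})$. The only difference is cosmetic: the paper splits each row sum into a near-field piece $h_{ij}<w$ (bounded via $|1-K_T|\le 2h/w$) and a tail piece $h_{ij}\ge w$ (bounded directly), whereas your single inequality $1-K_T(h,w)\le 2h/w$ for all $h\ge 0$ merges these two cases and lands on the same integral in one step.
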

\begin{proof}
We show (1) in detail and omit the details for (2) and (3), as similar arguments can be applied.
\begin{align}
\norm{\Sigma(\btheta)-\Sigma_T(\btheta)}{\infty}=\max_{i}\sum_{j=1}^{p}\abs{\gamma(h_{ij};\btheta)K_T(h_{ij},w_{p})-\gamma(h_{ij};\btheta)}
\end{align}
where $h_{ij}=\norm{s_i-s_j}{2}$ is the distance between site $s_i$ and $s_j$.
For any $i=1,2,...,p$,
\begin{align}
&\sum_{j=1}^{p}\abs{\gamma(h_{ij};\btheta)K_T(h_{ij},w_{p})-\gamma(h_{ij};\btheta)}\\\nonumber
&\le \sum_{h_{ij}<w_{p}}\abs{\gamma(h_{ij};\btheta)K_T(h_{ij},w_{p})-\gamma(h_{ij};\btheta)} +\sum_{h_{ij}\ge w_{p}}\gamma_0(\btheta,h_{ij})\\\nonumber
&=(I)+(II)
\end{align}

Let $A^i=\{j:h_{ij}>w_{p}\}$ and $B_{m}^{i}=\{j:(m-1)\delta\le h_{ij}<m\delta\}$, where $\Delta$ is independent of $n$ and $p$. Then $A^i\subset \bigcup_{m=\lfloor\frac{w_{p}}{\Delta}\rfloor}^{\infty}B_{m}^{i}$. Let $V(R)$ be the volume of a d-dimensional baa of radius $R$. Then the volume of $B_m^i$ is $B_m^i=V(m\delta)-V((m-1)\delta)=f_{d-1}(m)\delta^d$, where $f_{d-1}(m)$ is a polynomial function of $m$ with degree of $d-1$. By A\ref{asu_domain}, the number of sites in any unit subset of $D \subset \IR^d$ is bounded, say $\rho$. Let $\#\{A\}$ denote the cardinality of a discrete set $A$. Then we have $\#\{B_{m}^{i}\}\le f_{d-1}(m)\delta^d\rho$. Then exist a constant $K$ such that $f_{d-1}(m)\le Km^{d-1}$ Then
\begin{align}
(II)=&\sum_{h_{ij}\ge w_{p}}\abs{\gamma(\btheta,h_{ij})}\le\sum_{m=\lfloor\frac{w_{p}}{\delta}\rfloor}^{\infty}\sum_{j\in B_{m}^{i}}\abs{\gamma(\btheta,h_{ij})}\\\nonumber
&\le K\rho \sum_{m=\lfloor\frac{w_{p}}{\delta}\rfloor}^{\infty} m^{d-1}\delta^{d} \max_{j\in B_m^i}\abs{\gamma(\btheta,h_{ij})}\\\nonumber
&\le K\rho\int_{w_{p}}^{\infty}x^{d-1}\abs{\gamma(\btheta,x)}dx \le  \frac{K\rho}{w_{p}}\int_{0}^{\infty}x^d \abs{\gamma(\btheta,x)}dx
\end{align}

Let $A_2^i=\{j:h_{ij}\le w_{p}\}$. Then $A_{2}^{i}\subset \bigcup_{m=1}^{\lfloor \frac{w_{p}}{\delta}\rfloor+1}$
\begin{align}
(I)=&\sum_{h_{ij}<w_{p}}\abs{\gamma(h_{ij};\btheta)-\gamma(h_{ij};\btheta)K_T(h_{ij},w_{p})}\\\nonumber
&=2\sum_{h_{ij}<w_{p}} \abs{\gamma(h_{ij};\btheta)}\frac{h_{ij}}{w_{p}}\\\nonumber
&\le\frac{2}{w_{p}}\sum_{m=1}^{\lfloor \frac{w_{p}}{\delta}\rfloor+1}\sum_{j\in B_{m}^{i}}h_{ij}\abs{\gamma(\btheta,h_{ij})}\\\nonumber
&\le \frac{2K\rho}{w_{p}}\sum_{m=1}^{\lfloor \frac{w_{p}}{\delta}\rfloor+1}(m\delta)^{d-1}
\delta \max_{j\in B^i_m}h_{ij}\abs{\gamma(\btheta,h_{ij})}\\\nonumber
&\le \frac{2K\rho}{w_{p}}\int_{0}^{\infty}x^d \abs{\gamma(\btheta,x)}dx\\\nonumber
&\le \frac{2K\rho}{w_{p}}\int_{0}^{\infty}x^d \gamma_0(\btheta,x)dx
\end{align}
$w_p$ has the same order as $p^{1/2}$ by A\ref{asu_tapperRange}. By A\ref{asu_covfunc_int}, both $(I)$ and $(II)=O(1/p^{1/2})$. This completes the proof.
\end{proof}

\begin{lemma}\label{lem_taper_cov_asu}
Assume A\ref{asu_eigen_sig}-A\ref{asu_cov_star}, A \ref{asu_tapperRange}, A\ref{asu_covfunc_int} hold, we have
\begin{itemize}
\item[(a)] $\lim_{p\to \infty}\lambda_{\min}(\Sigma_T)>0$, $\lim_{p\to \infty}\lambda_{\max}(\Sigma_T)< \infty$;

\item[(b)] There exists an open subset $\omega$ that contains the true parameter $\btheta_0$ such that for all $\btheta^*\in \omega$, we have:
\begin{itemize}
 \item[(i)] $-\infty< \lim_{p\to \infty}\lambda_{\min}(\Sigma_{T}^k(\btheta^*)) <\lim_{p\to \infty}\lambda_{\max}(\Sigma_{T}^k(\btheta^*))<\infty$;
 \item[(ii)] $-\infty< \lim_{p\to \infty}\lambda_{\min}(\Sigma_{T}^{kj}(\btheta^*)) <\lim_{p\to \infty}\lambda_{\max}(\Sigma_{T}^{kj}(\btheta^*))<\infty$;
 \item[(iii)]$\abs{\der{t_{ij,T(\btheta^*)}}{\theta_k}}=O(p)$ for all $k=1,2,...,q$.
\end{itemize}
\end{itemize}
\end{lemma}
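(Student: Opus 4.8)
The plan is to treat tapering as a small perturbation measured in spectral norm and then to \emph{inherit} every eigenvalue and trace bound from the corresponding untapered quantities $\Sigma$, $\Sigma_k$, $\Sigma_{jk}$, $\Sigma^k$, $\Sigma^{kj}$, which are already controlled by A\ref{asu_eigen_sig}, A\ref{asu_cov_star} and the display for $\Sigma^k,\Sigma^{kj}$ placed just after A\ref{asu_cov_star}. The one genuinely new ingredient needed is a conversion from the max-row-sum closeness supplied by Lemma~\ref{lem_tapermatrix} into operator-norm closeness. To this end I would first observe that $\Sigma$, $\Sigma_k$, $\Sigma_{jk}$ are symmetric and, since the taper matrix $\mathbf{K}(w)$ is symmetric, so are their Schur products $\Sigma_T$, $\Sigma_{k,T}$, $\Sigma_{jk,T}$. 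For any symmetric $A$ one has $\norm{A}{2}\le\sqrt{\norm{A}{1}\norm{A}{\infty}}=\norm{A}{\infty}$, so Lemma~\ref{lem_tapermatrix} upgrades directly to $\norm{\Sigma-\Sigma_T}{2}$, $\norm{\Sigma_k-\Sigma_{k,T}}{2}$, $\norm{\Sigma_{jk}-\Sigma_{jk,T}}{2}=O(p^{-\delta_0})$.

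With these spectral-norm estimates in hand, part (a) is immediate: Weyl's inequality gives $|\lambda_{\max}(\Sigma_T)-\lambda_{\max}(\Sigma)|\le\norm{\Sigma-\Sigma_T}{2}\to0$ and the analogous bound for $\lambda_{\min}$, so A\ref{asu_eigen_sig} transfers to $\Sigma_T$. Next I would control the inverse via the resolvent identity $\Sigma_T^{-1}-\inSigma=\Sigma_T^{-1}(\Sigma-\Sigma_T)\inSigma$; since both inverses are bounded in $\norm{\cdot}{2}$ by (a) and A\ref{asu_eigen_sig}, this yields $\norm{\Sigma_T^{-1}-\inSigma}{2}=O(p^{-\delta_0})$.

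For part (b)(i) I would write $\Sigma_T^k=-\Sigma_T^{-1}\Sigma_{k,T}\Sigma_T^{-1}$ and $\Sigma^k=-\inSigma\Sigma_k\inSigma$, note that every factor is $O(1)$ in spectral norm and each tapered factor converges to its untapered counterpart, and conclude by a telescoping expansion of the triple products that $\norm{\Sigma_T^k-\Sigma^k}{2}\to0$. As $\Sigma_T^k$ is symmetric, Weyl then transfers the bounds on $\lambda_{\min}(\Sigma^k)$, $\lambda_{\max}(\Sigma^k)$ to $\Sigma_T^k$. Part (b)(ii) is handled identically using $\Sigma^{kj}=\inSigma(\Sigma_k\inSigma\Sigma_j+\Sigma_j\inSigma\Sigma_k-\Sigma_{kj})\inSigma$ and its tapered analogue, all factors being bounded and convergent. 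For (b)(iii), differentiating $t_{ij,T}=\mathrm{tr}(\Sigma_T^{-1}\Sigma_{i,T}\Sigma_T^{-1}\Sigma_{j,T})$ in $\theta_k$ produces a finite sum of traces of products whose factors are among $\Sigma_T^{-1}$, $\Sigma_{i,T}$, $\Sigma_{j,T}$, $\Sigma_T^k$, $\Sigma_{ik,T}$, $\Sigma_{jk,T}$; by the previous steps (the second-derivative factors being controlled by Lemma~\ref{lem_tapermatrix}(3) and A\ref{asu_star_coveigen2}) each such product $M$ has bounded spectral norm, and the elementary bound $|\mathrm{tr}(M)|\le p\,\norm{M}{2}$ then gives $|\partial t_{ij,T}(\btheta^*)/\partial\theta_k|=O(p)$.

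The hard part will be Step~1, namely guaranteeing that closeness in the max-row-sum norm of Lemma~\ref{lem_tapermatrix} really does imply closeness in the operator norm; this rests entirely on the symmetry of the Schur products, and without it the perturbation argument would not go through. Once that conversion is secured, every remaining assertion is a routine application of Weyl's inequality and submultiplicativity of $\norm{\cdot}{2}$, and the single factor of $p$ appearing in (b)(iii) arises precisely from bounding the trace of an $O(1)$-norm $p\times p$ matrix. I would also need to verify that the convergence rates are uniform over the open neighborhood $\omega$ of $\btheta_0$, which follows because the integrability hypothesis A\ref{asu_covfunc_int} underlying Lemma~\ref{lem_tapermatrix} is assumed for all $\btheta\in\Xi$.
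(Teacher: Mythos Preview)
Your proposal is correct, and for part~(b)(iii) it is essentially the same telescoping-plus-trace-bound argument the paper gives. For parts~(a) and~(b)(i)--(ii), however, you take a genuinely different route. The paper does \emph{not} pass through Lemma~\ref{lem_tapermatrix} and Weyl's inequality here; instead it invokes the Schur-product eigenvalue inequality directly: since $\mathbf K(w)$ is positive semidefinite with unit diagonal, $\min_i K_{ii}\,\lambda_{\min}(\Sigma)\le\lambda(\Sigma\circ\mathbf K)\le\max_i K_{ii}\,\lambda_{\max}(\Sigma)$, so (a) follows in one line from A\ref{asu_eigen_sig}. For (b)(i)--(ii) the paper writes ``$\Sigma_T^k=\Sigma^k\circ K_T$'' and appeals to the same Schur-multiplier bound; note that this identity is only literally correct if the superscript is read as the subscript $\Sigma_{k,T}=\Sigma_k\circ\mathbf K(w)$ (derivative of the tapered covariance, which is indeed a Schur product because the taper does not depend on $\btheta$), not as the derivative of the tapered \emph{inverse}. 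Your approach via $\Sigma_T^k=-\Sigma_T^{-1}\Sigma_{k,T}\Sigma_T^{-1}$, the resolvent identity, and a telescoping comparison handles the inverse-derivative reading directly and is in that sense more careful. What the paper's Schur-product route buys is brevity (no perturbation lemma needed for (a),(b)(i)--(ii)); what your uniform perturbation route buys is that every tapered object is shown to be \emph{close} to its untapered counterpart in operator norm, not merely bounded, and the argument does not depend on which of the two readings of $\Sigma_T^k$ is intended.
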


\begin{proof}
\begin{itemize}
\item[(a)] Let $K_T=[K(h_{ij},w)]_{i,j=1}^{p}$ be the tappering covariance. By eigenvalue inequalities of Schur product:
\begin{align}
\min_{1\le i\le p}a_{ii}\lambda_{\min}(\Sigma) \le \lambda(\Sigma \circ K_T)\le \max_{1\le i\le p}a_{ii}\lambda_{\max}(\Sigma)
\end{align}
where $a_{ij}$ are the $(i,j)th$ entry of matrix $K_T$. By A\ref{asu_eigen_sig}, $\lambda_{\min}(\Sigma_T)>0$ and $\lim_{p\to \infty}\lambda_{\max}(\Sigma_T)< \infty$

\item[(b)] Since $\Sigma_{T}^k=\Sigma^{k}\circ K_T$ and $\Sigma^{kj}_{T}=\Sigma^{kj}\circ K_T$
$[d](i)$ and $[d](ii)$ hold by A \ref{asu_star_coveigen2}.
For $[2](iii)$, since $t_{ij,T}(\btheta)=tr(\inSigma_T\Sigma_{i,T}\inSigma_T\Sigma_{j,T})$
\begin{align}
\der{t_{ij,T}(\btheta)}{\theta_l}=&tr(\inSigma_T\Sigma_{l,T}\inSigma_T\Sigma_{i,T}\inSigma_T\Sigma_{j,T})+tr(\inSigma_T\Sigma_{il,T}\inSigma_T\Sigma_{j,T})\\\nonumber
&+tr(\inSigma_T\Sigma_{i,T}\inSigma_T\Sigma_{l,T}\inSigma_T\Sigma_{j,T})+tr(\inSigma_T\Sigma_{i,T}\inSigma_T\Sigma_{jl,T})\\\nonumber
=&(1)+(2)+(3)+(4)
\end{align}
Then (1) can be written as:
\begin{align}
&tr(\inSigma_T\Sigma_{l,T}\inSigma_T\Sigma_{i,T}\inSigma_T\Sigma_{j,T})\\\nonumber
=&tr((\inSigma_T-\inSigma)\Sigma_{l,T}\inSigma_T\Sigma_{i,T}\inSigma_T\Sigma_{j,T})+tr(\inSigma(\Sigma_{l,T}-\Sigma_{l})\inSigma_T\Sigma_{i,T}\inSigma_T\Sigma_{j,T})+ \\\nonumber
&tr(\inSigma\Sigma_{l}(\inSigma_T-\inSigma)\Sigma_{i,T}\inSigma_T\Sigma_{j,T})+tr(\inSigma\Sigma_{l}\inSigma(\Sigma_{i,T}-\Sigma_{i})\inSigma_T\Sigma_{j,T})+\\\nonumber
&tr(\inSigma\Sigma_{l}\inSigma\Sigma_{i}(\inSigma_T-\inSigma)\Sigma_{j,T}) +tr(\inSigma\Sigma_{l}\inSigma\Sigma_{i}\inSigma(\Sigma_{j,T}-\Sigma_{j}))+ \\\nonumber
&tr(\inSigma\Sigma_{l}\inSigma\Sigma_{i}\inSigma\Sigma_{j})
\end{align}
Define $\norm{\cdot}{s}$ for matrix A by $\norm{A}{s}=\max_{i}\{\abs{\lambda_i(A)},i=1,2,...,p\}$, where $\lambda_i(A)$ is the $i$th eigenvalue of matrix A. Notice
$$\norm{\inSigma_T-\inSigma}{s}\le \norm{\inSigma}{s}\norm{\Sigma-\Sigma_T}{s}\norm{\inSigma_T}{s}.$$

Since $\lambda_{\min}(\inSigma)=1/\lambda_{\max}(\Sigma)>0$, $\norm{\inSigma}{s}\le \lambda_{\max}(\inSigma)<\infty$. Also $\norm{\inSigma_T}{s}<\infty$, $\norm{\Sigma_{j,T}}{s}<\infty$ for all $j=1,2,...,q$. Hence $\norm{\inSigma_T-\inSigma}{s}=O_p(p^{-\delta_0})$. Then
\begin{align}
&\abs{tr((\inSigma_T-\inSigma)\Sigma_{l,T}\inSigma_T\Sigma_{i,T}\inSigma_T\Sigma_{j,T})}\\\nonumber
&\le p \norm{((\inSigma_T-\inSigma)\Sigma_{l,T}\inSigma_T\Sigma_{i,T}\inSigma_T\Sigma_{j,T})}{s}\\\nonumber
&\le  p \norm{(\inSigma_T-\inSigma)}{s} \norm{\Sigma_{l,T}}{s}\norm{\inSigma_T}{s}^2\norm{\Sigma_{i,T}}{s}\norm{\Sigma_{j,T}}{s}\\\nonumber
&=O(p/p^{\delta_0})=O(p^{1-\delta_0})
\end{align}
By similar argument, the first six terms in $(1)$ all have the order of $O(p^{1-\delta_0})$. Apply the same argument on $(2)-(4)$ we have:
\begin{align}
\der{t_{ij,T}(\btheta)}{\theta_l}=\der{t_{ij}(\btheta)}{\theta_l}+O(p^{1-\delta_0})
\end{align}
By A\ref{asu_star_covdif3}, $\der{t_{ij,T}(\btheta)}{\theta_l}=O_p(p)$. This completes the proof.
\end{itemize}
\end{proof}

\begin{proof}[\textbf{Proof of Theorem \ref{Th_consist_PMLE_T}}]
From lemma 4, all regularity conditions for $\Sigma_T$ are satisfied. The proof of \ref{Th_consist_PMLE_T} is similar to that of Theorem \ref{Th_consist_PMLE}. By replacing $\Sigma$ by $\Sigma_T$ and replacing A\ref{asu_eigen_sig}-A\ref{asu_star_covdif3} by the results in lemma 4, the results in Theorem \ref{Th_consist_PMLE_T} follows.
\end{proof}

\section{Proofs for classification using PMLE-LDA}
\begin{lemma}\label{lem_taperEstMatrix}
Let $\htheta$ be the estimate of $\btheta_0$ and $\norm{\htheta-\htheta_0}{2}=O_p(\frac{1}{\sqrt{np}})$. Define
\begin{align*}
\tSigma=\Sigma_T(\htheta)=\Sigma(\htheta)\circ K(w)
\end{align*}
where $K(w)$ is defined in section \ref{sec_tapper}. Assume A\ref{asu_domain}, A\ref{asu_covfunc} and A\ref{asu_tapperRange} and A\ref{asu_covfunc_int} hold,then
$$\norm{\tSigma-\Sigma}{2}=O_p(c_n) \ \text{and}\ \norm{\tSigma^{-1}-\inSigma}{2}=O_p(c_n)$$
where $c_{n,p}=max(\frac{w^d}{\sqrt{np}},\frac{1}{w})$
\end{lemma}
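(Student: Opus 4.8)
The plan is to split $\tSigma-\Sigma$ into a pure tapering error evaluated at the true parameter and an estimation error, and to bound each piece in the maximum-absolute-row-sum norm $\norm{\cdot}{\infty}$, which dominates the spectral norm for symmetric matrices. Writing
\begin{align*}
\tSigma-\Sigma=\big(\Sigma(\htheta)-\Sigma(\btheta_0)\big)\circ \textbf{K}(w)+\big(\Sigma(\btheta_0)\circ\textbf{K}(w)-\Sigma(\btheta_0)\big),
\end{align*}
the second term is exactly $\Sigma_T(\btheta_0)-\Sigma(\btheta_0)$, whose $\norm{\cdot}{\infty}$ is $O(1/w)$ by Lemma \ref{lem_tapermatrix} together with A\ref{asu_tapperRange}. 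Since both matrices are symmetric, $\norm{A}{2}\le(\norm{A}{1})^{1/2}(\norm{A}{\infty})^{1/2}=\norm{A}{\infty}$, so this term contributes $O(1/w)$ in spectral norm.

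For the estimation error, I would first invoke the uniform Lipschitz control of the covariance function in $\btheta$ coming from A\ref{asu_covfunc} (as in the proof of Theorem \ref{ThErrpLessn}), namely $\max_{i,j}|\gamma(h_{ij};\htheta)-\gamma(h_{ij};\btheta_0)|\le M\norm{\htheta-\btheta_0}{2}=O_p(1/\sqrt{np})$. Each row of $(\Sigma(\htheta)-\Sigma(\btheta_0))\circ\textbf{K}(w)$ has nonzero entries only at sites within distance $w$, and by the increasing-domain condition A\ref{asu_domain} (bounded site density) the number of such sites is $O(w^d)$. Hence the maximum row sum is $O_p(w^d/\sqrt{np})$, and $\norm{\cdot}{2}\le\norm{\cdot}{\infty}$ again transfers this to the spectral norm. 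Combining the two pieces gives $\norm{\tSigma-\Sigma}{2}=O_p\big(\max(w^d/\sqrt{np},\,1/w)\big)=O_p(c_{n,p})$.

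For the inverse I would use the perturbation identity $\tSigma^{-1}-\inSigma=-\inSigma(\tSigma-\Sigma)\tSigma^{-1}$, so that $\norm{\tSigma^{-1}-\inSigma}{2}\le\norm{\inSigma}{2}\,\norm{\tSigma-\Sigma}{2}\,\norm{\tSigma^{-1}}{2}$. Here $\norm{\inSigma}{2}=1/\lambda_{\min}(\Sigma)=O(1)$ by A\ref{asu_eigen_sig} (available in the context of Theorem \ref{ThErrPgeN}, which invokes this lemma). To control $\norm{\tSigma^{-1}}{2}$, note that under the rate conditions of Theorem \ref{ThErrPgeN} ($w\to\infty$ and $w^d/\sqrt{np}\to0$) we have $c_{n,p}\to0$; Weyl's inequality then yields $\lambda_{\min}(\tSigma)\ge\lambda_{\min}(\Sigma)-\norm{\tSigma-\Sigma}{2}=\lambda_{\min}(\Sigma)-o_p(1)$, which stays bounded away from zero with probability tending to one, so $\norm{\tSigma^{-1}}{2}=O_p(1)$. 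This delivers $\norm{\tSigma^{-1}-\inSigma}{2}=O_p(c_{n,p})$.

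The main obstacle is the estimation-error term: one must obtain the sharp $O(w^d)$ count of neighboring sites per row from A\ref{asu_domain} and couple it with the uniform $O_p(1/\sqrt{np})$ entrywise bound, instead of a crude estimate that sums over all $p$ columns and loses the dependence on $w$. The tapering term is already supplied by Lemma \ref{lem_tapermatrix}, and the inverse bound is a routine perturbation argument once $\lambda_{\min}(\tSigma)$ is shown to be bounded away from zero.
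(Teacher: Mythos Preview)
Your proposal is correct and matches the paper's proof almost line for line: the paper splits the row sums into the same three pieces (your tapering term corresponds to their $(II)+(III)$, your estimation term to their $(I)$), bounds $(II)+(III)$ by $O(1/w)$ via the argument of Lemma~\ref{lem_tapermatrix}, bounds $(I)$ by $M\rho w^d\norm{\htheta-\btheta_0}{2}=O_p(w^d/\sqrt{np})$ using the $O(w^d)$ neighbor count from A\ref{asu_domain}, and then applies the identity $\tSigma^{-1}-\inSigma=-\inSigma(\tSigma-\Sigma)\tSigma^{-1}$ for the inverse. Your use of Weyl's inequality to justify that $\lambda_{\min}(\tSigma)$ stays bounded away from zero is in fact a bit more careful than the paper's bare assertion of positive definiteness, but otherwise the arguments coincide.
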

\begin{proof}
\begin{align*}
\norm{\tSigma-\Sigma}{2}=&\norm{\Sigma(\htheta)\circ K(w)-\Sigma(\btheta_0)}{2}\\\nonumber
\le&\max_{i}\sum_{j=1}^{p}\abs{r(\htheta;h_{ij})K_T(h_{ij},w)-r(\btheta_0;h_{ij})}
\end{align*}
where $K_T(h,w)=[(1-h/w)_{+}]^2$. For any $i=1,2,...,p$,
\begin{align*}
&\sum_{j=1}^{p}\abs{r(\htheta;h_{ij})K_T(h_{ij},w)-r(\btheta_0;h_{ij})}\\\nonumber
\le & \sum_{h_{ij}<w}\abs{r(\htheta;h_{ij})K_T(h_{ij},w)-r(\btheta_0;h_{ij})}+\sum_{h_{ij}\ge w}\abs{r(\btheta_0,h_{ij})}\\\nonumber
\le & \sum_{h_{ij}<w}\abs{(r(\htheta;h_{ij})-r(\btheta_0;h_{ij}))K_T(h_{ij},w)} +\sum_{h_{ij}<w}\abs{r(\theta_0;h_{ij})K_T(h_{ij},w)-r(\btheta_0;h_{ij})}+\sum_{h_{ij}\ge w}\abs{r(\btheta_0,h_{ij})}\\\nonumber
=&(I)+(II)+(III)
\end{align*}
From the same proof procedure of lemma \ref{lem_tapermatrix}, we have $(II)=O_p(1/w)$ and $(III)=O_p(1/w)$. From A\ref{asu_domain} and A\ref{asu_covfunc}(iii), we have
\begin{align*}
(I)\le& \sum_{h_{ij}<w}\abs{r(\htheta;h_{ij})-r(\btheta_0;h_{ij})}\le\sum_{k=1}^{q}\sum_{h_{ij}<w}\abs{r_k(\btheta^*;h_{ij})}\abs{\hat{\theta}_k^*-\theta_{0k}}\\\nonumber
\le& M w^d \rho \norm{\htheta-\btheta_0}{2}
\end{align*}

Therefore $(I)=O_p(\frac{w^d}{\sqrt{np}})$. Combine (I), (II) and (III), $\norm{\tSigma-\Sigma}{2}=O_p(c_n)$.

Since $\Sigma(\btheta_0)$ and $\tSigma=\Sigma_T(\htheta)$ are positive definite, $\norm{\inSigma}{2}=\frac{1}{\lambda_{\min}(\Sigma)}<\infty$ and $\norm{\tSigma_T^{-1}}{2}=\frac{1}{\lambda_{\min}(\tSigma_T)}<\infty.$
\begin{align*}
\norm{\tSigma_T^{-1}-\inSigma}{2}=\norm{\inSigma(\Sigma-\tSigma_T)\tSigma_T^{-1}}{2}\le \norm{\inSigma}{2}\norm{\Sigma-\tSigma_T}{2}\norm{\tSigma_T^{-1}}{2}=O_p(c_n).
\end{align*}
 and this completes the proof.
\end{proof}

\begin{lemma}
\label{lem_sparseCov}
Assume A\ref{asu_covfunc_int} holds. Then $\max_{1\le i\le s}\sum_{k=s+1}^{p}\sigma_{ik}^2$ is bounded above by a constant.
\end{lemma}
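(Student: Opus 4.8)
The plan is to recognize that $\sigma_{ik}=\gamma(h_{ik};\btheta)$ is simply the covariance function evaluated at the inter-site distance $h_{ik}=\norm{s_i-s_k}{2}$, and to reduce the discrete sum $\sum_{k=s+1}^{p}\sigma_{ik}^2$ to a convergent one-dimensional integral by the same shell-counting device used in the remark on A\ref{asu_eigen_sig} and in the proof of Lemma~\ref{lem_tapermatrix}. First I would discard the restriction $k\ge s+1$, since the summands are nonnegative, so that $\sum_{k=s+1}^{p}\sigma_{ik}^2\le\sum_{k=1}^{p}\gamma(h_{ik};\btheta)^2$, a quantity I only need to control uniformly in $i$. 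Because the process is stationary, $\abs{\gamma(h;\btheta)}\le\gamma(0;\btheta)<\infty$ (equivalently $\gamma(0)=\Sigma_{ii}\le\lambda_{\max}(\Sigma)$), which yields the pointwise inequality $\gamma(h)^2\le\gamma(0)\,\abs{\gamma(h)}$. This is the key device that trades the square for a single power of $\gamma$, so that A\ref{asu_covfunc_int} becomes directly applicable.

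Next I would set up the annular decomposition. Fixing $i$ and a step $\delta>0$ independent of $n,p$, I partition the indices into shells $B_m^i=\{k:(m-1)\delta\le h_{ik}<m\delta\}$. Under the increasing-domain assumption A\ref{asu_domain}, the minimum separation $\norm{s_i-s_k}{2}\ge\eps$ bounds the number of sites per unit volume by a constant $\rho$, so $\#\{B_m^i\}\le K\rho\,m^{d-1}\delta^d$ for a constant $K$ depending only on $d$. Bounding $\gamma^2$ on each shell by its maximum and summing gives the Riemann-sum estimate
\begin{align*}
\sum_{k=1}^{p}\gamma(h_{ik})^2\le K\rho\sum_{m=0}^{\infty}m^{d-1}\delta^d\max_{k\in B_m^i}\gamma(h_{ik})^2\le K\rho\int_{0}^{\infty}h^{d-1}\gamma(h)^2\,dh,
\end{align*}
exactly as in the derivation of \eqref{remark(0)}. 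The right-hand side no longer depends on $i$ or $p$.

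Finally I would verify the integral is finite using A\ref{asu_covfunc_int}. Applying $\gamma(h)^2\le\gamma(0)\abs{\gamma(h)}$ and splitting at $h=1$, near the origin $\int_{0}^{1}h^{d-1}\abs{\gamma(h)}\,dh\le\gamma(0)\int_{0}^{1}h^{d-1}\,dh=\gamma(0)/d<\infty$ since $d\ge1$, while on the tail $h^{d-1}\le h^{d}$ gives $\int_{1}^{\infty}h^{d-1}\abs{\gamma(h)}\,dh\le\int_{1}^{\infty}h^{d}\abs{\gamma(h)}\,dh<\infty$ by A\ref{asu_covfunc_int}, which states $\gamma\in\pounds$. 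Hence $\int_{0}^{\infty}h^{d-1}\gamma(h)^2\,dh$ is bounded by a constant independent of $i$ and $p$, and taking the maximum over $1\le i\le s$ establishes the claim.

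The main obstacle is the mismatch between what the lemma asks about (the \emph{squared} covariances, paired with the shell-count weight $h^{d-1}$) and what A\ref{asu_covfunc_int} supplies (a single power of $\gamma$ with weight $h^{d}$). The reconciliation hinges on two elementary but essential facts working together: the uniform boundedness $\abs{\gamma}\le\gamma(0)$, which both converts $\gamma^2$ into $\abs{\gamma}$ and tames the extra factor $h^{-1}$ near the origin, and the inequality $h^{d-1}\le h^{d}$ on $[1,\infty)$, which reduces the tail to the hypothesized integrability. One must check that the resulting bound is genuinely uniform in $i$, but this is automatic because the shell-count constant $K\rho$ and the limiting integral are independent of the base site.
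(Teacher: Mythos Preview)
Your proposal is correct and follows essentially the same route as the paper: drop the restriction $k\ge s+1$, apply the shell-counting argument of Lemma~\ref{lem_tapermatrix} to convert the sum to a one-dimensional integral with weight $h^{d-1}$, then split at $h=1$ and invoke A\ref{asu_covfunc_int} on the tail. The one place where you are more careful than the paper is in handling the square: the paper's proof writes $\sigma_{ik}^2=\gamma(h_{ik};\btheta_0)$ and proceeds as if bounding $\sum\gamma(h_{ik})$, whereas you explicitly justify the passage from $\gamma^2$ to $\abs{\gamma}$ via the uniform bound $\gamma(h)^2\le\gamma(0)\abs{\gamma(h)}$. That extra line is exactly what is needed to make the argument rigorous, so your version is in fact a cleaner write-up of the same idea.
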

\begin{proof}
Since A\ref{asu_covfunc_int} holds, by similar argument as proving lemma \ref{lem_tapermatrix}, we have:
\begin{align*}
\max_{1\le i\le s}\sum_{k=s+1}^{p}\sigma_{ik}^2=&\max_{1\le i\le s}\sum_{k=s+1}^{p}\gamma(h_{ik},\btheta_0)\le \max_{1\le i\le s}\left(\sum_{0<h_{ij}<1}\gamma(h_{ik},\btheta_0)+\sum_{h_{ij}\ge 1}\gamma(h_{ik},\btheta_0)\right)\\\nonumber
&\le \int_{0}^{1}h^{d-1}\gamma(h,\btheta_0)dh+\int_{1}^{\infty}h^{d-1}\gamma(h,\btheta_0)\\\nonumber
&\le  \int_{h=0}^{1}h^{d-1}\gamma_0(h,\btheta_0)dh+\int_{1}^{\infty}h^{d}\gamma_0(h,\btheta_0)\le \infty
\end{align*}
\end{proof}

\begin{proof}[\textbf{Proof of Theorem \ref{ThErrPgeN}}]
Suppose a new observation is from class $1$, the  conditional misclassification rate of $\hat{\delta}_{PMLE}$ for class $1$ is:
\begin{align}
W_1(\hat{\delta}_{PMLE})=1-\Phi(\frac{(\bmu_1-\bar{\bY}-\frac{n_1-n_2}{2n}\hDelta)^T\inTSigma\hDelta}{\sqrt{\hDelta^T\inTSigma\Sigma\inTSigma\hDelta}})
\end{align}
Where $\bar{\bY}=\sum_{k=1}^{2}\sum_{i=1}^{n_k}\bY_{ki}$. We first consider denominator. From lemma \ref{lem_taperEstMatrix}, $\norm{\Sigma-\tSigma}{2}=O_p(c_n)$ and $\norm{\inSigma-\tSigma^{-1}}{2}=O_p(c_n)$, where $c_n=\max(\frac{w^d}{\sqrt{np}},\frac{1}{w})$ and $w$ is the threshold distance $w$. Then
\begin{align}
\hDelta^T\inTSigma\Sigma\inTSigma\hDelta=&\hDelta^T\inTSigma\hDelta+\hDelta^T\inTSigma(\Sigma-\tSigma)\inTSigma\hDelta\\\nonumber
\le& \hDelta^T\inTSigma\hDelta+\frac{\norm{\Sigma-\tSigma}{2}}{\lambda_{\min}(\tSigma)}\hDelta^T\inTSigma\hDelta\\\nonumber
=& \hDelta^T\inTSigma\hDelta(1+O_p(c_n))\\\nonumber
=& (\hDelta^T\inSigma\hDelta +\hDelta^T(\inTSigma-\inSigma)\hDelta) (1+O_p(c_n))\\\nonumber
=& \hDelta^T\inSigma\hDelta(1+O_p(c_n))
\end{align}

Write
\begin{align}
\hDelta^T\inSigma\hDelta=\bDelta\inSigma\bDelta+2(\hDelta-\bDelta)^T\inSigma\bDelta+(\hDelta-\bDelta)^T\inSigma(\hDelta-\bDelta)\\\nonumber
\end{align}
From Theorem \ref{Th_consist_PMLE}, $\norm{\hDelta-\bDelta}{2}=O_P(\sqrt{\frac{s}{n}})$. Hence $(\hDelta-\bDelta)^T\inSigma(\hDelta-\bDelta)=O_P(\frac{s}{n})$.
Also the second term
\begin{align}
(\hDelta-\bDelta)^T\inSigma\bDelta\le(\bDelta^T\Sigma^{-1}\bDelta)^{\half} \left( (\hDelta-\bDelta)^T\inSigma(\hDelta-\bDelta)\right)^{\half}
\end{align}
Since $\frac{s}{n\bDelta^T\inSigma\bDelta}\to 0$, we have
\begin{align}\label{proof_thp>n_dSd}
\hDelta^T\inSigma\hDelta=&\bDelta^T\Sigma^{-1}\bDelta(1+O_P(\sqrt{\frac{s}{n\bDelta^T\inSigma\bDelta}})+O_P(\frac{s}{n\bDelta^T\inSigma\bDelta}))\\\nonumber
&=\bDelta^T\Sigma^{-1}\bDelta(1+O_P(\sqrt{\frac{s}{n\bDelta^T\inSigma\bDelta}}))
\end{align}

Let $D_{n,p}=\max(\sqrt{\frac{s}{n\bDelta^T\inSigma\bDelta}},c_n)$, the denominator can be represented by:
$\sqrt{\hDelta^T\inTSigma\Sigma\inTSigma\hDelta}=\sqrt{\bDelta\inSigma\bDelta(1+O_p(D_{n,p}))}$.

Now consider the nominator.
\begin{align}
&(\bmu_1-\bar{\bY}-\frac{n_1-n_2}{2n}\hDelta)^T\inTSigma \hDelta\\\nonumber
&=(\bmu_1-\bar{\bY})^T\inTSigma \hDelta+\frac{n_2-n_1}{2n}\hDelta^T \inTSigma  \hDelta \\\nonumber
&=(\bmu_1-\bar{\bY}-\frac{n_2}{n}\bDelta)^T\inTSigma \hDelta +\frac{n_2}{n}\bDelta^T\inTSigma \hDelta+\frac{n_2-n_1}{2n}\hDelta^T \inTSigma \hDelta\\\nonumber
&=(1)+(2)+(3)  \\\nonumber
\end{align}
We start from (3). By lemma \ref{lem_taperEstMatrix}, $\hDelta^T \inTSigma \hDelta=\hDelta^T \inSigma \hDelta(1+O_p(c_n))$. From \ref{proof_thp>n_dSd}, (3) can be represented by $(3)=\frac{n_2-n_1}{2n}\bDelta^T\inSigma\bDelta(1+O_P(D_{n,p}))$.

For (2), first we have $\bDelta^T\inTSigma \hDelta= \bDelta^T\inSigma \hDelta(1+O_p(c_n))$. Then combine \ref{proof_thp>n_dSd}
\begin{align}
\bDelta^T\inSigma \hDelta \le \left(\bDelta^T\inSigma \bDelta \right)^{\half} \left( \hDelta^T\inSigma \hDelta \right)^{\half}= \bDelta^T\inSigma \bDelta(1+O_P(\sqrt{\frac{s}{n\bDelta^T\inSigma\bDelta}}))
\end{align}

Then (2) can be represented by: $(2)=\frac{n_2}{n}\bDelta^T\inSigma \hDelta=\frac{n_2}{n}\bDelta^T\inSigma\bDelta(1+O_P(D_{n,p}))$.

Thus
\begin{align}
(2)+(3)=\half\bDelta^T\inSigma\bDelta(1+O_P(D_{n,p}))
\end{align}

Now we consider (1). Let $\hDelta=(\hDelta_1^T,\hDelta_2^T)^T$ where $\hDelta_1$ is $s$ dimension and $\hDelta_2$ is $p-s$ dimension. From Theorem \ref{Th_consist_PMLE}, with probability tending to 1, $\hDelta_2=0$ and $\norm{\hDelta-\bDelta}{2}=O_P(\frac{s}{n})$.
Let $\xi=\bmu_1-\bar{\bY}-\frac{n_2}{n}\bDelta=(\xi_1^T,\xi_0^T)^T$, where $\xi_1$ is $s$ dimension and $\xi_0$ is $p-s$ dimension. Then $\xi\sim N(0,\frac{1}{n}\Sigma)$

Write
\begin{align*}
\Sigma = \left(\begin{array}{ll}
\Sigma_1&\Sigma_{12}\\
\Sigma_{12}^T&\Sigma_{2}
\end{array}
\right),\  \
\inSigma = \left(\begin{array}{ll}
C_1&C_{12}\\
C_{12}^T&C_{2}
\end{array}
\right)
\end{align*}

\begin{align*}
\tSigma = \left(\begin{array}{ll}
\tSigma_1&\tSigma_{12}\\
\tSigma_{12}^T&\tSigma_{2}
\end{array}
\right), \  \
\inTSigma = \left(\begin{array}{ll}
\tilde{C}_1&\tilde{C}_{12}\\
\tilde{C}_{12}^T&\tilde{C}_{2}
\end{array}
\right)
\end{align*}
where $\Sigma_1$, $\tSigma_1$, $C_1$ and $\tilde{C}_1$ are $s\times s$ matrix. Then

\begin{align*}
C_{12}=-\inSigma_1\Sigma_{12}C_2\ \ \text{and} \ \ \tilde{C}_{12}=-\inTSigma_1\tSigma_{12}\tilde{C}_2.
\end{align*}

Write
\begin{align*}
(1)=&\xi^T\inTSigma \hDelta=\xi_1^T\tilde{C}_1\hDelta_1+\xi_0^T\tilde{C}_{12}\hDelta_1\\\nonumber
=&\xi_1^T\tilde{C}_1\hDelta_1-\xi_0\tilde{C}_2\tSigma_2\inTSigma_1\hDelta_1\\\nonumber
=&(i)+(ii)
\end{align*}

First we have: $\xi_1^T\tilde{C}_1\hDelta_1\le (\xi_1^T\tilde{C}_1\xi_1)^{1/2} (\hDelta_1^T\tilde{C}_1\hDelta_1)^{1/2}$. By lemma\ref{lem_taperEstMatrix},
\begin{align*}
\xi_1^T\tilde{C}_1\xi_1=\xi_1^T {C}_1 \xi_1(1+O_p(c_n))
\end{align*}
Since $\xi_1\sim N(0,\frac{1}{n}\Sigma_1)$, $E(\xi_1^T\inSigma_1\xi_1)=tr(\frac{1}{n}I_s)=\frac{s}{n}$. Then $ \xi_1^T\inSigma_1\xi_1=O_P(\frac{s}{n})$ and therefore $\xi_1^T C_1 \xi_1^T\le \xi_1^T \inSigma \xi_1^T \frac{\lambda_{\max}(C_1)}{\lambda_{\min}(\Sigma_1^{-1})}=O_p(\sqrt{\frac{s}{n}}). $
Hence $\xi_1^T \tilde{C}_1 \xi_1=O_P(\frac{s}{n})$.

Also,
\begin{align*}
\hDelta_1^T\tilde{C}_1\hDelta_1=\hDelta^T \inTSigma \hDelta=\bDelta^T\inSigma\bDelta (1+O_p(D_{n,p}))
\end{align*}
Hence $(i)$ in $(1)$ is: $(i)=\xi_1^T\tilde{C}_1 \hDelta_1=(\bDelta^T \inSigma \bDelta)^{\half} O_p(\sqrt{\frac{s}{n}})$.

The second term in $(1)$ is:
\begin{align*}
(ii)=\xi_0\tilde{C}_2\tSigma_{12}\inTSigma_1 \hDelta_1 \le (\hDelta_1 \inTSigma_{1} \hDelta_1)^{1/2} (\xi_0\tilde{C}_2^T\tSigma_{12}^T \inTSigma_{1}   \tSigma_{12}\tilde{C}_2)^{1/2}
\end{align*}
By lemma \ref{lem_taperEstMatrix}
\begin{align*}
\xi_0^T\tilde{C}_2^T\tSigma_{12}^T \inTSigma_{1}   \tSigma_{12}\tilde{C}_2\xi_0= \xi_0{C}_2^T\Sigma_{12}^T \inSigma_{1}   \Sigma_{12}{C}_2\xi_0(1+O_p(c_n))
\end{align*}

Since $\xi_0\sim N(0,\frac{1}{n}\Sigma_2)$,
\begin{align*}
E[\xi_0^T{C}_2^T\Sigma_{12}^T \inSigma_{1}   \Sigma_{12}{C}_2\xi_0] \le & \lambda_{\max}(\inSigma_1)E[\xi_0^T{C}_2^T\Sigma_{12}^T \Sigma_{12}{C}_2\xi_0]\\\nonumber
=&\lambda_{\max}(\inSigma_1)tr(E[\xi_0^T{C}_2^T\Sigma_{12}^T \Sigma_{12}{C}_2\xi_0])\\\nonumber
=&\lambda_{\max}(\inSigma_1)\frac{1}{n}tr({C}_2^T\Sigma_{12}^T \Sigma_{12}{C}_2\Sigma_2)\\\nonumber
\le& \lambda_{\max}(\inSigma_1)\lambda_{\max}(\Sigma_2)\lambda_{\max}^2(C_2) \frac{1}{n}tr(\Sigma_{12}\Sigma_{12}^T)
\end{align*}

\begin{align}
tr(\Sigma_{12}\Sigma_{12}^T)=\sum_{i=1}^{s}\sum_{k=s+1}^{p}\sigma_{ik}^2\le s \max_{1\le i\le s}\sum_{k=s+1}^{p}\sigma_{ik}^2
\end{align}

thus $\xi_0^T{C}_2^T\Sigma_{12}^T \inSigma_{1}   \Sigma_{12}{C}_2\xi_0=O_p(\frac{s}{n}d_{n,p})$, where $d_{n,p}=\max_{1\le i\le s}\sum_{k=s+1}^{p}\sigma_{ik}^2$. By lemma \ref{lem_sparseCov}, $d_{n,p}$ is bounded above by a constant. As a result:
\begin{align*}
\xi_0^T\tilde{C}_2^T\tSigma_{12}^T \inTSigma_{1}   \tSigma_{12}\tilde{C}_2\xi_0= xi_0^T{C}_2^T\Sigma_{12}^T \inSigma_{1}   \Sigma_{12}{C}_2\xi_0=O_p(\frac{s}{n}d_{n,p}) (1+O_p(\max(c_n,\sqrt{\frac{s}{n}})))
\end{align*}

Since
\begin{align}
\hDelta_1^T \inTSigma_1 \hDelta_1 \le \hDelta_1^T \tilde{C}_1 \hDelta_1\le \hDelta^T \inTSigma \hDelta=\bDelta^T \inSigma \bDelta(1+O_P(D_{n,p}))
\end{align}
Let $A_{n,p}=\max(\sqrt{\frac{s}{n\bDelta^T\inSigma\bDelta}},\frac{s}{n},c_n)$, we have
\begin{align}
(ii)=(\bDelta^T\inSigma\bDelta)^{\half}O_P(A_{n,p}))
\end{align}

Combining the approximation of $(i)$ and $(ii)$ in (1), we have
$(1)=\xi^T\inTSigma \hDelta=(\bDelta^T\inSigma\bDelta)^{\half}O_P(A_{n,p})$.

As a result, since $\frac{\sqrt{s/n}}{C_p}\to 0$ ,
\begin{align}
W_1(\hat{\delta}_{PLDA})=&1-\Phi(\frac{\half\bDelta^T\inSigma\bDelta(1+O_p(D_{n,p}))+\sqrt{\bDelta^T\inSigma\bDelta}O_p(A_{n,p})}{\sqrt{\bDelta^T\inSigma\bDelta(1+O_p(D_{n,p}))}})\\\nonumber
&=1-\Phi \left( \frac{\half C_p(1+O_p(D_{n,p})+\sqrt{C_p}O_p({A_{n,p}})}{\sqrt{C_p}(1+O_p(D_{n,p}))}\right)\\\nonumber
\end{align}

Similarly, we can derive:
\begin{align}
W_2(\hat{\delta}_{PLDA})=&\Phi(\frac{-\half\bDelta^T\inSigma\bDelta(1+O_p(D_{n,p}))+\sqrt{\bDelta^T\inSigma\bDelta}O_p(A_{n,p})}{\sqrt{\bDelta^T\inSigma\bDelta(1+O_p(D_{n,p}))}})\\\nonumber
&=\Phi \left( \frac{-\half C_p(1+O_p(D_{n,p})+\sqrt{C_p}O_p({A_{n,p}})}{\sqrt{C_p}(1+O_p(D_{n,p}))}\right)\\\nonumber
\end{align}
Since $D_{n,p}\to 0$ and $A_{n,p}\to 0$, both $W_1(\hat{\delta}_{PLDA})$ and $W_2(\hat{\delta}_{PLDA})$ go to $1-\Phi(\frac{\sqrt{C_0}}{2})$ as $n,p\to \infty$ with probability tending to 1 as $n, p, s \to \infty$. Then the approximate overall misclassification error rate is $W(\hat{\delta}_{PLDA})=\half(W_1(\hat{\delta}_{PLDA})+W_2(\hat{\delta}_{PLDA}))\to1-\Phi(\frac{\sqrt{C_0}}{2})$. This completes the proof of sub-optimal.

Now we show the asymptotically optimal of $W(\hat{\delta}_{PLDA})$.
If $C_p\to C_0 <\infty$, $\frac{W(\hat{\delta}_{PLDA})}{W_{OPT}}=\frac{W(\hat{\delta}_{PLDA})}{\Phi(-\frac{\sqrt{C_p}}{2})}\to 1$.

If $C_p \to \infty$,
\begin{align*}
\frac{x\sqrt{C_p}}{4+x^2}e^{-\frac{x^2-C_p}{8}}\le \frac{W(\hat{\delta}_{PLDA})}{\Phi(-\frac{\sqrt{C_p}}{2})}\le \frac{4+C_p}{x\sqrt{C_p}}e^{-\frac{x^2-C_p}{8}}
\end{align*}
where $x=\frac{C_p(1+O_p(D_{n,p}))\pm2\sqrt{C_p}O_p(A_{n,p})}{\sqrt{C_p(1+O_p(D_{n,p}))}}=\sqrt{C_p}(1+O_p(D_{n,p}\pm O(\frac{A_{n,p}}{\sqrt{C_p}})))$.

First  $\frac{x\sqrt{C_p}}{4+x^2}e^{-\frac{x^2-C_p}{8}}\to 1$ and $\frac{4+C_p}{x\sqrt{C_p}}\to 1$ as $C_p\to \infty$. Also,
\begin{align*}
x^2-C_p=C_p(O(D_{n,p}+O(\frac{A_{n,p}}{\sqrt{C_p}})))
\end{align*}
if $C_pc_n\to 0$ and $C_p\sqrt{\frac{s}{n}}\to 0$, we have $x^2-C_p\to 0$. Hence $\frac{W(\hat{\delta}_{PLDA})}{W_{OPT}}\to 1$. This completes the proof.
\end{proof}

\end{document}